\PassOptionsToPackage{bookmarks=true,colorlinks,citecolor=blue,linkcolor=blue,urlcolor=blue}{hyperref}
\documentclass[sigplan,10pt]{acmart}

\setcopyright{none}

\usepackage{amsmath,amsfonts}
\usepackage{algorithmicx}
\usepackage{algpseudocode}
\usepackage{algorithm}
\usepackage{textcomp}
\usepackage{comment}
\usepackage{wasysym}
\usepackage{enumitem}
\usepackage{bm}
\usepackage{cleveref}
\usepackage{makecell}
\usepackage{array}
\usepackage{wrapfig}
\usepackage{threeparttable}
\usepackage{amsthm}
\usepackage{subcaption}
\usepackage{pifont}

\crefname{section}{§}{§§}
\Crefname{section}{§}{§§}
\usepackage{graphicx}      
\usepackage{fontawesome5}  

\graphicspath{{images/}}

\definecolor{MyViolet}{RGB}{148,0,211}

\makeatletter
\patchcmd{\@mktitle@iii}{\par\bigskip}{\par\medskip}{}
{\PackageWarning{Poseidon}{Could not adjust title spacing}}
\patchcmd{\@mkauthors@iii}{\and\par\bigskip}{\and\par\medskip}{}
{\PackageWarning{Poseidon}{Could not adjust author spacing}}
\AtBeginMaketitle{%
	\if@ACM@anonymous
	\let\@mkauthors\@empty
	\fi
}
\author{%
	\vspace{0.75em}
	\normalsize
	\begin{tabular}{@{}ccc@{}}
		\begin{tabular}{@{}c@{}}
			Xiaosong Chen$^{*}$ \\
			University of Macau \\
			\texttt{yc27909@connect.um.edu.mo}
		\end{tabular}
		&
		\begin{tabular}{@{}c@{}}
			Shaoheng Nie$^{*}$ \\
			Fudan University \\
			\texttt{shnie23@m.fudan.edu.cn}
		\end{tabular}
		&
		\begin{tabular}{@{}c@{}}
			Zhongmin Zhao \\
			University of Macau \\
			\texttt{yc47494@connect.um.edu.mo}
		\end{tabular}
		\\[1.5em]
		\begin{tabular}{@{}c@{}}
			Zizhao Mo \\
			University of Macau \\
			\texttt{yc17461@connect.um.edu.mo}
		\end{tabular}
		&
		\begin{tabular}{@{}c@{}}
			Jiapeng Chen \\
			Fudan University \\
			\texttt{22300240004@m.fudan.edu.cn}
		\end{tabular}
		&
		\begin{tabular}{@{}c@{}}
			Huanle Xu$^{\dagger}$ \\
			University of Macau \\
			\texttt{huanlexu@um.edu.mo}
		\end{tabular}
		\\[1.5em]
		\begin{tabular}{@{}c@{}}
			Zeren Li$^{\dagger}$ \\
			Independent researcher \\
			\texttt{lzr010506@gmail.com}
		\end{tabular}
		&
		\begin{tabular}{@{}c@{}}
			Weiwei Sun$^{\dagger}$ \\
			Fudan University \\
			\texttt{wwsun@fudan.edu.cn}
		\end{tabular}
		&
		\begin{tabular}{@{}c@{}}
			ChengZhong Xu \\
			University of Macau \\
			\texttt{czxu@um.edu.mo}
		\end{tabular}
		\\[1.5em]
		\multicolumn{3}{c}{
			\footnotesize
			$^{*}$Equal contribution. \quad
			$^{\dagger}$Corresponding authors.
		}
	\end{tabular}
	\vspace{0.75em}
}
\renewcommand{\authors}{Xiaosong Chen \and Shaoheng Nie \and Zhongmin Zhao \and Zizhao Mo \and Jiapeng Chen \and Huanle Xu \and Zeren Li \and Weiwei Sun \and ChengZhong Xu}
\renewcommand{\shortauthors}{Xiaosong Chen et al.}
\AtBeginDocument{%
	\fancyhead{}%
	\fancyhead[L]{\ACM@linecountL}%
	\fancyhead[R]{\ACM@linecountR}%
}
\makeatother

\newtheoremstyle{mytheoremstyle}
{1pt}{1pt}{\itshape}{}{\bfseries}{.}{ }
{\thmname{#1}\thmnumber{ #2}\thmnote{ (#3)}}
\theoremstyle{mytheoremstyle}
\newtheorem{theorem}{Theorem}
\newtheorem{assumption}{Assumption}
\newtheorem{definition}{Definition}
\newtheorem{lemma}{Lemma}

\def\BibTeX{{\rm B\kern-.05em{\sc i\kern-.025em b}\kern-.08em
		T\kern-.1667em\lower.7ex\hbox{E}\kern-.125emX}}

\renewcommand\footnotetextcopyrightpermission[1]{}

\begin{document}
	
	
	\pagenumbering{arabic}

	\title{Poseidon: DAG-Guided Parallelism Search for LLM Pre-Training on Heterogeneous Clusters}

	\begin{abstract}
		
		
		With the rapid advancement of accelerator technologies, pre-training large language models (LLMs) on heterogeneous accelerator clusters has become increasingly crucial for maximizing hardware utilization. Existing systems, however, suffer from inaccurate training time modeling, which undermines the parallelization optimizations built upon it. Moreover, for current approaches, the vast configuration search space makes exhaustive exploration infeasible, forcing a trade-off between search time and training efficiency.
		
		To overcome these limitations, we introduce Poseidon, an efficient and scalable LLM training framework designed with heterogeneity awareness. Its core is an explicit training time model based on a directed acyclic graph. Building on this graph, Poseidon employs two efficient, theoretically grounded strategies: stage-level pruning via early stopping with partial estimation, and layer-to-stage mapping exploiting a ridge-like distribution pattern. These strategies reduce the search space without sacrificing optimal training efficiency. Experiments on heterogeneous clusters show that Poseidon improves training throughput by up to $2.76\times$ over state-of-the-art systems. 
	\end{abstract}
	
	
	\maketitle

	\section{Introduction}
	LLMs~\cite{deepseek, touvron2023llama, gpt} have demonstrated unprecedented capabilities, advancing applications such as code completion~\cite{touvron2023llama2, deepseekcoder}, story writing~\cite{storywriting}, and conversational agents~\cite{lmsyschat}. However, the scaling law drives the development of increasingly larger foundational models to fully exploit their potential, requiring substantial computational power. For instance, training LLaMA-3 requires $3.8\times10^{25}$ FLOPs, as reported by Meta~\cite{grattafiori2024llama}. This has led to widespread deployment of accelerators such as GPUs~\cite{mittalV2019gpu} and NPUs~\cite{chen2020npu} to support these resource-intensive workloads. Meanwhile, accelerators evolve through multiple generations, resulting in notable inter-generation heterogeneity. Computational power and memory can vary significantly within the same accelerator class, leading to diverse training performance. For example, limited memory constrains batch sizes on older accelerators, while differences in computational power create significant gaps in training time for identical workloads.
	
	To optimize cluster resource utilization, pre-training LLMs on heterogeneous accelerator devices has become increasingly important. Several studies have explored this problem~\cite{jia2022whale, yan2024flashflex, um2024metis}, focusing on optimizing parallelization configurations to minimize training time on such resources. These configurations typically combine pipeline, tensor, and data parallelism to improve computational efficiency, reduce memory usage, and lower communication overhead.
	
	However, accelerator heterogeneity introduces significant complexity in identifying optimal parallelization configurations. This is because configuration searching is exposed to a new dimension, where the solution space grows exponentially with the number of accelerator types involved. In witnessing this dilemma, existing works propose heuristic methods to balance searching time and training throughput. Although they reduce the searching time to some extent, we identify three key obstacles that can lead to low training efficiency. First, the cost models utilized by existing works to capture the training time are inaccurate. Since they cannot precisely characterize the overlap among concurrently proceeding operations, we demonstrate up to $1.32\times$ deviation in our experimental testbed. Second, the heuristic pruning strategies, which serve as the key to reducing searching time, risk neglecting the optimal configuration. This is because the rigid strategies proposed cannot flexibly adapt to various cluster environments, where inter-accelerator connections and computational power gaps can be highly diverse. In this sense, we witness up to $1.91\times$ degradation when evaluating the pruning strategies for baseline systems. Third, existing modeling techniques present low generalizability, due to the tight coupling with a specific training mode, i.e., 1F1B \cite{narayanan2019pipedream}, where each pipeline stage alternates between executing one forward pass and one backward pass of micro-batches after pipeline warm-up. As a result, popular training variants, including Eager 1F1B \cite{zhuang2023optimizing} and Interleaved 1F1B \cite{narayanan2021efficient}, cannot be well supported by this design.
	
	In this paper, we propose Poseidon, a novel LLM pre-training system designed to address the aforementioned limitations.  Poseidon efficiently explores optimal parallel configurations across heterogeneous accelerator resources while maintaining low search overhead. To achieve this, it incorporates an explicit directed acyclic graph (DAG)-based training time estimation model that captures stage- and batch-level execution dependencies. This graph-based representation models the data execution order within each stage and the dependencies across micro-batches for arbitrary parallel training configurations, using distinct node and edge types. It enables training time estimation over heterogeneous resources by computing the longest path. Furthermore, the graph accounts for communication-computation overlap in heterogeneous environments, significantly enhancing the accuracy of training time predictions.
	
	Leveraging this estimation framework, Poseidon integrates two optimization strategies to efficiently navigate the vast configuration search space. First, it employs stage-level pruning with early stopping, guided by a custom lower-bound function. This mechanism terminates the evaluation of suboptimal configurations once partial stage analysis reveals inefficiencies, avoiding exhaustive search. Second, Poseidon exploits a ridge-like distribution pattern in layer-to-stage mapping to prune suboptimal regions of the search space. This approach, backed by theoretical guarantees, substantially accelerates search while guaranteeing optimal solutions. Together, these strategies enable Poseidon to achieve unprecedented efficiency in identifying optimal parallel configurations for LLM training over heterogeneous clusters. In summary, we have made the following contributions: 
	
	\scalebox{0.65}{\faPlane} We introduce a novel DAG-based cost model for LLM training, capturing various  factors and therefore enabling precise training time estimation.  This model also offers high generalizability to multiple training variants.
	
	\scalebox{0.65}{\faPlane} We propose two pruning strategies for configuration exploration that are theoretically guaranteed to identify the optimal parallel configuration.
	
	\scalebox{0.65}{\faPlane} We evaluate Poseidon via extensive experiments on heterogeneous clusters. Results show that it consistently outperforms baselines, achieving up to {$2.76\times$} higher throughput.
	
	\begin{figure}[t]
		\centering
		\includegraphics[width=0.49\textwidth]{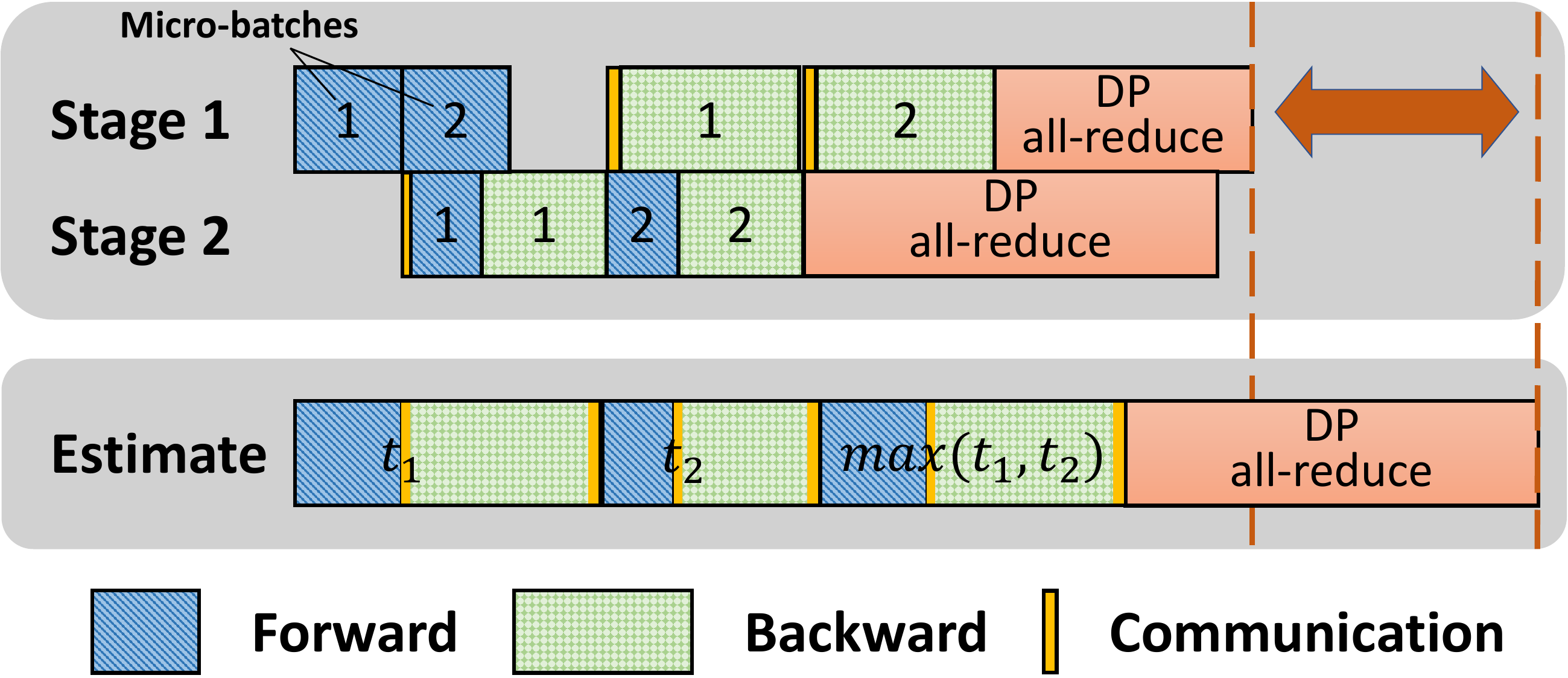}
		\vspace{-1.75em}
		\caption{An illustrative example to demonstrate why the cost model adopted by existing works cannot accurately characterize the training time. Here, $t_1$ and $t_2$ denote the total computation and communication costs for each micro-batch in stages 1 and 2, respectively, as defined in Eq.~\eqref{eq:static_cost_model}.}
		\vspace{-1.5em}
		\label{fig:deviation_cost_model}
	\end{figure}
	
	\vspace{-0.5em}
	\section{Background and Motivation}
	
	\subsection{LLM Training under Accelerator Heterogeneity}
	Training an LLM involves processing a batch of data through a sequence of layers during the forward pass (FP) and backward pass (BP) to compute gradients. As models scale, distributed training partitions parameters and data across devices to alleviate memory bottlenecks and accelerate computation~\cite{gusak2022survey}. Common strategies include pipeline parallelism (PP)~\cite{narayanan2021memory, huang2019gpipe, narayanan2019pipedream}, tensor parallelism (TP)~\cite{narayanan2021efficient, shoeybi2019megatron}, and data parallelism (DP)~\cite{li2020pytorch, dean2012large}, dividing computation at layer, operator, and data levels, respectively.
	
	Meanwhile, modern clusters increasingly contain accelerator generations with different compute power, memory capacities, and interconnect bandwidths~\cite{ye2024deep}. Such heterogeneity complicates LLM parallelization because synchronous operations, such as parameter updates, are bottlenecked by the slowest device~\cite{vaswani2017attention, brown2020language}. Recent work therefore seeks to balance computation across devices while reducing communication overhead~\cite{um2024metis, yan2024flashflex}.


	\subsection{Limitations of Existing Pre-training Systems}
	\label{sec:limitation_of-existing_works}
	
	Several heterogeneity-aware LLM pre-training systems have been proposed to fully exploit cluster resources~\cite{um2024metis, yan2024flashflex, sun2024adapipe, li2022amp, zheng2022alpa}. To improve LLM training efficiency, they coordinate model layer allocation across accelerator devices, balance computational workload within each stage, and minimize communication via network-efficient parallelization.
	
	However, these systems still exhibit several limitations that introduce substantial impact on training efficiency:

	\textbf{$\bm{L_1}$: The adopted models cannot characterize training time accurately.} Current heterogeneity-aware LLM training systems~\cite{um2024metis, yan2024flashflex, sun2024adapipe, li2022amp, zheng2022alpa} are commonly built on top of an approximated training time estimation model:
	\vspace{-.5em}
	\begin{equation}
		\label{eq:static_cost_model}
		\sum_{s=1}^{S}t_s + (B-1) \cdot \max_{1\leq j \leq S} t_j + DP_{\text{all}},
		\vspace{-.5em}
	\end{equation}
	where $S$ is the stage count, $B$ is the number of micro-batches, and $DP_{\text{all}}$ represents the DP all-reduce overhead. Each $t_s$ includes the forward, backward, and inter-stage communication costs per micro-batch at stage $s$. However, this formulation results in significant deviation from the ground-truth training time, which adversely impacts training efficiency. To illustrate this issue, we construct an example in Fig.~\ref{fig:deviation_cost_model}, highlighting the discrepancy between the estimated training time in Eq.~\eqref{eq:static_cost_model} and the actual training time. Specifically, in this example, involving two micro-batches pipelined across two stages, the estimation model fails to account for the overlap between communication and computation, substantially overestimating training time. As shown in Fig.~\ref{fig:accurayc_loss_metis}, such models consistently produce inaccurate training time estimates across various cluster configurations, with deviations of up to $1.32\times$. Consequently, these estimation errors can distort the ranking of candidate configurations, misleading the search process and ultimately reducing training efficiency.
	

	\begin{figure}[t]
		\centering
		\includegraphics[width=0.4\textwidth]{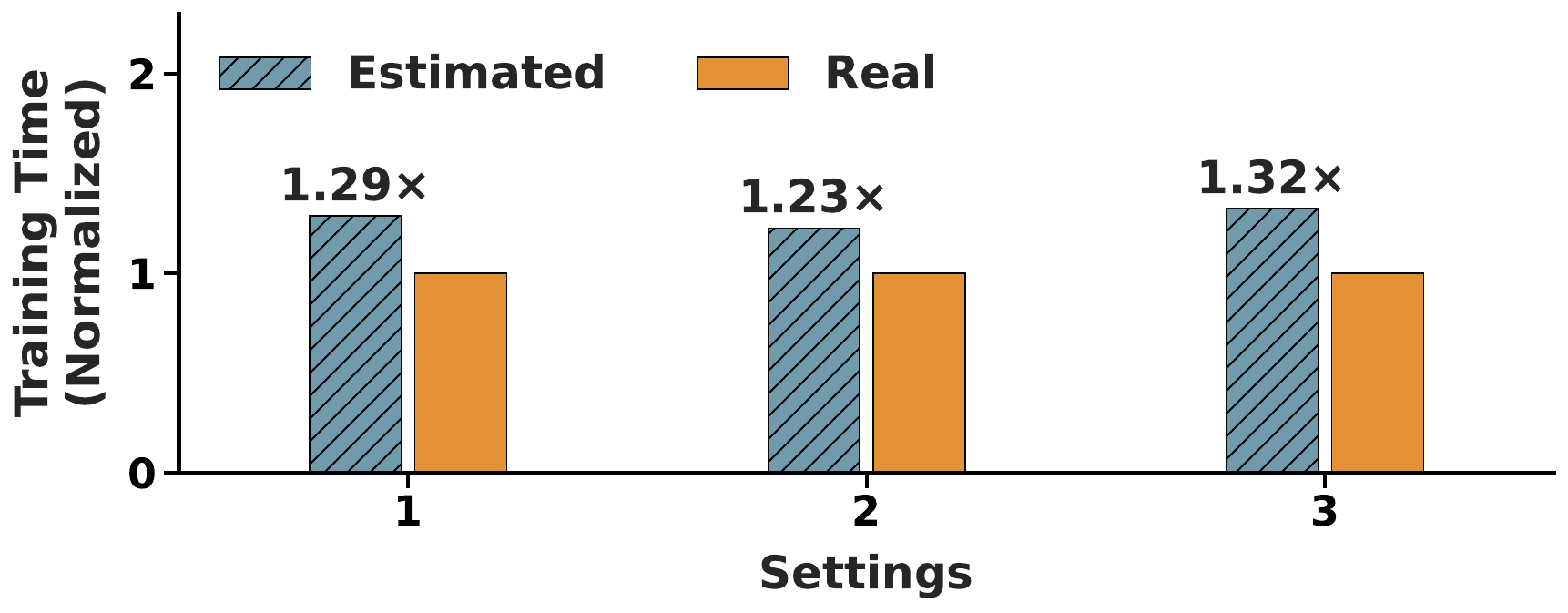}
		\vspace{-1em}
		\caption{The cost model adopted by existing training systems consistently exhibits a significant training time gap.}
		\label{fig:accurayc_loss_metis}
		\vspace{-1.5em}
	\end{figure}
	
	\textbf{$\bm{L_2}$: The adopted pruning strategies yield suboptimal training efficiency.} Exploring vast configuration spaces is computationally expensive, so existing approaches often balance training and search efficiency. However, such compromises risk overlooking the optimal configuration, ultimately degrading training performance. For instance, the pruning strategy in Metis \cite{um2024metis} imposes strong assumptions on the low variance of per-stage device counts,~and on the precedence of data parallelism over tensor parallelism during search. However, the optimal configuration varies with the training hyperparameters, such as batch size and the inter-accelerator connection in clusters,~making these rigid strategies potentially inefficient. Fig.~\ref{fig:heuristic_pruning} shows the notable time gap for training one iteration over LLaMA-3 models \cite{grattafiori2024llama, touvron2023llama} in the cluster consisting of 16 Atlas A2-2 and 16 Atlas A2-3 NPUs~{(detailed NPU parameters are provided in \Cref{sec:experiment_setup})}. The results compare the training time using configurations derived from existing pruning strategies with that of the ideal configuration found via exhaustive search. Specifically, Metis and HexiScale~\cite{yan2024flashflex} exhibited up to $1.91\times$ and $1.47\times$ degradation, respectively, from improper pruning.

	\textbf{$\bm{L_3}$: Existing modeling techniques lack generalizability.} Since existing works employ coarse-grained analytical approximations, their models cannot accommodate diverse pipeline schedules. For example, Eq.~\eqref{eq:static_cost_model} fails to accurately model training time under the Interleaved 1F1B paradigm~\cite{narayanan2021efficient}, which minimizes pipeline bubbles through enhanced computation overlap and thus reduces training time. This same limitation applies to other 1F1B variants like Eager 1F1B~\cite{zhuang2023optimizing} and Seq1F1B~\cite{sun2024seq1f1b}, as their similar bubble-reduction mechanisms likewise elude coarse-grained modeling.
	
	\begin{figure}[t] 
		\centering 
		\begin{subfigure}{0.235\textwidth} 
			\centering
			\includegraphics[width=\textwidth]{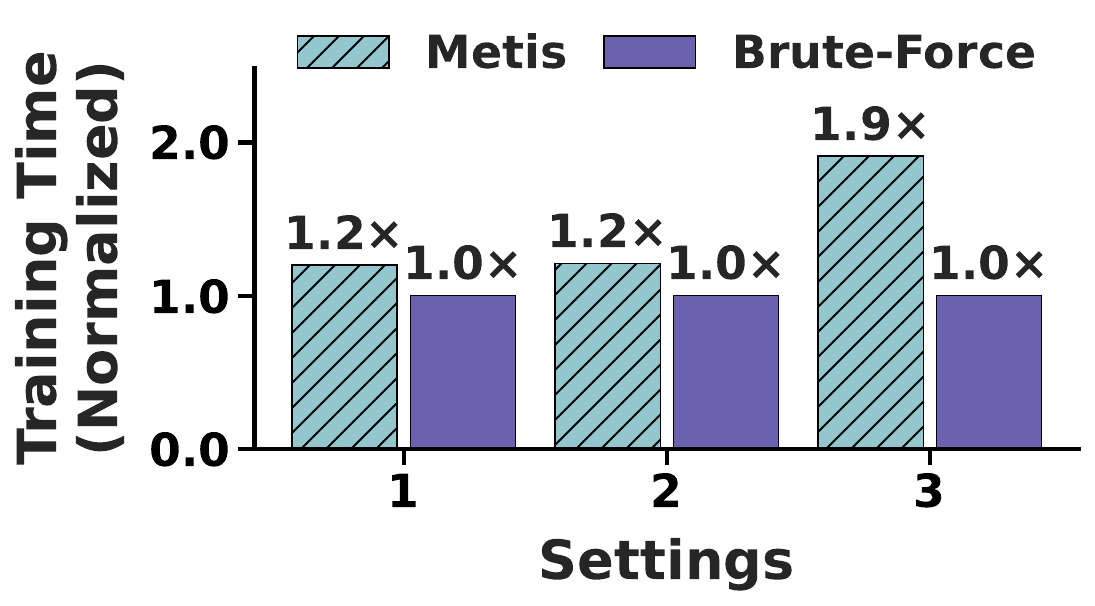}
			\vspace{-1.75em}
			\caption{Degradation in Metis.}
			\label{fig:heuristic_pruning_metis}
		\end{subfigure}
		\hfill
		\begin{subfigure}{0.235\textwidth} 
			\centering
			\includegraphics[width=\textwidth]{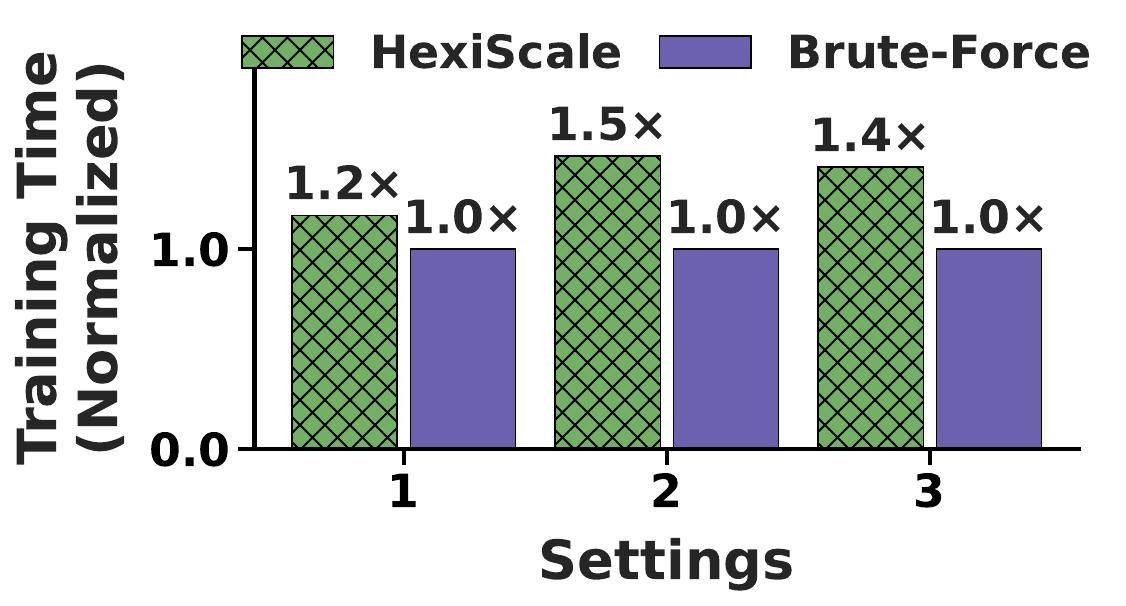}
			\vspace{-1.75em}
			\caption{Degradation in HexiScale.}
			\label{fig:heuristic_pruning_hexiscale}
		\end{subfigure}
		\vspace{-2em}
		\caption{The gap between training time under configurations found by existing systems’ search strategies and under their optimal configurations obtained via brute-force.}
		\label{fig:heuristic_pruning} 
		\vspace{-1em}
	\end{figure}

	\subsection{Graph-based Modeling: Opportunities and Challenges}
	\label{oppor-challenge}
	Based on the preceding analysis of training time evaluation, effective modeling should provide accurate time estimates and capture the characteristics of different training paradigms.
	Fortunately, we found that modeling LLM training as a graph offers a valuable property that significantly enhances modeling performance \cite{moritz2018ray,jia2019beyond,wang2022overlap,won2023astra,tang2024fusionllm,hsia2024mad,jhoo2025pfeife,wan2025coflow,liang2025hapt}. Specifically, across any training paradigm, total LLM training time is determined by the maximum execution time across all micro-batches. Therefore, accurate training time estimation can be achieved by analyzing the computation dependencies across micro-batches and the computation time within each stage. Graph-based modeling techniques are particularly effective in capturing various types of dependencies inherent in LLM training. For instance, computation on the $(k+1)$-th stage can only begin after the activations from the $k$-th stage are delivered, and operations on a given device must wait for preceding computations to complete. By capturing these dependencies, the model can be constructed in a more fine-grained and precise manner, leading to improved accuracy in performance prediction and optimization.
	
	While graph-based modeling is highly effective in capturing complex dependencies, finding an optimal training configuration remains computationally expensive. The primary challenge lies in the vast space of feasible configurations, which renders exhaustive search impractical. To address this, a highly efficient search method is essential—one that can rapidly identify high-quality configurations without traversing numerous candidates, posing a significant challenge in heterogeneous environments.

	\section{Poseidon Overview}
	
	In light of the opportunities and challenges above, we introduce Poseidon, a scalable LLM training system for heterogeneous accelerator resources. Specifically, Poseidon leverages an efficient training time estimation model and search strategy to enhance training efficiency by identifying more effective parallelization configurations while maintaining minimal search overhead. It also generalizes across different pipeline schedules, including 1F1B, Eager 1F1B, and Interleaved 1F1B.

	\subsection{Key Design Ideas}
	In this subsection, we highlight the key ideas underlying Poseidon to explain its performance advantages.
	
	
	$\bm{I_1}:$ \textbf{Explicit DAG construction for training time modeling.} To achieve precise training time modeling, Poseidon utilizes a DAG-based technique to capture the complex dependencies and represent the execution times among operations. Specifically, this approach characterizes the execution at both the stage- and batch-level, allowing us to simultaneously incorporate the data execution order on each stage and the data dependencies within each micro-batch. In this DAG representation, each node corresponds to either an FP or BP operation for a specific micro-batch on a given stage, with the node weight indicating the computation overhead. Edges in different directions represent two types of execution dependencies: inter-stage activation transfers and the execution of different batches within the same stage. The edge weights denote the latency incurred by these dependencies. Consequently, the training time for each iteration can be represented as the longest path through the graph. This approach inherently captures the overlap between communication and computation. 
	
	\begin{figure}[t] 
		\vspace{1em}
		\centering
		\includegraphics[width=0.49\textwidth]{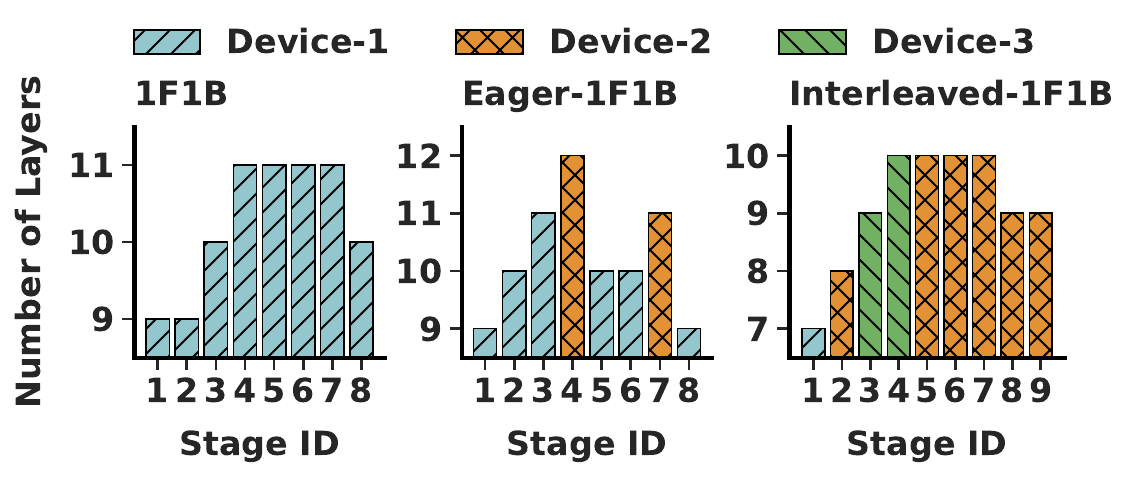}
		\vspace{-2.25em}
		\caption{Optimal parallelization configurations consistently exhibit a ridge-like pattern, i.e., the number of layers allocated to the same device type first increases and then decreases.} 
	\vspace{-1.25em}
	\label{fig:unimodal_example} 
\end{figure}

$\bm{I_2}:$ \textbf{Stage-level pruning via early stopping with partial estimation.} 
To address the challenge of achieving high search efficiency, Poseidon introduces a stage-level pruning strategy to significantly reduce the search overhead. This approach employs an early exit mechanism that halts exploration of inefficient configurations when stages are partially allocated. The mechanism is driven by a lower-bound function we design, which estimates whether the training time of a partially determined configuration is likely to exceed the best result observed so far. If this condition is met, further exploration of that configuration path can be safely skipped, substantially saving search time without compromising solution quality.


$\bm{I_3}:$ \textbf{Layer-level pruning via leveraging ridge-like distribution.} 
To further reduce the search overhead in identifying optimal configurations, Poseidon incorporates an efficient layer-level pruning approach that maps model layers across stages. This leverages a key observation: in the optimal parallelization configuration, the number of layers assigned to all stages of the same device type consistently follows a ridge-like pattern, where the count first increases and then decreases, as shown in Fig.~\ref{fig:unimodal_example}. Exploiting this property, Poseidon can skip the evaluation of certain configurations without compromising solution quality. For instance, once the number of layers allocated to a particular device type starts to decrease, configurations with a higher layer count than the previous stage can be safely pruned, substantially reducing search time. 

\subsection{System Architecture}
The Poseidon system comprises three functional modules: \emph{profiling}, \emph{searching}, and \emph{simulator}. The representation of these modules and their interaction are depicted in Fig.~\ref{Fig.overview}.

Poseidon first incorporates a profiling module in the system to conduct computation of model layers and collect the corresponding execution time information. With the help of this module, we can obtain the computation time, communication bandwidth, and memory footprint—including activation and gradient storage—for a single layer under different TP degrees across heterogeneous accelerator devices. Additionally, the memory profiling provides critical constraints on the maximum number of layers that can be assigned to each stage. This profiling data underpins the DAG-based cost model, enabling accurate evaluation of candidate parallelization configurations.

Given the vast search space introduced by accelerator heterogeneity, Poseidon’s search module explores optimal training parallelization configurations by applying two complementary pruning strategies. It first uses stage-level pruning to early-terminate branches whose projected training time is clearly suboptimal, and then exploits the insight that, within each accelerator type, optimal layer allocations across stages follow a ridge-like pattern to perform layer-level pruning. Together, these techniques swiftly eliminate low-efficiency candidates and dramatically accelerate the search process.

Moreover, the search process relies on feedback from a simulator module, which uses a DAG to model the computation and communication dependencies of each parallelization configuration. By applying a dynamic programming technique over the DAG, the simulator module consistently captures training time information from the path with the longest time overhead for the provided parallelization configurations. 

\begin{figure}[t] 
	\vspace{0.5em}
	\centering 
	\includegraphics[width=0.49\textwidth]{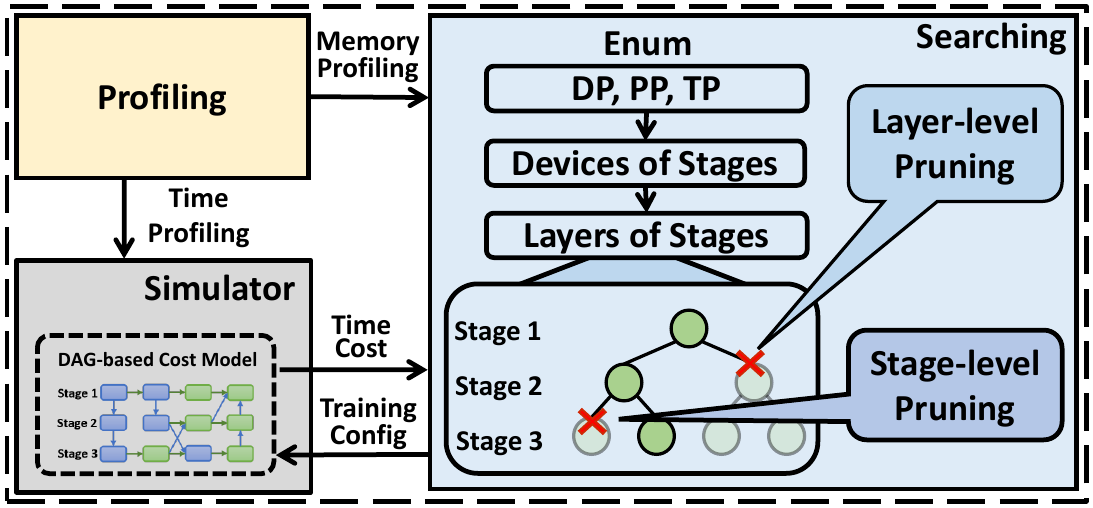}
	\vspace{-1.25em}
	\caption{The system overview of Poseidon.}
	\vspace{-1.5em}
	\label{Fig.overview}
\end{figure}

\begin{figure*}[t] 
	\vspace{-1em}
	\centering 
	\includegraphics[width=0.9\textwidth]{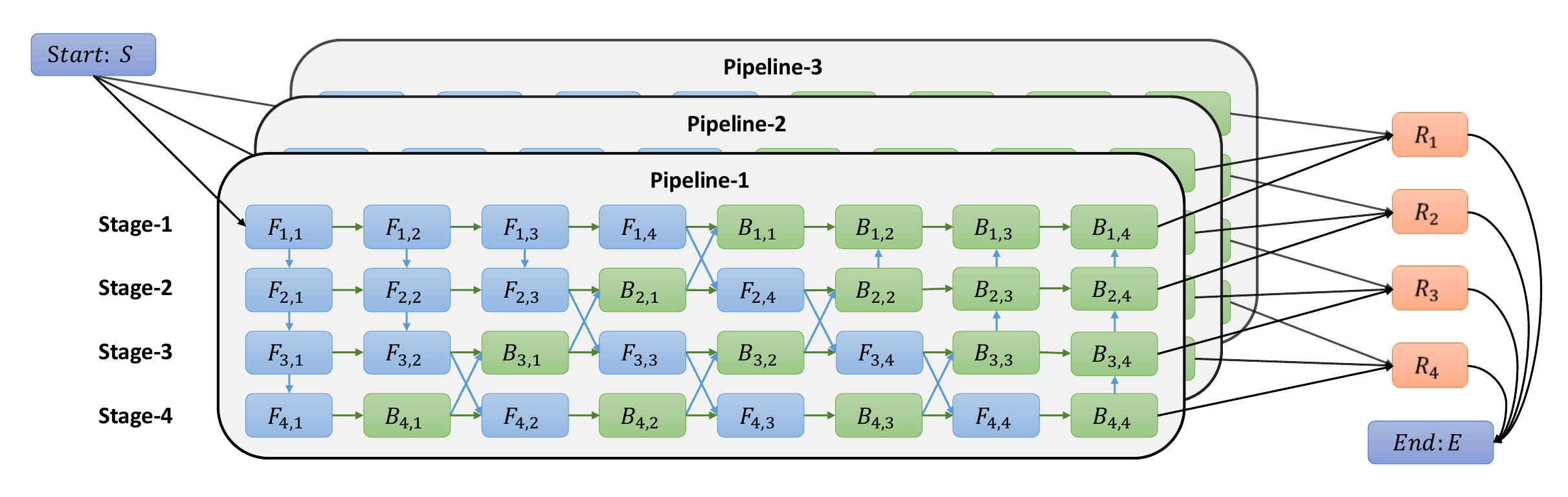}
	\vspace{-1.5em}
	\caption{Graph-based pipeline cost model (3 pipelines, 4 stages per pipeline, 4 micro-batches per pipeline, 1F1B schedule) with computational graph annotations: $F_{s,m}$/$B_{s,m}$ denote forward/backward of micro-batch $m$ at stage $s$, $R_s$ indicates DP all-reduce; blue edges represent inter-stage dependencies and green edges represent intra-stage dependencies.}
	\vspace{-1em}
	\label{Fig.PEG}
\end{figure*}

\section{Design Details of Poseidon}

This section presents Poseidon’s design, starting with a discussion of the DAG-based modeling technique in~\Cref{graph_based_cost_model}. We then introduce the stage-level and layer-level pruning strategies in \Cref{stage_level_pruning} and \Cref{layer_level_pruning}. Finally, we propose an efficient search algorithm in~\Cref{sec:search_policy}, which leverages both strategies to accelerate the search process without compromising training efficiency.


\subsection{DAG-based Cost Model Design}
\label{graph_based_cost_model}

Before presenting our DAG-based training time model, we first contextualize it within prior work. DAG abstractions have been widely used to model DNN and LLM training pipelines for purposes such as compilation optimization \cite{moritz2018ray,wang2022overlap}, execution scheduling \cite{tang2024fusionllm,hsia2024mad,jia2019beyond,jhoo2025pfeife}, communication optimization \cite{wan2025coflow,wang2022overlap,won2023astra}, and formal reasoning \cite{liang2025hapt}. These works convincingly demonstrate the expressive power of DAGs. 

In Poseidon, we leverage this expressive power for a fundamentally different purpose. Our contribution is not the DAG abstraction itself, but its novel application as a quantitative, microbatch-level cost model for guiding optimal parallelism search in LLM training. While prior work uses DAGs to represent the system for the purpose of optimization or reasoning, we use the DAG to simulate and estimate training time. 

We formulate our training time modeling technique as follows: given a parallelization configuration $C$ with $D$ pipelines, $N$ stages per pipeline, and $M$ micro-batches, the model yields a precise training time estimation $T$. The modeling process involves three steps: node construction, edge construction, and weight assignment. Once $C$ is represented by this DAG, we compute $T$—defined as the critical path length—using dynamic programming. A representative example illustrating this DAG-based cost model is provided in Fig.~\ref{Fig.PEG}.

\subsubsection{Node construction} In the DAG-based cost model, nodes represent diverse operations. In particular, we model the training process with the following node types based on event type:

\textbf{Micro-batch computation nodes.} For each micro-batch at every stage, two nodes are generated: a \textit{forward pass node} \( F_{s,m} \) and a \textit{backward pass node} \( B_{s,m} \), where \( s \in [1,N] \) represents the stage index and $m$ is the micro-batch index. This results in \( 2NM \)  computation nodes for each pipeline.

\textbf{DP all-reduce nodes.} Each stage includes a synchronization node \( R_s \) for gradient aggregation, resulting in a total of \( N \) communication nodes.

\textbf{Control nodes.} A virtual \textit{start node} \( S \) and \textit{end node} \( E \) define the global execution boundaries.


\subsubsection{Edge construction} 
We leverage directed edges to capture dependencies among nodes in the DAG. Specifically, we define the dependencies with the following types of edges:

\textbf{Inter-stage edges.} For each micro-batch \( m \), the forward computation flow is represented by  \( F_{s,m} \rightarrow F_{s+1,m} \), while the backward propagation flow is represented by \( B_{s,m} \rightarrow B_{s-1,m} \). This requires \( 2(N-1)M \) edges for each pipeline.

\textbf{Intra-stage edges.} The pipeline schedule, such as 1F1B, dictates micro-batch execution order within a stage. For \( M \) micro-batches, each stage needs \( 2M-1 \) edges to represent this order, totaling \( N(2M-1) \) edges per pipeline.

\textbf{All-reduce edges.} Each \( R_s \) depends on \( D \) pipeline instances completing their final backward pass \( B_{s,M} \). This requires \( ND \) edges, where the edges are represented as \( B_{s,M} \rightarrow R_s \).

\textbf{Auxiliary edges.} The start node \( S \) connects to all first-stage forward nodes \(F_{1,m}\) via \( D \) edges, and all \( R_s \) nodes converge to the end node \( E \) via \( N \) edges, yielding \( D+N \) edges.

\subsubsection{Weight assignment according to profiling data} 
Poseidon assigns weights to nodes and edges in the graph based on detailed profiling data. Most LLMs consist of repeatedly stacked homogeneous layers, such as LLaMA~\cite{touvron2023llama} and Qwen3-32B~\cite{yang2025qwen3technicalreport}, so profiling a single representative layer is sufficient. Some models adopt mixed layer designs; for example, DeepSeek-V3~\cite{deepseekai2025deepseekv3technicalreport} contains both dense Transformer layers and MoE Transformer layers. In such cases, Poseidon profiles one representative layer for each distinct layer type, which introduces only marginal overhead due to the small number of distinct layer types. To support the cost model, Poseidon profiles three types of performance metrics under different accelerator types and TP degrees: (1) the forward and backward execution times of a single layer, (2) the activation transfer time, and (3) the DP all-reduce synchronization time of a single layer. These per-layer profiling results are then aggregated to estimate the stage-level execution and communication costs for weight assignment as follows:

\textbf{Computation nodes \( \bm{F_{s,m}}, \bm{B_{s,m}} \).} The forward and backward pass times are assigned to these nodes. Each value equals the profiled per-layer computation time multiplied by the number of layers in stage $s$. 

\textbf{Inter-stage edges.} The weights of these edges represent activation transfer time, obtained from profiling under the corresponding device type and TP degree.

\textbf{Intra-stage edges.} These edges have zero weight and represent only the batch execution order within a stage.

\textbf{DP all-reduce nodes \( \bm{R_s} \).} Gradient synchronization time is assigned to these nodes, which equals the per-layer all-reduce time multiplied by the number of layers in stage $s$.

\textbf{Control nodes and auxiliary edges.} These are assigned zero weight, indicating no time cost.





\subsubsection{Modeling various pipeline schedules} 
Pipeline schedules such as Eager 1F1B and Interleaved 1F1B have different execution dependency structures. The flexible DAG-based model accommodates these variations by adjusting node counts and dependency edges without changing the overall framework. We use Interleaved 1F1B to illustrate the corresponding modifications to node and edge construction.


\textbf{Node construction.} In the Interleaved 1F1B schedule with a virtual pipeline size of $V$, there are $VN$ virtual stages. Thus, the number of computation nodes increases to $2VNM$.

\textbf{Edge construction.} As the number of nodes increases under the Interleaved 1F1B schedule, the number of inter-stage edges also scales accordingly, reaching $2(NV-1)M$, and the number of intra-stage edges increases to $N(2VM-1)$.





\subsection{Search Policy for Optimal Parallel Training}
\label{search_algorithm}
Due to the expansive search space from heterogeneous accelerator clusters, effective pruning techniques are essential to identify the most efficient training configuration.


\subsubsection{Stage-level pruning}
\label{stage_level_pruning}
Instead of applying heuristics to trade off between training efficiency and search overhead, we first propose a stage-level pruning strategy to maintain an optimally efficient configuration with reduced search time. Specifically, it enables early exit in branches deemed to be inefficient, even if only partial stages of them are determined, to avoid unnecessary configuration exploration. This capability is critical as today's LLMs consist of dozens of layers, which inevitably expands the search space across stages.

To achieve this, Poseidon introduces a lower-bound function $g(S_x)$, used to check if the current search is worth continuing. Here, $S_x$ denotes the temporal parallelization configuration with only the number of model layers in stages $1$ to $x$ determined. The lower-bound function is defined as:
\vspace{-.5em}
\begin{equation}
	\label{eq:stage-level-pruning}
	g(S_x) = \sum _{i=1}^{x-1}f_i + M\cdot \left(f_x+b_x\right) + \max\limits_{1\leq i \leq x} \left\{\sum_{j = i}^{x-1} b_j+d_i \right\}, 
	\vspace{-.5em}
\end{equation}
where $f_i$, $b_i$, and $d_i$ denote the forward, backward, and DP all‐reduce time for stage $i$, and $M$ is the number of micro‐batches. This formulation provides an explicit lower bound of training time given the determined layer allocation under $S_x$, where adjusting the allocation of remaining layers can only yield a longer training time. If the temporal configuration $S_x$ yields a training time longer than $T_{\text{min}}$, i.e., the best training time observed so far, Poseidon stops further exploration based on this configuration. 
Otherwise, Poseidon continues exploring configurations extended from $S_x$ and updates $T_{\text{min}}$ only when a complete configuration with a shorter estimated training time is found.
Due to the exponential growth of the search space with respect to the number of stages traversed, the early stopping strategy effectively mitigates the computational overhead of exhaustive parallel training search. We formalize this observation in the subsequent theorem, with a full proof by contradiction detailed in the Appendix \ref{sec:poseidon_stage_pruning_guarantee}.

\begin{theorem}
	\label{theorem.optimality-based pruning}
	If $g(S_x) > T_{\min}$,  all extensions of $S_x$ can be safely pruned without loss of optimality.
\end{theorem}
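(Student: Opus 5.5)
The plan is to reduce the theorem to a single inequality: for every complete configuration $C$ that extends $S_x$, the DAG critical-path length $T(C)$ from \Cref{graph_based_cost_model} satisfies $T(C)\ge g(S_x)$. Then $T(C)\ge g(S_x)>T_{\min}$, so no extension can beat the incumbent. Pruning therefore never discards a configuration strictly better than the best one already found, and the optimum is either already recorded or lies in an unpruned branch. Written as a contradiction, as in the Appendix, we would assume some extension $C^*$ of $S_x$ is optimal with $T(C^*)<T_{\min}<g(S_x)$ and violate the inequality.

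To prove $T(C)\ge g(S_x)$, we exhibit, for each index $i\in[1,x]$, an explicit source-to-sink path in the DAG of $C$ whose weight is at least
\[
\sum_{j=1}^{x-1} f_j + M(f_x+b_x) + \sum_{j=i}^{x-1} b_j + d_i .
\]
All weights are nonnegative, and the critical path dominates every path, so taking the maximum over $i$ yields $g(S_x)$. The path is built in three pieces.

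\begin{enumerate}
\item Follow the inter-stage forward edges $F_{1,1}\to F_{2,1}\to\cdots\to F_{x,1}$. This collects $\sum_{j<x} f_j$, and possibly more from activation-edge weights, which only helps.
\item At stage $x$, walk the intra-stage order edges. Under 1F1B (and likewise under the other schedules, since every stage executes all its nodes sequentially), all $2M$ computation nodes of stage $x$ lie on one chain, from the first forward to the last backward $B_{x,M}$. This collects $M(f_x+b_x)$.
\item From $B_{x,M}$, descend via inter-stage backward edges $B_{x,M}\to B_{x-1,M}\to\cdots\to B_{i,M}$, collecting $\sum_{j=i}^{x-1} b_j$, then take $B_{i,M}\to R_i\to E$ to collect $d_i$.
\end{enumerate}

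The weights $f_j,b_j,d_j$ for $j\le x$ depend only on the layer counts fixed by $S_x$ together with the profiled per-layer costs. They are therefore identical across all extensions, and the stages beyond $x$ contribute only nonnegative extra weight or extra path segments. The bound is thus independent of how the remaining layers are allocated, which is exactly the claim that adjusting the allocation of later layers can only yield a longer training time.

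The step I expect to be the main obstacle is step 2. It needs the schedule's intra-stage edges to form a single chain through all $2M$ nodes of stage $x$ that begins at a node reachable from $F_{x,1}$ and ends at $B_{x,M}$, with $B_{x,M}$ being the last backward on that stage. This has to be checked from the schedule definition, including for Interleaved 1F1B, where a physical stage hosts $V$ virtual stages. In that case I would reinterpret $f_x,b_x$ as the total per-micro-batch work of the physical device, and route the descent in step 3 through the virtual-stage edges. A subtler point is the ordering of $B_{x,M}$ relative to $B_{i,M}$ when $i<x$: the descent in step 3 uses only inter-stage edges, which exist for every micro-batch, so $B_{i,M}$ is reached regardless of the order of backwards on stage $i$. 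Finally, I would note that $R_i$ waits on all $D$ pipelines, which only increases its start time, so the single-pipeline path remains a valid lower bound.
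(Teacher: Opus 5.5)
Your argument is correct for the 1F1B-type schedules this rule is designed for. Its outer structure matches the paper's proof: the paper also supposes an extension $S^*$ with $T(S^*)<T_{\min}$ and derives $T(S^*)\ge g(S_x)>T_{\min}>T(S^*)$.

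The real difference is where $T(S^*)\ge g(S_x)$ comes from. The paper obtains it ``by the definition of $g(S_x)$''. In other words, it takes the lower-bound property of Eq.~\eqref{eq:stage-level-pruning} as an unproved premise, so its argument amounts to the tautology that pruning by any valid lower bound is safe. You actually verify that this particular $g$ is a lower bound on the DAG critical path. For each $i\in[1,x]$ you exhibit the path $S\to F_{1,1}\to\cdots\to F_{x,1}$, then along the intra-stage chain of stage $x$ to $B_{x,M}$, then $B_{x,M}\to\cdots\to B_{i,M}\to R_i\to E$. Its weight involves only quantities fixed by $S_x$ and dominates the $i$-th term of $g$. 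This gives a genuine proof of the theorem rather than a conditional one. The price is dependence on schedule structure: stage $x$'s $2M$ nodes must form a single chain from $F_{x,1}$ to $B_{x,M}$, which holds for 1F1B and Eager 1F1B.

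One caution concerns your Interleaved 1F1B remark. There, the first node on physical device $x$ is preceded only by first-chunk forwards of devices $1,\dots,x-1$. Likewise, the descent from its last node passes only through first-chunk backwards. So the prefix and descent terms pick up only first-chunk costs of the stages $j\neq x$, and reinterpreting $f_x,b_x$ alone is not enough. With full per-device $f_j,b_j$ the inequality can fail. For homogeneous stages with per-micro-batch times $f,b$ and $x=N$, $g(S_N)$ carries a full $(N-1)(f+b)$ fill-and-drain term, while the interleaved critical path has only about $(N-1)(f+b)/V$. In that case $g$ itself would have to be redefined over virtual stages.
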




\subsubsection{Layer-level pruning} 
\label{layer_level_pruning}
In addition to stage-level pruning, Poseidon further introduces layer-level pruning to improve search efficiency. These two strategies operate at different levels and can be effectively integrated with each other. As shown in Fig.~\ref{fig:unimodal_example}, the optimal parallelization configuration for LLM training consistently exhibits a ridge-like layer allocation across devices of the same type, i.e., the number of layers assigned to stages within a specific accelerator type first increases and then decreases. This ridge-like pattern reveals a key property guiding our pruning strategy. However, one may wonder if this property will always hold, especially as the search space grows exponentially with cluster size and number of accelerator types. Therefore, a theoretical guarantee is essential for search efficiency based on this property. Accordingly, we propose a rigorous theorem that works for any heterogeneous environment at an arbitrary scale as follows:

\begin{theorem}
	\label{theorem.Unimodal}
	Given $N$ pipeline stages, there exists an optimal layer allocation scheme $l_1, l_2, \dots, l_N$, where $l_i$ denotes the number of layers assigned to stage $i$, such that for any set of device groups of the same type assigned to $k$ stages $s_1 < s_2 < \cdots < s_k$, the corresponding subsequence $l_{s_1}, l_{s_2}, \dots, l_{s_k}$ forms a ridge-like distribution.
\end{theorem}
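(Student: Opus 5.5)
We prove the statement by an exchange argument on the DAG cost model of \Cref{graph_based_cost_model}. Among all optimal allocations we pick one that is extremal for a secondary potential, and show that any violation of the ridge property could be removed without increasing training time, contradicting extremality.

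\textbf{Step 1: the swap.} Fix a device type, and consider two stages $s_a<s_b$ carrying that type. Because the device groups are identical, every per-layer profiled quantity coincides at the two stages: forward and backward time, all-reduce time, and memory cap. Exchanging $l_{s_a}$ and $l_{s_b}$ therefore preserves the multiset of node weights and keeps the allocation feasible.

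\textbf{Step 2: how the training time depends on the allocation.} The training time $T$ is the longest $S\to E$ path. Under a 1F1B-type schedule, every such path has a simple form:
\begin{itemize}
\item a forward prefix through stages $1,\dots,i$;
\item a run of $M$ forward/backward pairs on some bottleneck stage $x$;
\item a backward suffix down to some stage $i'$;
\item the all-reduce $d_{i'}$.
\end{itemize}
This matches the structure already exploited by $g(S_x)$ in Eq.~\eqref{eq:stage-level-pruning}. So $T$ is a maximum of terms of the form
\[
\sum_{j<x} f_j + M(f_x+b_x) + \sum_{j=i}^{x-1} b_j + d_i,
\]
plus the inter-stage edge weights, which depend only on stage positions. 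In each such term, stage $x$ enters with weight $M$ and stages $j<x$ enter with weight $1$. The stages after $x$ matter only through symmetric tail terms, in which late stages contribute forward/backward drain costs.

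\textbf{Step 3: the key lemma.} Suppose the subsequence has a ``valley''. That is, there are indices $s_p<s_q<s_r$ of the same type with $l_{s_q}<l_{s_p}$ and $l_{s_q}<l_{s_r}$. Then moving the larger mass toward the middle does not increase any path term. Concretely, we swap $l_{s_q}$ with whichever of $l_{s_p}$ and $l_{s_r}$ is smaller, or transfer layers between them.

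To see this, classify each path by where its bottleneck stage $x$ lies relative to $s_p,s_q,s_r$:
\begin{itemize}
\item \textbf{Bottleneck at one of the three stages.} The $M$-weighted term involves only the maximum of the three loads. That maximum is unchanged by the swap.
\item \textbf{Bottleneck outside all three.} The three stages enter with unit weight, either all in the prefix/suffix sum or all in the drain. The sum over them is invariant under the swap.
\item \textbf{Bottleneck between two of them.} This is the delicate case. Putting the larger load in the middle position can only shrink the portion that is counted. The reason is that prefix terms count stages before $x$ and drain terms count stages after $x$, so the middle stage is counted in fewer mixed configurations than the endpoints.
\end{itemize}
We verify this case by case.

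\textbf{Step 4: the potential argument.} Take an optimal allocation that minimizes the number of valleys, or lexicographically maximizes $\sum_i l_i\,\phi(i)$ for a concave position weight $\phi$. The swap of Step 3 keeps $T$ optimal and strictly improves the potential. Hence the chosen allocation has no valleys. A sequence without valleys is exactly a ridge-like sequence, meaning nondecreasing then nonincreasing. Since each swap touches only stages of one type, we can apply this to every device type simultaneously.

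\textbf{Main obstacle.} The crux is Step 3, specifically the mixed case where the bottleneck stage lies strictly between same-type stages. Proving monotonicity there requires an explicit description of all critical paths, including the all-reduce suffix and the inter-stage communication edges. Those edges may differ at the boundaries where the device type changes.

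I would first try to establish the lemma for a single swap of adjacent same-type stages with no other type in between, and then handle the general case by induction over intervening stages. If exact invariance fails in the mixed case, the fallback is a weaker statement: show that the swap does not increase $\max$ over critical paths, because the dominating path is always either the bottleneck path or the full prefix-plus-suffix path. Both of these are symmetric in the three loads.
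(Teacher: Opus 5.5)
The crux is your Step 3, which you leave as ``verify case by case,'' and the model you would verify it in is wrong. Step 2 takes the family of paths behind the lower bound $g(S_x)$ (forward prefix to $x$, $M$ forward/backward pairs on $x$, backward suffix, all-reduce) and treats $T$ as the maximum over that family. But that family only shows $T$ is at least each such term.

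In the 1F1B-type DAG a longest path can do its warmup and drain steps on one stage and its steady-phase steps on another. In the paper's canonical form $K(x,y,w)$ the cost is $\sum_{i<x}(F_i+B_i+t_{i,i+1}+t_{i+1,i})+F_x+B_x+2(N-x)F_w+2(N-y)B_w+(M-2(N-y)-1)(F_y+B_y)$. Such mixed paths can strictly beat all of yours. Take the Eager-1F1B DAG with $N=3$, $M=6$, $b_i=2f_i$, zero communication and all-reduce cost, and $F_i+B_i=1,\,0.9,\,0.7$. Then $K(2,2,1)$ costs $3(F_1+B_1)+4(F_2+B_2)=6.6$, while your largest term is $F_1+6(F_2+B_2)+B_1=6.4$.

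Reducing critical paths even to this canonical form takes the plateau/peak/ending-phase analysis of the appendix. It also holds only under quantitative hypotheses that your argument never invokes: $b_i=rf_i$ with $r>1$, $t_{i,j}\le\frac{r-1}{2}F_k$, $M\ge 2N$, and $F_{\min}\ge\alpha F_{\max}$. The boundary communication edges you name as your obstacle are exactly what these hypotheses control.

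Moreover, the DAG is not position-symmetric. There are no tail terms in which later stages contribute drains; the drain runs back through earlier stages. The same load also costs more at a later stage, because it sits behind a longer fill and its steady phase $M-2(N-y)-1$ is longer. So your heuristic for the mixed case has nothing to rest on.

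The idea you are missing is that the ridge comes from two opposite monotonicities, not from the cost alone. First, the paper proves an exclusion property: if $u<v$ and $F_u\le F_v$, some critical path $K(x,y,w)$ has $y\neq u$. From this it deduces that moving one layer from $s_{i+1}$ to $s_i$ whenever $l_{s_i}<l_{s_{i+1}}$ never lengthens the critical path. Hence, ignoring memory, some optimum is non-increasing on each device type. Second, the rising flank comes from memory: within one device type the layer cap $l^m_i$ is non-decreasing in $i$, because early 1F1B stages stash more in-flight activations.

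Your Step 1 asserts that these caps coincide, which is false. So your swaps are not automatically feasible, and the effect that produces the increasing part of the ridge is missing from your argument. Your valley swap can in fact be kept feasible, since the middle stage receives $\min(l_{s_p},l_{s_r})\le l_{s_p}\le l^m_{s_p}\le l^m_{s_q}$; but that uses monotone caps, not equal caps.

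Finally, the Step 4 potential need not improve. Swapping $s_q$ with $s_p$ changes $\sum_i l_i\phi(i)$ by $(l_{s_p}-l_{s_q})(\phi(s_q)-\phi(s_p))$. Concavity only gives $\phi(s_q)\ge\min\{\phi(s_p),\phi(s_r)\}$, which depends on positions, not on which endpoint carries the smaller load.
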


This theoretical result provides a principled foundation for search and enables the pruning of suboptimal configurations. An intuitive explanation of Theorem~\ref{theorem.Unimodal} is as follows: earlier stages need to cache more forward activations~\cite{narayanan2019pipedream}, resulting in higher memory pressure on devices of the same type. To alleviate this, fewer layers are allocated to early stages, leading to gradually more layers toward intermediate stages. Conversely, latter stages wait longer for their first micro-batch; assigning more layers here prolongs training time and creates a bottleneck. Thus, the number of layers should decrease toward the tail of the model. This dual pressure—balancing memory constraints in early stages and latency bottlenecks in later ones—creates a ridge-like layer allocation across stages of the same device type. We defer the full proof to the Appendix. 

Furthermore, this theorem reveals a counterintuitive finding: merely balancing computation time across stages increases training time. Consequently, existing approaches focused on minimizing stage time variance are inherently less efficient. 

\subsubsection{Search policy}
\label{sec:search_policy}

Building upon the properties described above, Poseidon performs an efficient search over parallelization configurations. Since TP requires frequent communication, Poseidon restricts each TP group to a single node to avoid inter-node bandwidth bottlenecks~\cite{um2024metis, sailor, shoeybi2019megatron}, whereas DP and PP are not subject to this restriction. To mitigate straggler effects, where faster pipelines are blocked by peers during gradient aggregation, Poseidon enforces identical configurations across all pipelines.

Before the main search, Poseidon randomly samples 500 feasible configurations and estimates their training times using the DAG-based cost model. The best sampled result initializes $T_{\text{min}}$, providing an effective upper bound for subsequent pruning. The impact of the number of warm-up samples is evaluated in \Cref{sec:impact-of-warm-up-sampling-count}.

Poseidon then enumerates feasible 3D parallelism configurations and device-to-stage assignments while recursively determining the layer allocation of each stage. For every complete configuration, the simulator estimates its training time and updates $T_{\text{min}}$ when a better configuration is found. During this process, Poseidon applies stage- and layer-level pruning cooperatively.
At stage $x$, if the lower bound $g(S_x)$ of a partial configuration exceeds $T_{\text{min}}$, the entire search branch is discarded. Meanwhile, the layer-level strategy explores only allocations satisfying both memory constraints and the theoretically derived ridge-like pattern. The two strategies are complementary: stage-level pruning is particularly effective for configurations with many pipeline stages, whereas layer-level pruning substantially reduces the search space when many candidate layer allocations exist for each stage.

\section{Poseidon Implementation}
The Poseidon framework consists of three core components: a search module, a simulator module, and a profiling module. To meet stringent performance requirements during the search process, the search and simulator modules are implemented in C++ over 3,000 lines of code. The profiling module is implemented in Python and also surpasses 3,000 lines.


\textbf{Simulator design.} The simulator module provides precise training time estimates for a given parallelization configuration, with a user-specified batch size and sequence length. It evaluates the training time using profiling data and the DAG-based cost model. Specifically, if the PP degree remains unchanged, the DAG does not need to be rebuilt; only the edge and node weights are updated.

\textbf{Training on NPUs}. To fully unleash the potential of NPUs, Poseidon performs profiling and executes LLM training on top of MindSpeed-LLM~\cite{ascend2025mindspeed}, a high-performance training framework built on Megatron-LM~\cite{narayanan2021efficient} and optimized for NPUs. Consequently, Poseidon provides full-stack NPU support for LLM training.

\textbf{Automatic training workflow.} Poseidon employs a user-transparent integration mechanism to interface with LLM training frameworks. It injects search and profiling logic into user training requests transparently. The optimal configuration is applied automatically, enabling end-to-end training. 






\section{Evaluation}
We evaluate Poseidon on heterogeneous clusters across five aspects: training throughput, search efficiency, cost model accuracy, pruning effectiveness, and system scalability. In this section, training throughput is defined as the reciprocal of per-iteration training time; therefore, higher throughput indicates shorter training time.

\vspace{-0.5em}
\subsection{Experiment Setup}
\label{sec:experiment_setup}

\textbf{Cluster environments.} Experiments are conducted on both NPU clusters and GPU clusters:\\
(1)~\textit{NPU clusters.} Three NPU clusters are used, equipped with Atlas A2-2, A2-3, and A2-4 devices, respectively. Each cluster consists of homogeneous nodes with eight NPUs per node. The intra-node, intra-cluster inter-node, and cross-cluster bandwidths are 1400~Gbps, 200~Gbps, and 75~Gbps, respectively. The peak performance of the A2-2, A2-3, and A2-4 is 376~TFLOPs, 313~TFLOPs, and 280~TFLOPs, respectively. The A2-2 and A2-3 provide 64~GB of memory, while the A2-4 provides 32~GB.\\
(2)~\textit{GPU clusters.} Three GPU clusters are used, equipped with A100, RTX-4090, and RTX-3090 GPUs, respectively. Each cluster consists of homogeneous nodes with eight GPUs per node. The intra-node, intra-cluster inter-node, and cross-cluster bandwidths are 256~Gbps, 100~Gbps, and 75~Gbps, respectively. The peak performance of the A100, RTX-4090, and RTX-3090 is 312~TFLOPs, 330~TFLOPs, and 71~TFLOPs, with memory capacities of 80~GB, 24~GB, and 24~GB, respectively.


\textbf{Baselines.} We compare Poseidon with state-of-the-art\linebreak heterogeneity-aware pre-training systems, including HexiScale \cite{yan2024flashflex} and Metis \cite{um2024metis}.

\textbf{Experimental settings.} We evaluate Poseidon under a diverse set of models and heterogeneous configurations:
\begin{itemize}[left=0pt, itemsep=1pt, labelsep=0.0em, label=--]
	\vspace{-0.25em}
	\item Setting 1: LLaMA-3(8B)~\cite{grattafiori2024llama} on one 8$\times$Atlas A2-2 node and one 8$\times$Atlas A2-3 node.
	\vspace{-0.25em}
	\item Setting 2: LLaMA-2(13B)~\cite{touvron2023llama2} on two 8$\times$Atlas A2-2 nodes and two 8$\times$Atlas A2-3 nodes.
	\vspace{-0.25em}
	\item $\text{Setting 3}$: LLaMA(30B)~\cite{touvron2023llama} on two 8$\times$Atlas A2-2, two 8$\times$Atlas A2-3, and two 8$\times$Atlas A2-4 nodes.
	\vspace{-0.25em}
	\item Setting 4: Mixtral(8$\times$7B)~\cite{jiang2024mixtral}, an MoE model, on one 8$\times$A2-2 node, two 8$\times$A2-3 nodes, and four 8$\times$A2-4 nodes.
	\vspace{-0.25em}
	\item $\text{Setting 5}$: LLaMA-2(13B) on one 8$\times$A100 node, two 8$\times$RTX-4090 nodes, and two 8$\times$RTX-3090 nodes.
	\vspace{-0.25em}
	\item Setting 6: LLaMA-3(70B) on six 8$\times$Atlas A2-2, six 8$\times$Atlas A2-3, and twenty-four 8$\times$Atlas A2-4 nodes. 
\end{itemize}

\begin{figure}[tbp] 
	\vspace{-0.5em}
	\centering 
	\includegraphics[width=0.49\textwidth]{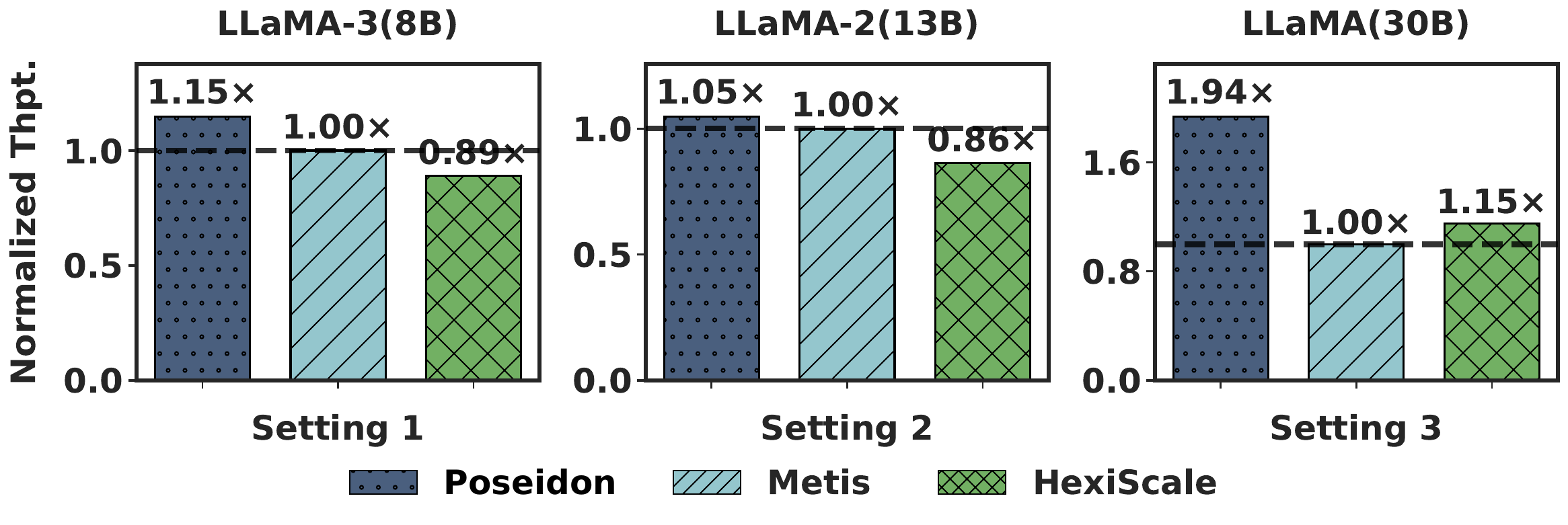}
	\vspace{-2.5em}
	\caption{The training throughput of Poseidon compared to baseline systems, across different models and cluster settings.}
	\label{Fig.experiment-comparison-throuput}
	\vspace{-.5em}
\end{figure}

\subsection{End-to-end Training Performance}
\label{sec:end-to-end-performance}

\subsubsection{Throughput across model and cluster scales}  
For a fair comparison, all systems adopt the 1F1B schedule, the only one supported by the baselines. As shown in Fig.~\ref{Fig.experiment-comparison-throuput}, Poseidon achieves the highest training throughput across all settings, outperforming Metis by 1.05–1.94$\times$ and HexiScale by 1.21–1.69$\times$, with both maximum gains observed on LLaMA-30B. Metis's layer allocation suffers from two issues: (1) heuristics that prioritize computational balance across stages, and (2) iterative error-correction for OOM failures. In 1F1B scheduling, its greedy allocation often triggers OOMs because earlier stages require higher activation memory. As a result, the subsequent heuristic reallocation lacks global optimality and degrades the quality of final allocations. Similarly, HexiScale relies on a two-phase graph-partitioning heuristic without optimality guarantees. In contrast, Poseidon exhaustively explores the search space under memory constraints, leveraging an accurate DAG-based cost model and provable pruning strategies to consistently find optimal configurations.

\subsubsection{Analysis of search overhead}
All methods require a one-time profiling step to collect data for the cost model. In our evaluation, this step takes up to 2 hours and is reused across all methods. Thus, the profiling time is excluded from the reported search overhead.

The brute-force search over the full configuration space is prohibitively time-consuming. For example, under Setting~1, there are approximately $1.01 \times 10^9$ feasible configurations according to dynamic programming analysis, leading to at least 7 hours of search time given a per-search cost of 0.025~ms on our testbed. The estimated search time further increases to 4.81 days in Setting~2, with each search taking 0.0408~ms. When scaled to Setting~3, the estimated total search time exceeds 5,000 years—each search taking 0.189~ms. Although searching a single configuration is on the order of milliseconds, the enormous size of the configuration space renders brute-force search impractical. These results highlight the need for efficient search algorithms that reduce exploration time while preserving solution quality, as discussed in \Cref{oppor-challenge}.

Fig.~\ref{Fig.experiment-comparison-searching} shows the search time taken by each method for the experiments in Fig.~\ref{Fig.experiment-comparison-throuput}. Although Poseidon does not achieve the shortest search time in all settings, it effectively balances search efficiency and configuration quality. For smaller models (LLaMA-8B/13B), differences in search time across methods are marginal, with search times remaining within the same order of magnitude. Considering the overall training cost, this overhead is negligible. For instance, assuming a training dataset of 10 billion tokens under Setting 1, the configuration selected by Poseidon completes training in approximately 21.2 hours, which is 2.8 hours faster than that produced by Metis. Such throughput gains become increasingly significant as the size of the datasets scales up. Moreover, since brute-force search is feasible under Settings~1 and~2, we employ brute-force search in these cases. The results show that Poseidon identifies exactly the same optimal parallelism configurations as brute-force search, achieving identical throughput.

On larger models (LLaMA-30B), Poseidon reduces search time by up to 11.34$\times$ compared to Metis, whose depth-first search becomes less efficient as the search space grows. Conversely, Poseidon's pruning mechanism aggressively eliminates suboptimal configurations early, reducing search overhead. While HexiScale also achieves rapid search via a two-phase graph-partitioning heuristic, it compromises configuration quality, as shown in Fig.~\ref{Fig.experiment-comparison-throuput}.

\begin{figure}[t] 
	\vspace{-0.75em}
	\centering 
	\includegraphics[width=0.48\textwidth]{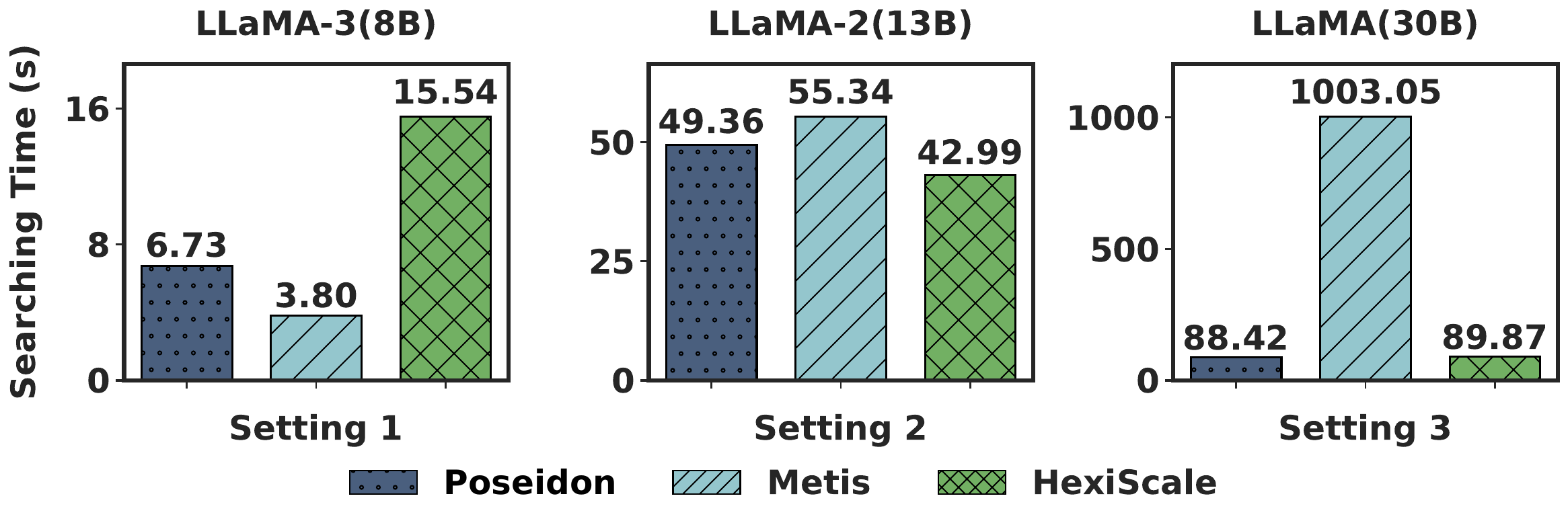}
	\vspace{-2.5em}
	\caption{The search time of all systems across various model scales and heterogeneous cluster settings.}
	\label{Fig.experiment-comparison-searching}
	\vspace{-2em}
\end{figure}

		
		
\begin{table*}[t]
	\centering
	\begin{threeparttable}
		\captionsetup{skip=2pt}
		\caption{Configurations discovered by Poseidon, Metis, and HexiScale}
		\label{tab:config_comparison}
		\renewcommand{\arraystretch}{0.9}  
		\setlength{\tabcolsep}{8pt} 
		\small 
		\begin{tabular}{@{}c c c c@{}}
			\toprule
			\textbf{Methods} & \textbf{Poseidon} & \textbf{Metis} & \textbf{HexiScale} \\
			\midrule
			\text{PP} & 16 & 14 & 10 \\
			\multicolumn{1}{@{}c}{\text{(DP,~TP,~NPU,~Layer)}} & 
			
			\makecell[l]{
				(2,~1,~A3,~2)$\times$1, (2,~1,~A3,~3)$\times$2, \\
				(2,~1,~A3,~2)$\times$1, (2,~1,~A2,~3)$\times$4, \\
				(2,~1,~A3,~2)$\times$2, (2,~1,~A3,~1)$\times$2, \\
				(2,~1,~A2,~3)$\times$4
			} & 
			\makecell[l]{
				(2,~1,~A2,~3)$\times$8, 
				(2,~2,~A3,~3)$\times$2,\\ (2,~1,~A3,~3)$\times$1, 
				(2,~1,~A3,~2)$\times$1,\\ (2,~1,~A3,~3)$\times$1,
				(2,~1,~A3,~2)$\times$1
			} & 
			\makecell[l]{
				(1,~4,~A3,~5)$\times$3, (1,~2,~A3,~2)$\times$2, \\
				(1,~4,~A2,~5)$\times$1, (1,~4,~A2,~6)$\times$2, \\
				(1,~2,~A2,~2)$\times$2
			} \\
			
			\bottomrule
		\end{tabular}
		\begin{tablenotes}[flushleft]
			\footnotesize
			\item[] \textit{Notation:} $(a,b,c,d)\times n$ denotes $n$ consecutive stages with identical configurations.
		\end{tablenotes}
	\end{threeparttable}
	\vspace{-0.5em}
\end{table*}

\subsubsection{Case study of configuration results}
Table \ref{tab:config_comparison} presents the parallelism strategies discovered by all methods under Setting 2. Poseidon identifies a ridge-like configuration through fine-grained device and layer allocation. Metis produces similar DP and TP settings but retains a coarse pipeline layout due to its limited search space, which hinders exploitation of inter-device heterogeneity. HexiScale limits configurability by grouping similar devices using a graph-partitioning heuristic, narrowing opportunities for better configurations.

\subsection{Evaluation of the DAG-Based Cost Model}
\label{sec:accuracy-of-dag-based-cost-model}

\subsubsection{Training time estimation in end-to-end experiments}
We compare Poseidon’s DAG-based estimation with actual end-to-end training times. For actual iteration time, we run 30 iterations, discard the first five to avoid warm-up, and average the remaining 25. As shown in Fig.~\ref{Fig.DAGacc}, Poseidon maintains low estimation errors ($\leq$2\%) across Settings 1–3, indicating DAG-based cost model accurately captures runtime behaviors, including computation–communication overlap and dependencies across heterogeneous devices. 

We further evaluate its fidelity on LLaMA-3 (70B) under 20 distinct heterogeneous deployments, using two nodes of 8×Atlas A2-2, four nodes of 8×Atlas A2-3, and five nodes of 8×Atlas A2-4. For each deployment, 300 configurations are sampled (100 per pipeline schedule: 1F1B, Interleaved 1F1B, Eager 1F1B). These configurations cover diverse DP/PP/TP degrees (DP=1–4, PP=1–16, TP=1–8) to reflect realistic intra- and inter-node parallelism. Across all configurations, the simulator achieves an average prediction accuracy of 98\%.

\subsubsection{Generalizability to pipeline schedules}
We evaluate Poseidon under different pipeline schedules using Setting 2. As shown in Fig.~\ref{Fig.combined_throughput}, varying the pipeline schedule alone can change throughput by up to 1.3× with other factors held constant. This highlights the importance of pipeline schedule as an additional optimization dimension—Poseidon fully exploits this flexibility, whereas existing baselines cannot.

\begin{figure}[t]
	\vspace{-0.75em}
	\centering
	\begin{minipage}{0.24\textwidth}
		\centering
		\includegraphics[width=\textwidth]{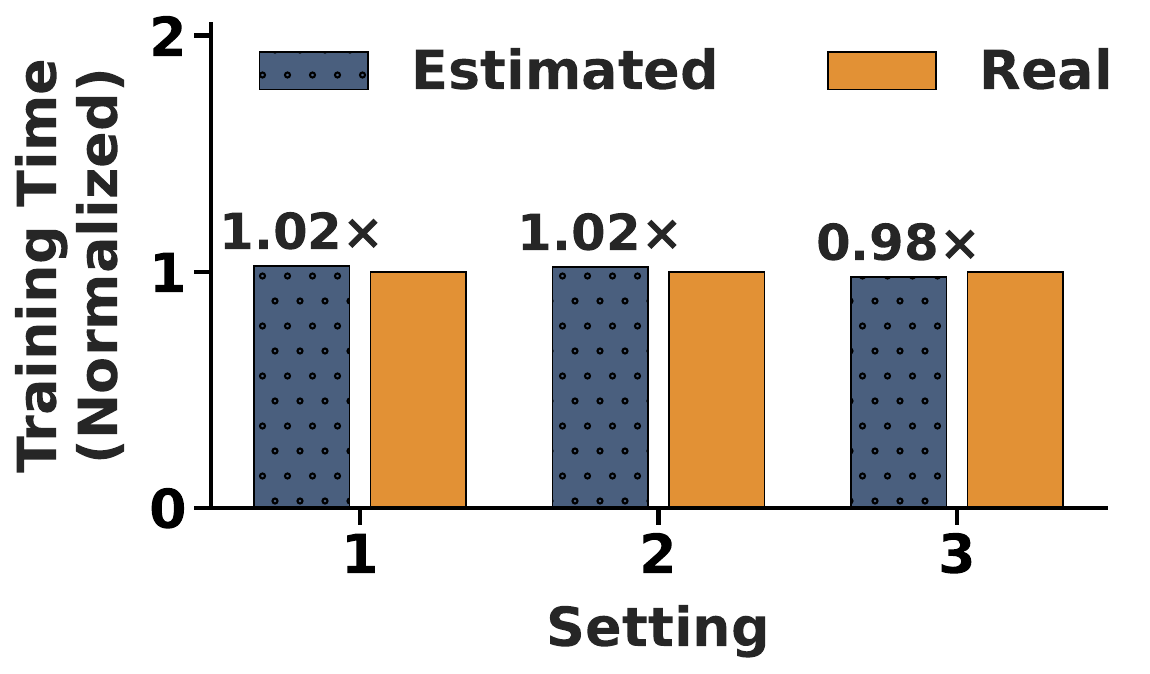}
		\vspace{-2em}
		\caption{Accuracy of the DAG-based model in estimating training time.}
		\label{Fig.DAGacc}
	\end{minipage}
	\hfill
	\begin{minipage}{0.23\textwidth}
		\centering
		\includegraphics[width=\textwidth]{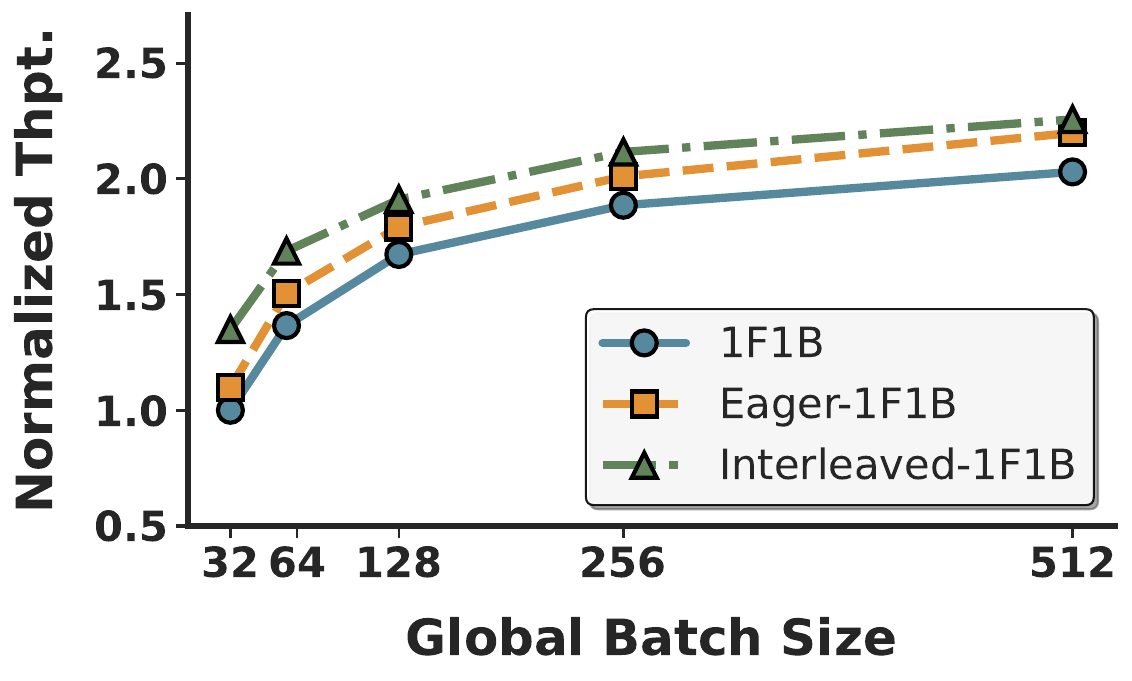}
		\vspace{-2em}
		\caption{Training throughput of Poseidon under different pipeline schedules.}
		\label{Fig.combined_throughput}
	\end{minipage}
	\vspace{-1em}
\end{figure}

\subsection{Effectiveness of the Search Algorithm}
\label{sec:effcetiveness-of-the-algorithm}

\subsubsection{Ablation studies for pruning strategies}
Poseidon’s pruning strategies are evaluated on LLaMA-30B under four scenarios: stage-level only, layer-level only, full pruning, and no pruning. Fig.~\ref{Fig.PruningEffec-gbs} shows the evolution of the best-found training time, yielding two insights:


\begin{itemize}[left=0pt, itemsep=0pt, labelsep=0.25em, label=--]
	\vspace{-0.25em}
	\item \textit{Effectiveness of individual pruning strategies.} Both stage-level and layer-level pruning accelerate convergence across all batch sizes, demonstrating that each strategy effectively filters out poor configurations.
	\vspace{-0.25em}
	\item \textit{Synergistic gains from combined pruning.} When combined, Poseidon achieves the fastest reduction in training time, highlighting the synergy: layer-level pruning guides the search toward high-quality regions, while stage-level pruning tightens bounds and prunes suboptimal branches.
\end{itemize}

\begin{figure}[t]
	\vspace{-0.8em}
	\centering
	\begin{minipage}[t]{0.49\linewidth}
		\centering
		\includegraphics[width=\linewidth]
		{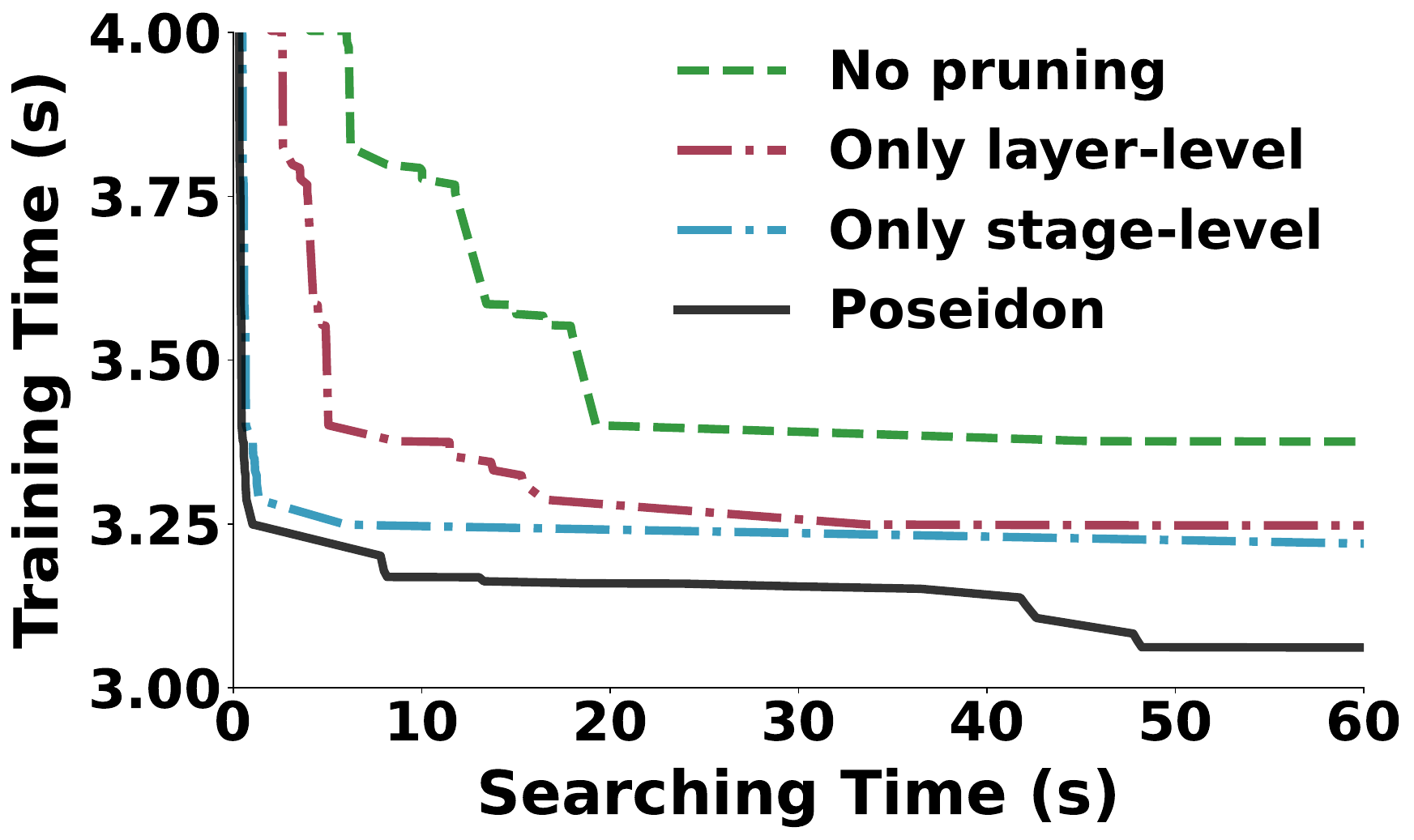}
		\vspace{-1.6em}
		\captionof{figure}{
			Breakdown analysis of Poseidon's pruning strategies.
		}
		\label{Fig.PruningEffec-gbs}
	\end{minipage}
	\hfill
	\begin{minipage}[t]{0.49\linewidth}
		\centering
		\includegraphics[width=\linewidth]
		{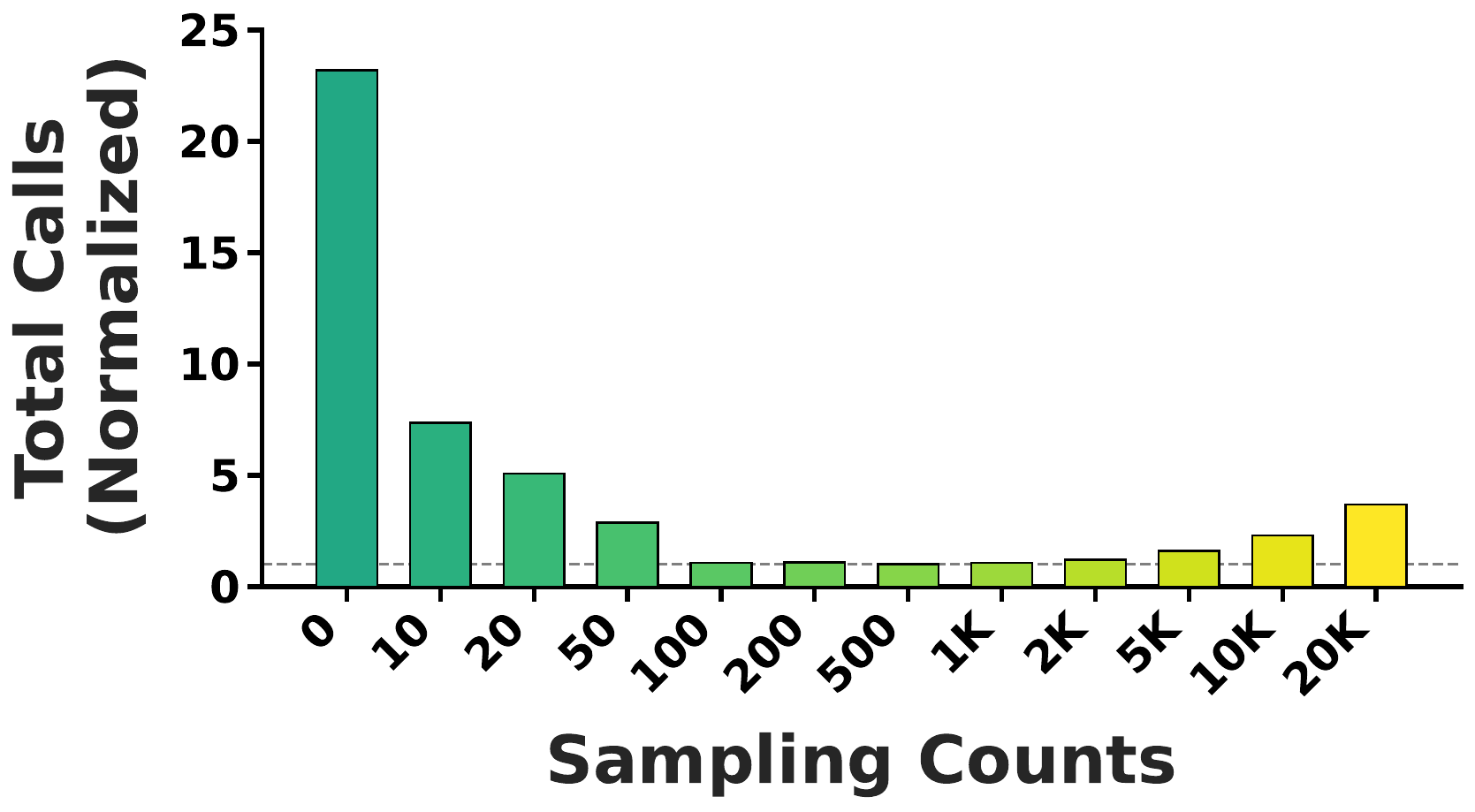}
		\vspace{-1.6em}
		\captionof{figure}{
			Impact of sampling frequency on search effectiveness.
		}
		\label{Fig.SampleTest}
	\end{minipage}
	\vspace{-1.5em}
\end{figure}

\subsubsection{Impact of warm-up sampling count}
\label{sec:impact-of-warm-up-sampling-count}
Poseidon adopts a warm-up sampling phase before stage-level pruning, in order to establish a high-quality initial baseline. A better baseline enables stricter pruning, which reduces simulator overhead.

We tested warm-up counts from 0 to 20,000 and found a trade-off: more samples improve pruning but increase upfront cost. Without sampling, weak initial configurations lead to 23.2$\times$ more simulator calls compared to the optimal warm-up sampling count, as shown in Fig. \ref{Fig.SampleTest}. Moderate sampling (e.g., 500 samples) quickly sets a competitive baseline, enabling efficient pruning. Thus, Poseidon defaults to 500 warm-up samples to balance cost and effectiveness.

\subsection{Evaluation in General Settings}

\subsubsection{Training MoE models}
The Mixture-of-Experts (MoE) architecture is increasingly popular for scaling LLMs, as adopted in Mixtral~\cite{jiang2024mixtral}, DeepSeek-V3~\cite{deepseekai2025deepseekv3technicalreport}, and Qwen3~\cite{yang2025qwen3technicalreport}. Although MoE models introduce multiple experts per layer, the profiling patterns remain similar to non-MoE models. Thus, the proposed training paradigm can be seamlessly applied to MoE architectures without modification.

We evaluated Poseidon on the Mixtral(8$\times$7B) model under Setting 4. As shown in Fig.~\ref{Fig.moe_exp}, Poseidon achieves 1.06$\times$ and 1.61$\times$ higher training throughput than Metis and HexiScale, respectively, while reducing search time by 52.07$\times$ and 3.15$\times$. 


\begin{figure}[t] 
	\vspace{-1em}
	\centering 
	\includegraphics[width=0.45\textwidth]{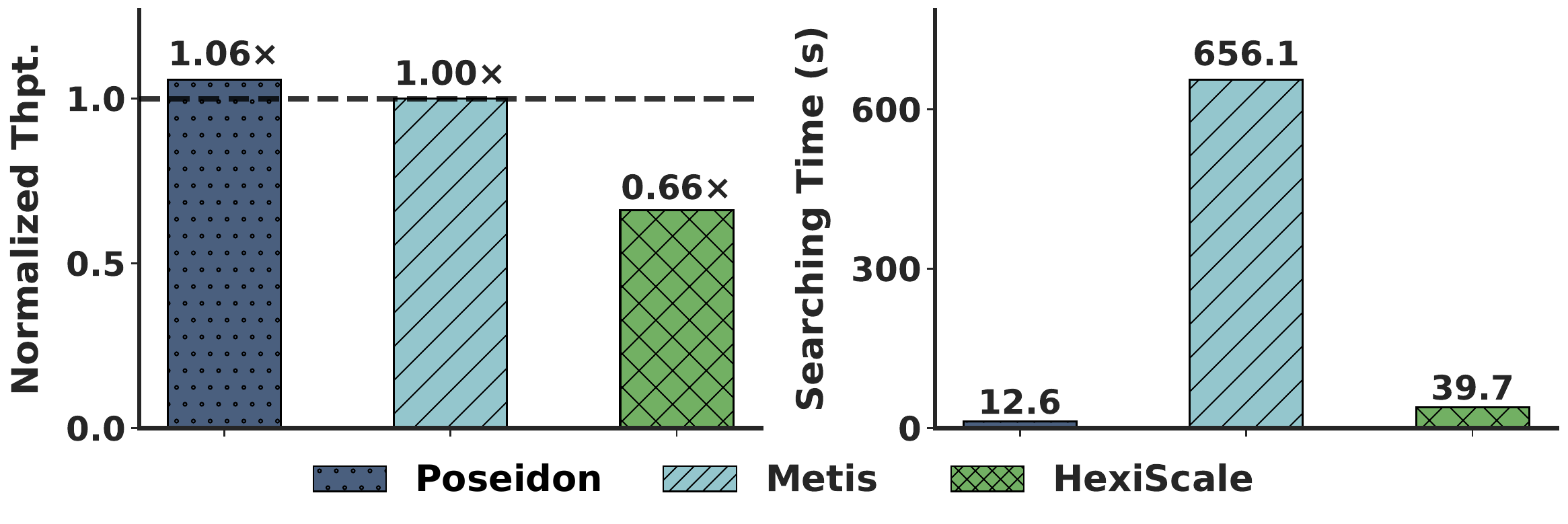}
	\vspace{-1em}
	\caption{Training throughput and search time of Poseidon compared to baseline systems on a MoE model.}
	\vspace{-0.25em}
	\label{Fig.moe_exp}
\end{figure}

\begin{figure}[t] 
	\centering 
	\vspace{-1.25em}
	\includegraphics[width=0.45\textwidth]{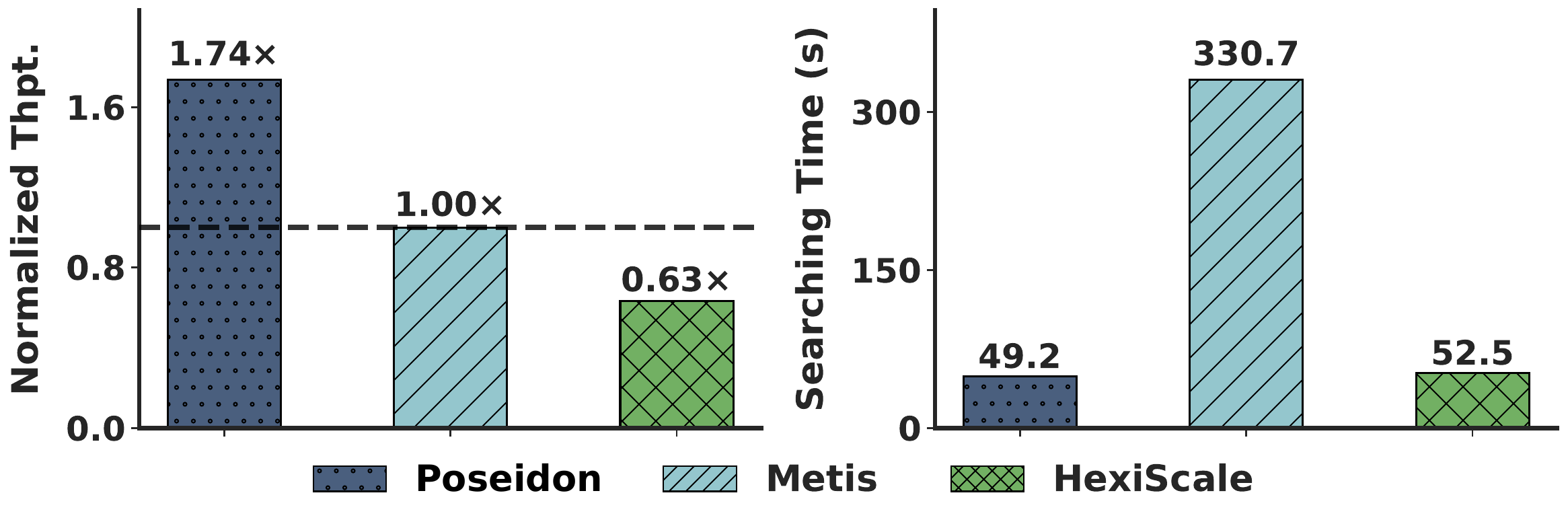}
	\vspace{-1.1em}
	\caption{The training throughput and searching time of Poseidon compared to baselines on heterogeneous GPU clusters. }
	\vspace{-1.5em}
	\label{Fig.gpu_exp}
\end{figure}

\subsubsection{LLM training across GPU Clusters}
Since the searching module and runtime system are loosely coupled, Poseidon's search module can be easily integrated with other runtime systems designed for different types of accelerators. 

To verify Poseidon's scalability in GPU environments, we integrated its search module into the HexiScale runtime and evaluated it under Setting 5. As shown in Fig.~\ref{Fig.gpu_exp}, Poseidon achieves 1.74$\times$ and 2.76$\times$ higher training throughput than Metis and HexiScale, respectively, while reducing search time by up to 6.72$\times$.

Notably, the GPU clusters exhibit greater compute and memory heterogeneity than the NPU clusters. The results show that Poseidon still delivers substantial throughput gains in such highly heterogeneous environments.

More generally, parallelism configuration search is independent of accelerator type, relying only on generic hardware profiles such as compute capability, memory capacity, and communication bandwidth. At this level of abstraction, there is no fundamental difference between GPUs and NPUs.

\subsubsection{Large-scale Simulation Experiments}
\label{sec:large-scale-simulation}
To further evaluate Poseidon at larger scales, we simulate its and baselines' performance under Setting 6 using LLaMA-3 (70B). All simulations use our simulator driven by the DAG-based cost model, emulating heterogeneous clusters with 96, 192, and 288 NPUs.

\textit{Superior scalability with increasing cluster size.} 
As shown in Fig.~\ref{Fig.largescale}, Poseidon consistently delivers the highest training throughput across all simulated scales in terms of performance. Specifically, it achieves a 1.50$\times$ to 1.58$\times$ improvement over Metis and a 1.45$\times$ to 1.55$\times$ improvement over HexiScale.

\textit{Efficient search at scale.} As shown in Table~\ref{table:large-scale-searching-time}, Poseidon substantially reduces search overhead, achieving 37.8$\times$–53.0$\times$ speedups over Metis and 4.5$\times$–5.3$\times$ over HexiScale. These results confirm that Poseidon’s theory-driven pruning strategies scale effectively, maintaining high efficiency despite exponential growth in device counts and search complexity.

\begin{figure}[t]
	\centering 
	\includegraphics[width=0.48\textwidth]{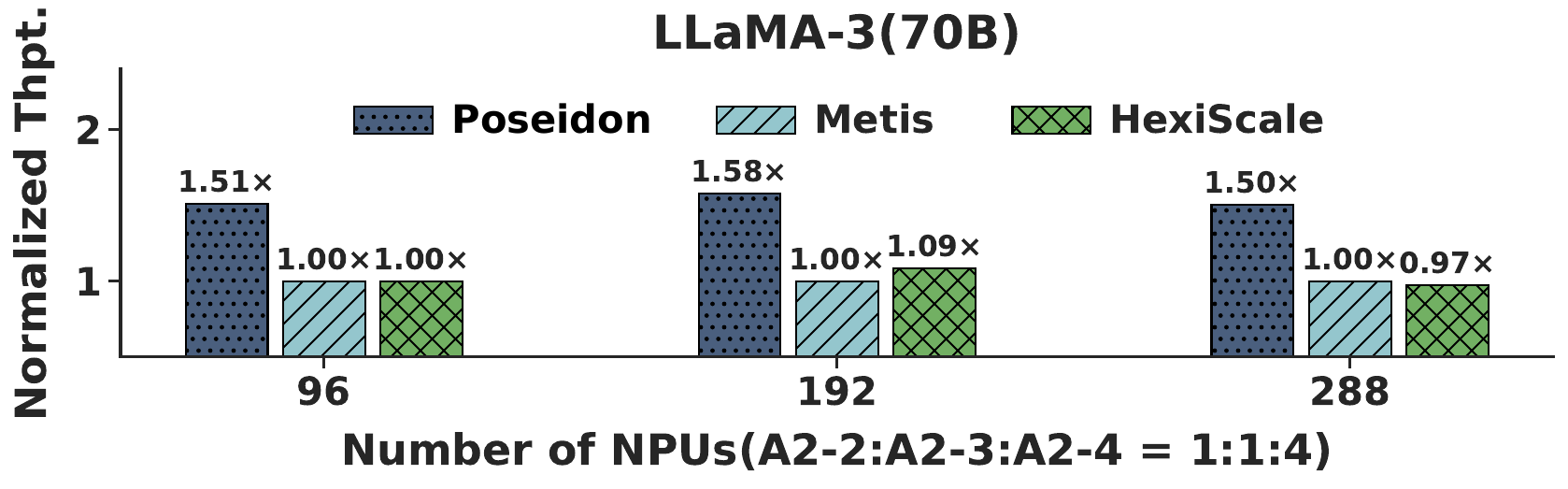}
	\vspace{-2em}
	\caption{Training throughput of Poseidon and baselines on heterogeneous NPU clusters at different scales.}
	\label{Fig.largescale}
	\vspace{-1em} 
\end{figure}

\begin{table}[t]
	\small
	\centering
	\caption{The search time of all systems on heterogeneous NPU clusters with varying scales}
	\vspace{-1em}
	\begin{tabular}{cccc} 
		\toprule 
		Number of NPUs  & 96  & 192  & 288     \\
		\midrule 
		Poseidon                    &       26.0s                 &   47.8s      & 75.8s \\ 
		Metis                          &       1378.7s             &   2382.6s & 2862.9s \\  
		HexiScale                  &     116.8s                 &   251.7s     & 394.7s \\ 
		\bottomrule 
	\end{tabular}	
	\label{table:large-scale-searching-time}
	\vspace{-1em}
\end{table}

\section{Related Work}

\textbf{Automating parallelism on heterogeneous devices.} Several recent studies have focused on automatically tuning parallelism for heterogeneous environments  \cite{um2024metis, yan2024flashflex}. Earlier work includes Piper \cite{tarnawski2021piper}, which first attempted to solve the problem of optimal partitioning via a two-level dynamic programming approach, and AMP \cite{li2022amp}, which defines a valid space of model-parallel strategies and underpins the cost models used by both Metis and HexiScale. Alpa \cite{zheng2022alpa} introduces a hierarchical search space and compiler-driven passes to derive efficient execution plans across parallelism levels—Metis itself builds on Alpa’s infrastructure. Galvatron \cite{miao2022galvatron} employs a decision-tree heuristic to prune the search space, followed by dynamic programming to identify optimal parallelism configurations. While recent work \cite{Rethinking} adopts a similar stage-level pruning strategy, it overlooks pipeline parallelism. Other contemporary approaches, such as Sailor \cite{sailor} and H2 \cite{tang2025h2towardsefficientlargescalellm}, leverage dynamic programming combined with heuristic pruning or depth-first search with heuristics, respectively, to prune the search space. However, the cost models that support their pruning mechanisms rely on static formulas similar to those used in Metis. In contrast, Poseidon is the first system to provide theoretical guarantees of optimal parallel configurations on heterogeneous clusters for LLM training. 

\textbf{DAG-based modeling for training systems.}
DAGs are a fundamental abstraction for representing dependency structures in deep neural network (DNN) and LLM training.
Prior work has leveraged DAGs for compilation optimization~\cite{moritz2018ray,wang2022overlap}, execution scheduling~\cite{tang2024fusionllm,hsia2024mad,jia2019beyond,jhoo2025pfeife}, communication optimization~\cite{wan2025coflow,wang2022overlap,won2023astra}, and formal reasoning~\cite{liang2025hapt}.
While these works demonstrate the expressive power of DAG abstractions, their objectives differ fundamentally from Poseidon. Notably, FlexFlow~\cite{jia2019beyond} is conceptually closest in its use of DAG modeling to Poseidon, it predicts parallel strategy performance via DAG simulation, but it focuses on operator-level parallel simulation for DNNs. Poseidon, by contrast, performs microbatch-level simulation tailored to LLM parallelism and explicitly supports large-scale parallel configuration search.  

\textbf{Pipeline schedule.}  Early pipeline parallelism schemes suffered significant device idle time ("pipeline bubbles"), limiting hardware utilization. While GPipe \cite{huang2019gpipe} introduced micro-batch pipelining to mitigate bubbles, it increases peak memory demands. Subsequent innovations like PipeDream's 1F1B schedule \cite{narayanan2019pipedream} reduced bubbles through interleaved forward/\allowbreak backward passes, inspiring variants including Eager 1F1B \cite{zhuang2023optimizing}, Interleaved 1F1B \cite{narayanan2021efficient}, and Seq1F1B \cite{sun2024seq1f1b}. Zero-bubble \cite{qi2024zero} achieved full stall elimination via backward-pass decomposition—later refined in \cite{qi2024pipeline} to reduce activation memory without reintroducing bubbles. Most recently, Zorse \cite{guo2025zorseoptimizingllmtraining} introduces interleaved GPipe-style pipeline scheduling with offloading parameters and activations to CPU memory. Mario \cite{liu2025mario} enables near-zero-cost activation checkpointing by overlapping recomputation with residual bubbles, enhancing both memory and compute efficiency. Poseidon's flexible modeling readily integrates all such variants. 

\textbf{Training frameworks and systems.} Numerous systems have been developed to support efficient large-scale model training \cite{zeng2025autoheteautomaticefficientheterogeneous, subramanya2023sia, chang2024frenzymemoryawareserverlessllm}. For example, Megatron-LM \cite{shoeybi2019megatron} pioneered model parallelism for multi-billion-parameter models; Whale \cite{jia2022whale} optimizes communication and memory efficiency on heterogeneous GPU clusters; and MegaScale \cite{jiang2024megascale} enables LLM training across over 10,000 GPUs by addressing scheduling, failure recovery, and network bottlenecks. 
These frameworks provide robust infrastructure for parallel training. To fully unleash the power of NPUs in our cluster, we built our system on top of MindSpeed-LLM, which offers similar functionalities with the above systems.

\vspace{-0.5em}
\section{Conclusion}
We present Poseidon, an automated parallelization system for LLM training on heterogeneous accelerator clusters. At its core, Poseidon designs a delicate DAG-based cost model to capture critical factors such as dependencies and overlap, enabling accurate training time estimation via critical-path analysis. Built upon this model, Poseidon incorporates two theoretically grounded searching strategies, i.e., stage- and layer-level pruning, to reduce the search space without compromising training efficiency. Moreover, Poseidon is empowered with strong generalizability to enhance efficiency for diverse paradigms like Eager 1F1B and Interleaved 1F1B. These innovations collectively improve Poseidon's efficiency, unlocking the scalable, efficient, and cost-effective LLM training in real-world heterogeneous environments.

\bibliographystyle{ACM-Reference-Format}
\bibliography{arXiv-Poseidon}

\clearpage
\appendix
\section*{Appendix}

In this appendix, we provide a detailed theoretical analysis of the stage-level pruning guarantee (Theorem~\ref{theorem.optimality-based pruning}) and the layer-level pruning guarantee (Theorem~\ref{theorem.Unimodal}).

\section{Stage-level Pruning Guarantee}
\label{sec:poseidon_stage_pruning_guarantee}
Recall that stage-level pruning operates on a partial configuration $S_x$ during the recursive search. The lower-bound function $g(S_x)$, defined in Eq.~\eqref{eq:stage-level-pruning}, characterizes the minimum possible training time that any complete configuration extending $S_x$ can achieve. Therefore, if this lower bound already exceeds the best observed time $T_{\min}$, continuing to explore the remaining stages cannot lead to a better configuration. The following proof formalizes this intuition and shows that the pruning rule in Theorem~\ref{theorem.optimality-based pruning} preserves optimality.

\begin{proof}[The proof of Theorem~\ref{theorem.optimality-based pruning}]
	We proceed by contradiction.
	Suppose there exists a full configuration $S^*$ extending $S_x$ such that its corresponding per-iteration training time $T(S^*) < T_{\min}$.
	
	Since $S^*$ extends $S_x$, the layer allocations for stages $1$ through $x$ remain unchanged, and only the assignments for stages $x+1$ onward may vary.
	
	By the definition of $g(S_x)$, which provides a valid lower bound on the total training time for any such extension, we must have:
	\[
	T(S^*) \geq g(S_x).
	\]
	
	However, we are assuming both $T(S^*) < T_{\min}$ and $g(S_x) > T_{\min}$, which leads to a contradiction:
	\[
	T(S^*) \geq g(S_x) > T_{\min} > T(S^*).
	\]
	
	Therefore, no extension of $S_x$ can yield a per-iteration training time smaller than $T_{\min}$, and all such configurations can be safely pruned.
\end{proof}

\section{Layer-level Pruning Guarantee}
\label{sec:poseidon_layer_pruning_guarantee}
The layer-level pruning guarantee is built upon the ridge-like property stated in Theorem~\ref{theorem.Unimodal}. Its proof is more involved but can be summarized in three steps. First, we identify a canonical structure of the critical path. Second, based on this characterization, we show that for stages assigned to device groups of the same type, any layer allocation can be transformed into a non-increasing ordering without increasing the critical-path cost, when temporarily ignoring the impact of device memory constraints. Finally, we combine this property with the fact that, under the 1F1B pipeline paradigm, activation storage requirements decrease monotonically with increasing stage indices \cite{liu2024aceso,jia2022whale, narayanan2019pipedream}. Together, these arguments establish the optimality of the ridge-like layer distribution used in $\mathsf{Neptune}$’s layer-level pruning.

\subsection{Preliminaries}

We consider a scenario of training LLMs on a pipeline with heterogeneous device groups. We introduce the following definitions and assumptions to formalize this process.

\begin{definition}
	Let $L$ denote the total number of layers in the model, $N$ the number of pipeline stages, and $M$ the number of micro-batches.
\end{definition}

\begin{definition}
	Let $l_i$ denote the number of layers assigned to stage $i$. Let $f_i$ and $b_i$ denote the per-layer forward and backward computation time at stage $i$, respectively. Then, let $F_i$ and $B_i$ denote the total forward and backward computation time of a micro-batch at stage $i$, respectively.
\end{definition}

Note that $f_i$ and $b_i$ are determined by the device group assigned to stage $i$. Additionally, we have:
\begin{equation}
	\begin{aligned}
		F_i &= l_i \cdot f_i, \\
		B_i &= l_i \cdot b_i.
	\end{aligned}
\end{equation}

\begin{definition}
	For any two stages $i$ and $j$, let $t_{i,j}$ denote the communication time required to transfer activations from stage $i$ to stage $j$.
\end{definition}

\begin{definition}
	Let $F_{i,j}$ and $B_{i,j}$ denote the forward and backward nodes, respectively, corresponding to the $i$-th pipeline stage and the $j$-th micro-batch in the computation DAG.
\end{definition}

Without loss of generality, we consider a single pipeline in the DAG for the purpose of this proof. Therefore, all DAGs referred to in the following analysis implicitly assume a single pipeline structure.

\begin{definition}
	In a DAG, each node has at most two outgoing edges. One edge represents the dependency within the same micro-batch across adjacent pipeline stages, referred to as the \textit{inter-stage edge}. The other edge represents the execution dependency within the same stage, referred to as the \textit{intra-stage edge}.
\end{definition}

As shown in Figure~\ref{Fig.PPgraph}, intra-stage edges are directed from left to right, while inter-stage edges of forward nodes are directed downward and those of backward nodes are directed upward.

\begin{figure*}[t] 
	\centering 
	\includegraphics[width=0.75\textwidth]{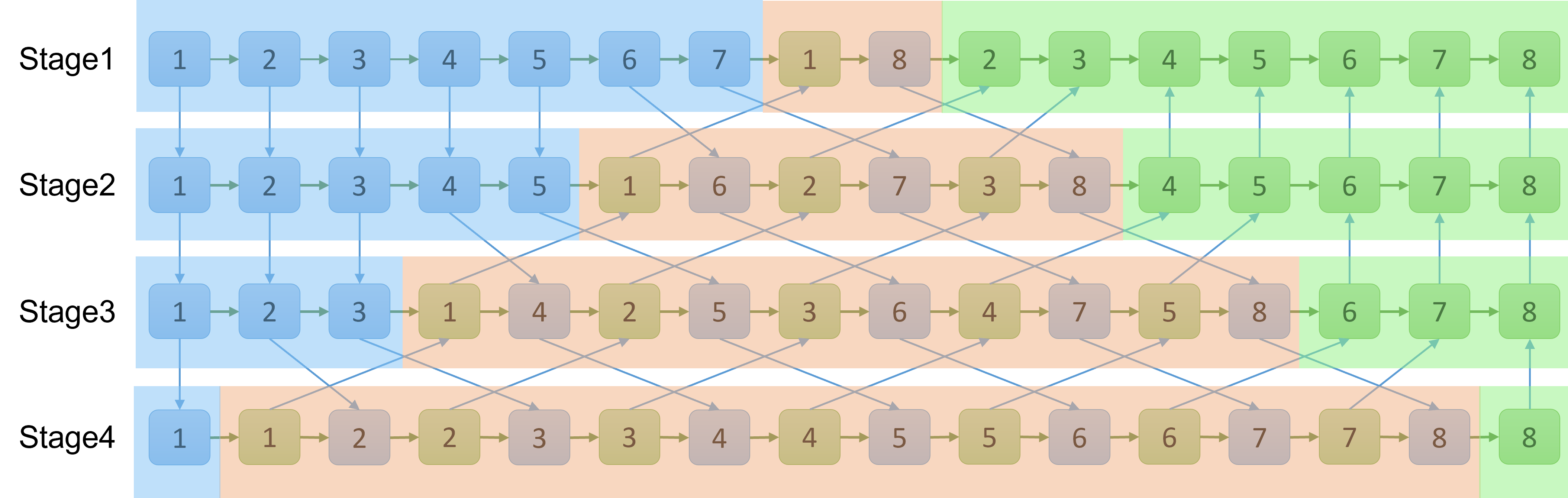} 
	\caption{Warmup (blue), steady (orange), and ending (green) phases in the DAG when $N = 4$ and $M = 8$.} 
	\label{Fig.PPgraph} 
\end{figure*}

\begin{figure*}[t] 
	\centering 
	\includegraphics[width=0.75\textwidth]{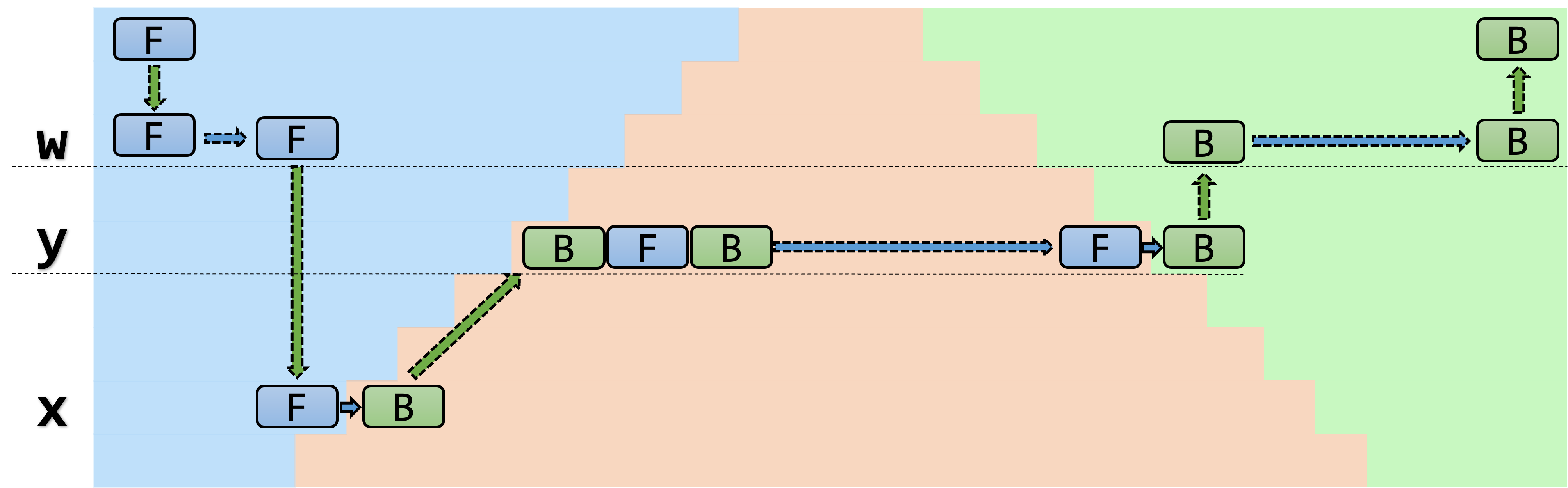} 
	\caption{A schematic representation of the path $K(x,y,w)$} 
	\label{Fig.Kxyw} 
\end{figure*}

\begin{definition}
	In a DAG, the \textit{critical path} is the longest weighted path, defined as the path with the maximum sum of weights over all its nodes and edges.
\end{definition}

In addition, we introduce four assumptions to support the subsequent theoretical analyses.

\begin{assumption}
	\label{assump.BFratio}
	For any $i$ $(1 \le i \le N)$, $b_i = r \cdot f_i$, where $r > 1$ is the ratio of backward to forward computation time of a single layer.
\end{assumption}

Backward propagation is typically more time-consuming than forward propagation due to gradient computation \cite{qi2024zero}.

\begin{assumption}
	\label{assump.ComuniTime}
	For any communication time $t_{i,j}$ $(1 \le i, j \le N)$ and any stage $k$ $(1 \le k \le N)$, it holds that
	\[
	t_{i,j} \le \frac{r-1}{2} F_k.
	\]
\end{assumption}

\begin{assumption}
	It holds that $M \ge 2N$.
\end{assumption}

In practice, the number of micro-batches is typically much larger than the number of pipeline stages \cite{liu2024aceso}, making this assumption reasonable.

\begin{assumption}
	\label{assump.DifferBound}
	Let $F_{\max} = \max_{1 \le i \le N} F_i$ and $F_{\min} = \min_{1 \le i \le N} F_i$. We assume that
	\[
	F_{\min} \ge \alpha \cdot F_{\max},
	\]
	where $\alpha = \max\left(\frac{2}{3}, \frac{r}{1+r}\right)$.
\end{assumption}

In this proof, we focus on the Eager-1F1B pipeline scheduling strategy; other scheduling variants can be analyzed similarly. Under Eager-1F1B, for the $i$-th stage $(1 \le i \le N)$, the execution consists of three phases:
(i) a warmup phase with $2(N-i)+1$ forward micro-batches;
(ii) a steady phase consisting of $M - 2(N-i) - 1$ groups of alternating backward-and-forward micro-batches;
(iii) a cooldown (ending) phase with $2(N-i)+1$ backward micro-batches.
We refer to these three parts as the \textit{warmup phase}, \textit{steady phase}, and \textit{ending phase}, respectively. Figure~\ref{Fig.PPgraph} presents an example.

\subsection{Critical Path Characterization}
\label{sec:critiacal_path_characterization}

In this part, we identify a canonical structure of the critical path in the DAG and present its formal definition as follows.
\begin{definition}
	\label{def.Kxyw}
	In a DAG, let $K(x,y,w)$ denote a path constructed as follows:
	\begin{itemize}[left=2pt,itemsep=0pt,topsep=0pt,parsep=0pt,partopsep=0pt,nosep,labelsep=0.3em]
		\item start from $F_{1,1}$;
		\item move downward until reaching stage $w$;
		\item move right for $2(N-x)$ steps;
		\item move downward until reaching stage $x$;
		\item move right by one step into the steady phase;
		\item move upward until reaching stage $y$;
		\item move right until entering the ending phase;
		\item move upward until reaching stage $w$;
		\item move right until reaching $B_{w,M}$;
		\item finally, move upward until reaching $B_{1,M}$,
	\end{itemize}
	where $1 \le w \le y \le x \le N$, with $y = \arg\max_{i \in [y,x]} F_i$ and $w = \arg\max_{i \in [1,x]} F_i$.
\end{definition}
Figure~\ref{Fig.Kxyw} provides an example of $K(x,y,w)$. Next, we prove the existence of the critical path $K(x,y,w)$. Before that, we introduce several definitions and lemmas.

\begin{figure*}[t] 
	\centering 
	\includegraphics[width=0.75\textwidth]{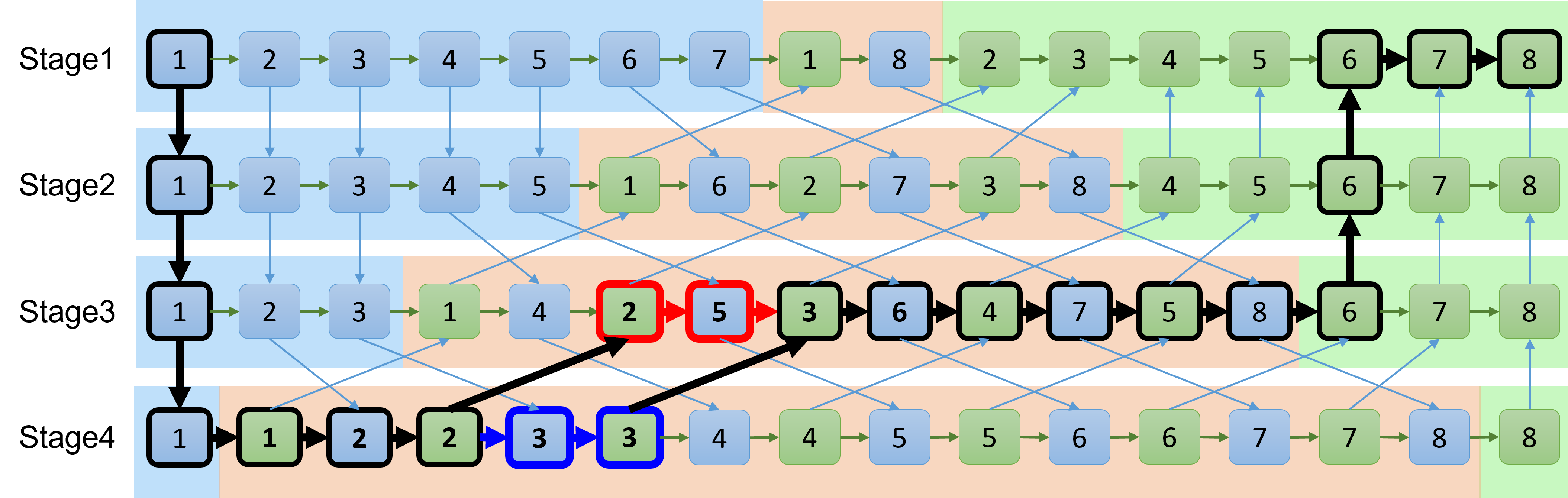}
	\caption{The step transfer operation during steady phase, transfering $2$ rightward steps from one stage to another stage.}
	\label{Fig.StepTransfer}
\end{figure*}
\begin{figure*}[t] 
	\centering 
	\includegraphics[width=0.75\textwidth]{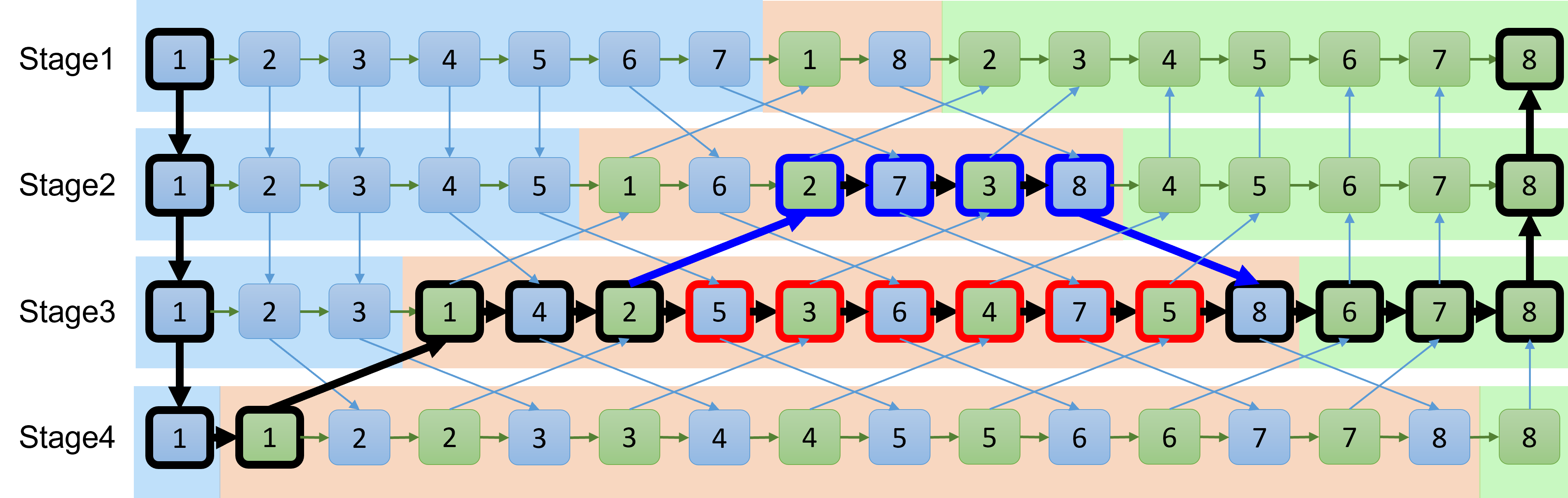}
	\caption{An example of Peak Flattening. Replacing a peak(blue) with a rightward segment(red).}
	\label{Fig.PeakFlatten}
\end{figure*}

\begin{definition}
	The following terms describe canonical path structures in a DAG, which will be used in the subsequent lemmas.
	\begin{itemize}[left=2pt,itemsep=0pt,topsep=0pt,parsep=0pt,partopsep=0pt,nosep,labelsep=0.3em]
		\item \textit{Segment} is a path consisting of consecutive steps in a single direction, which can be upward, downward, or rightward.
		\item \textit{Plateau} is a segment consisting of at least $2$ consecutive rightward steps in the steady phase.
		\item \textit{Peak} consists of three consecutive segments in the steady phase: first, $h$ consecutive upward steps; second, $j$ consecutive rightward steps; and third, $h$ consecutive downward steps, where $h \ge 0$, $j \ge 1$, and $h$ is referred to as the height of the peak. The \textit{peak-top} is the segment formed by the $j$ rightward steps.
		\item \textit{Valley} consists of three consecutive segments in the steady phase: first, $d$ consecutive downward steps; second, $j$ consecutive rightward steps; and third, $d$ consecutive upward steps, where $d \ge 0$, $j \ge 1$, and $d$ is referred to as the depth of the valley. The \textit{valley-bottom} is the segment formed by the $j$ rightward steps.
	\end{itemize}
\end{definition}

	%
	%

\begin{lemma}
	\label{lemma.onlyplateau}
	If there exists a critical path containing plateaus in the steady phase, then there exists a critical path such that all its plateaus in the steady phase are contained in a single stage, denoted as the $x^*$-th stage. Moreover, let $S$ denote the set of stages visited by the path in the steady phase. Then it holds that
	\vspace{-0.5em}
	\[
	F_{x^*} = \max_{i \in S} F_i.
	\]
\end{lemma}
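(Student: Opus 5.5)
The plan is to start from an arbitrary critical path $P$ that contains plateaus in the steady phase, and to rearrange its rightward steps with the \emph{step transfer} operation of Figure~\ref{Fig.StepTransfer}. The rearrangement should never decrease the path weight and should end with every plateau lying in one stage.

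Let $S$ be the set of stages that $P$ visits in the steady phase, and let $x^*$ be a stage in $S$ with $F_{x^*}=\max_{i\in S}F_i$.

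First I will make the local geometry of the steady phase explicit. At every stage $i$, the steady phase alternates backward and forward nodes. Hence any two consecutive rightward steps at stage $i$ traverse one $F$-node and one $B$-node, in either order. Their combined node weight is $F_i+B_i=(1+r)F_i$ by Assumption~\ref{assump.BFratio}, and intra-stage edges carry zero weight. Vertical (inter-stage) edges contribute communication weights $t_{i,i\pm1}$, which depend only on the pair of stages. They do not depend on which micro-batch index the vertical move occurs at.

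The key structural claim I then need is a commutation property. In the steady phase, the DAG is locally a grid in which downward moves leave $F$-nodes and upward moves leave $B$-nodes. Moving a block of two rightward steps from one stage to another stage visited later or earlier by the path keeps the path valid. The reason is that this shifts every intermediate vertical segment by exactly one $B$/$F$ pair, so the parity of the node type at each vertical departure point is preserved. It also leaves the endpoints of the steady-phase portion unchanged.

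Next comes the transfer argument itself. Take any plateau of $P$ lying at a stage $i\neq x^*$, of length $2k$ or $2k+1$ with $k\ge1$. I transfer its first $2k$ rightward steps, in pairs, to a rightward position of $P$ at stage $x^*$; such a position exists because $x^*\in S$. If $P$ crosses $x^*$ only vertically, I insert the pairs between the incoming and outgoing vertical steps there. Each pair transfer changes the weight by
\[
(1+r)\,(F_{x^*}-F_i)\;\ge\;0 .
\]
The vertical edges traversed are the same multiset of stage pairs, so communication weight is unchanged. Since $P$ was already of maximum weight, the new path is again a critical path, and every transfer must in fact be weight-neutral. After the transfer, at most one rightward step remains at stage $i$, which is not a plateau. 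Any vertical segments that become adjacent merge into longer segments, or cancel in up/down pairs. Cancelling an up/down pair can only remove nonnegative communication weight, but it cannot strictly decrease the weight, by criticality, so it is harmless. I will iterate this over all stages other than $x^*$. The procedure terminates because the number of rightward steps outside stage $x^*$ strictly decreases. At the end, all plateaus lie in stage $x^*$, and $F_{x^*}=\max_{i\in S}F_i$ holds, where $S$ is recomputed for the final path. It can only have shrunk, and it still contains $x^*$.

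The main obstacle is the validity claim in the first step: proving that the transferred path remains inside the steady phase and uses only existing DAG edges. Two issues arise. Near the boundaries, the steady phase at stage $i$ starts after $2(N-i)+1$ warmup forwards, so its left and right boundaries are staggered by two columns per stage. Moving steps to $x^*$ could push a vertical crossing past such a boundary.

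I would first try to handle this by choosing where the pairs are inserted at $x^*$, namely adjacent to an existing plateau or crossing point there. I would then check, using $M\ge 2N$, that the steady phase is wide enough that the shifted vertical crossings stay within it. Only if that fails would I fall back to a case analysis on whether $x^*$ lies above or below $i$.
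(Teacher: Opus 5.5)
Your proposal is the paper's argument: the step transfer of Figure~\ref{Fig.StepTransfer}, a per-pair weight change of $(1+r)(F_{x^*}-F_i)\ge 0$, and criticality forcing equality. On one point it fits the statement better than the paper's proof does. The paper only moves steps between two stages that \emph{both} carry plateaus. Its closing claim $F_{x^*}=\max_{i\in S}F_i$ is therefore justified only over plateau-carrying stages, and not at all when a single stage carries plateaus. You choose $x^*$ as a maximizer over all of $S$ and insert between the incoming and outgoing vertical steps when $x^*$ is only crossed. That is what actually gives the maximum over $S$.

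The validity obstacle you flag is never addressed in the paper. It closes without $M\ge 2N$.
\begin{itemize}
\item \emph{Coordinates.} Write a node as $(j,p)$, where $p$ is its position in stage $j$'s execution order. Under the paper's Eager-1F1B description, the steady phase is exactly $\{p+2j\ge 2N+2,\ p-2j\le 2M-2N-1\}$. Inside it, a rightward step is $(j,p)\mapsto(j,p+1)$, a downward step (from an $F$-node) is $(j,p)\mapsto(j+1,p+2)$, and an upward step (from a $B$-node) is $(j,p)\mapsto(j-1,p+2)$. Node type is fixed by the parity of $p$.
\item \emph{Invariants.} Hence $a=p+2j$ and $b=p-2j$ are both nondecreasing along any path. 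The move rules are also invariant under shifting $p$ by $2$.
\item \emph{Removal first.} Let the plateau have current length $k\ge 2$, first node $n_0$ and last node $n_k$. If the pair is removed before it is inserted, the portion in between shifts two positions left. Its $b$-values only drop, and its $a$-values stay at least $a(n_0)+k-2\ge a(n_0)\ge 2N+2$.
\item \emph{Insertion first.} If the pair is inserted first, the portion shifts right. Its $b$-values are at most $b(n_0)+2\le b(n_k)\le 2M-2N-1$.
\end{itemize}
So the slack comes from the plateau itself, and any steady node of $x^*$ is a legal insertion point.

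The up/down cancellation you allow for cannot occur. A downward step ends on an $F$-node and an upward step must leave a $B$-node, so reversing vertical direction needs an odd number of rightward steps in between. Deleting an even number therefore only merges same-direction segments. Your justification there (``cannot strictly decrease the weight, by criticality'') is backwards, since criticality rules out strict increases, not decreases. The case is vacuous, so nothing breaks.
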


\begin{proof}
	We prove the lemma via a plateau relocation (exchange) argument.
	
	Consider a critical path in which the steady phase contains plateaus located in at least two different stages. Let these stages be $x$ and $y$ with $1 \le x < y \le N$, and assume without loss of generality that $F_x \le F_y$. Let $\sigma_x$ denote a plateau in stage $x$ consisting of $k \ge 2$ consecutive rightward steps.
	
	We construct a new path by transferring two rightward steps from $\sigma_x$ to a plateau in stage $y$. Figure~\ref{Fig.StepTransfer} illustrates this operation. After this operation, the modified segment in stage $x$ becomes a shorter plateau $\sigma'_x$ with $(k-2)$ steps; if $k=2$, it disappears. 
	
	Next, we show that this operation does not decrease the total path cost. Since each rightward step in a plateau contributes one forward and one backward computation, the local cost change satisfies
	\begin{equation}
		F_x + B_x = (1+r)F_x \le (1+r)F_y = F_y + B_y,
	\end{equation}
	where the inequality follows from $F_x \le F_y$.
	
	Therefore, the modified path is no worse than the original critical path. In the case $F_x < F_y$, the operation strictly improves the path cost, contradicting the optimality of the original critical path. Hence, we must have $F_x = F_y$.
	
	By repeatedly applying this relocation argument, all plateaus can be merged into a single stage $x^*$ such that $F_{x^*} = \max_{i \in S} F_i$, where $S$ is the set of stages visited in the steady phase. This completes the proof.
\end{proof}

\begin{lemma}
	\label{lemma.onlypeak}
	If a critical path contains peaks or valleys in the steady phase, then there exists a critical path whose peak-tops and valley-bottoms are all located at stages with the maximum forward computation time among the stages traversed in the steady phase. That is, for any such stage $x$,
	\vspace{-0.5em}
	\[
	F_x = \max_{i \in S} F_i,
	\]
	\vspace{-0.5em}
	where $S$ denotes the set of stages traversed by this path in the steady phase.
\end{lemma}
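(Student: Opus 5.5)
The plan is an exchange argument in the spirit of Lemma~\ref{lemma.onlyplateau}. Here peaks and valleys play the role that plateaus played there. Start from any critical path that has peaks or valleys in the steady phase. Among all critical paths, choose one that minimizes a potential: the number of peak-tops and valley-bottoms lying at stages $x$ with $F_x < \max_{i\in S}F_i$. Break ties by the total height plus depth. We then show that any offending peak or valley can be modified without decreasing the path cost, so the potential strictly drops.

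First, account for the local cost of a peak or valley. In the steady phase, a vertical step between adjacent stages inside one backward–forward group moves the path to a node of the neighbouring stage. It traverses an inter-stage edge (weight $t_{i,i\pm1}$) and picks up a node of weight $F$ or $B$ of that stage. A rightward step at stage $x$ contributes alternately $F_x$ and $B_x$.

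For a peak of height $h$ and top length $j$ at stage $x$, compare two paths between the same endpoints. The original pays for the $2h$ vertical moves, their communication costs, and the top at $x$. The replacement is the flattened path of Figure~\ref{Fig.PeakFlatten}, which stays on the base stage $x+h$ with the corresponding number of rightward steps. The key claim is that the rightward steps saved at stage $x$ and regained at the base stage differ only by a bounded number of single $F$ or $B$ nodes, together with at most $2h$ edge weights. Because of the DAG's staircase geometry, going up $h$ and down $h$ inside the steady phase shifts the micro-batch index by exactly the amount a horizontal run would cover. Hence the two paths have the same number of computation nodes, only at different stages.

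Second, case analysis on which stage is heavier.
\begin{itemize}
\item If the base stage has $F_{x+h}\ge F_x$, flattening replaces nodes at $x$ and the intermediate stages by nodes at $x+h$. It also deletes the $2h$ communication edges. It therefore does not decrease cost, and it removes the peak.
\item If $F_x > F_{x+h}$ but some $F_y=\max_{i\in S}F_i$ with $F_y>F_x$, we transfer top steps to $y$. We use the argument of Lemma~\ref{lemma.onlyplateau}: each pair of rightward steps is worth $(1+r)F$, so moving pairs to $y$ is non-decreasing. A top of odd length leaves one unpaired step.
\end{itemize}
Valleys are handled symmetrically, by turning them into a bottom at the maximal stage or flattening them onto the rim stage.

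The main obstacle is bounding the leftover terms: the unpaired single $F$ or $B$ step, and the lost inter-stage edges, which enter with the wrong sign when we move toward a heavier but farther stage. Here I will invoke Assumption~\ref{assump.ComuniTime}, which gives $t\le\frac{r-1}{2}F_k$. Two communication edges are then dominated by $(r-1)F_k = B_k - F_k$, exactly the gap between swapping a forward for a backward node. Assumption~\ref{assump.DifferBound}, with $F_{\min}\ge\alpha F_{\max}$ and $\alpha\ge\frac{r}{1+r}$, absorbs the single-step mismatch $F_y$ versus $B_x$, since it gives $F_{\max}\le(1+r)F_{\min}/r$. The condition $M\ge 2N$ guarantees the steady phase is long enough that relocated steps stay inside it.

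Once each move is shown non-decreasing, the path stays critical and the potential strictly decreases, so the process terminates. At termination every peak-top and valley-bottom lies at a stage attaining $\max_{i\in S}F_i$, which is the statement.
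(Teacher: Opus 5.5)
The flattening move in your proposal is the paper's (Figure~\ref{Fig.PeakFlatten}), but your cost bookkeeping is wrong, and the error matters.

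\textbf{Case 1.} The two subpaths do \emph{not} have the same number of nodes. A peak of height $h\ge1$ with $c$ top steps has $c+2h$ nodes. The flat run between the same endpoints at the base stage $y$ has $c+4h$ rightward steps, i.e.\ $h$ extra forward--backward pairs at $y$. That surplus is what pays for communication. In the paper's accounting each level of the peak costs $(1+r)F_i+t_{i,i+1}+t_{i+1,i}\le 2rF_y$ (Assumption~\ref{assump.ComuniTime}), against $2(1+r)F_y$ for the flat run, so
\[
C_{\mathrm{flat}}-C_{\mathrm{peak}}\ge 2hF_y+\tfrac{c-1}{2}(1+r)(F_y-F_x)>0 .
\]
Under your equal-count accounting, deleting $2h$ nonnegative edge weights would \emph{lower} the cost, so your Case~1 conclusion does not follow from what you wrote.

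Case~1 also only compares base and top. The bound needs $F_i\le F_y$ for every stage between them, so you must cut at the heaviest stage of the peak's vertical range; this is the paper's ``$F_y$ is no smaller than the stages involved.'' Nothing else is used: Assumption~\ref{assump.DifferBound} and $M\ge 2N$ play no role, since the flat run lies between two steady-phase nodes of the base stage.

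\textbf{Case 2.} Your second case cannot decrease the potential. For $h\ge1$, a peak-top runs from the backward node reached by the last upward edge to a forward node that starts the downward edges. It therefore has odd length, and at least one step at stage $x$ is unavoidable. Transferring pairs as in Lemma~\ref{lemma.onlyplateau} leaves a length-one top still sitting at the offending stage. Bounding that step's cost with Assumption~\ref{assump.DifferBound} is beside the point, because it is still a peak-top at $x$.

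Removing it requires lowering the peak, which brings back the flattening inequality. Flattening can only move the top to a stage inside the peak's vertical range, so it cannot reach a heavier stage of $S$ lying elsewhere.

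\textbf{Comparison with the paper.} The paper never faces this situation. It takes the heavier stage $y$ to be the peak's base, dominating the stages in between, and proves only the flattening inequality. The corrected Case~1 computation is therefore all you need to match the paper. Genuinely covering a heavier stage elsewhere in $S$ would need an argument that neither your proposal nor the paper provides.
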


\begin{figure*}[tbp]
	\centering
	\begin{subfigure}[b]{0.46\textwidth}
		\centering
		\includegraphics[width=\textwidth]{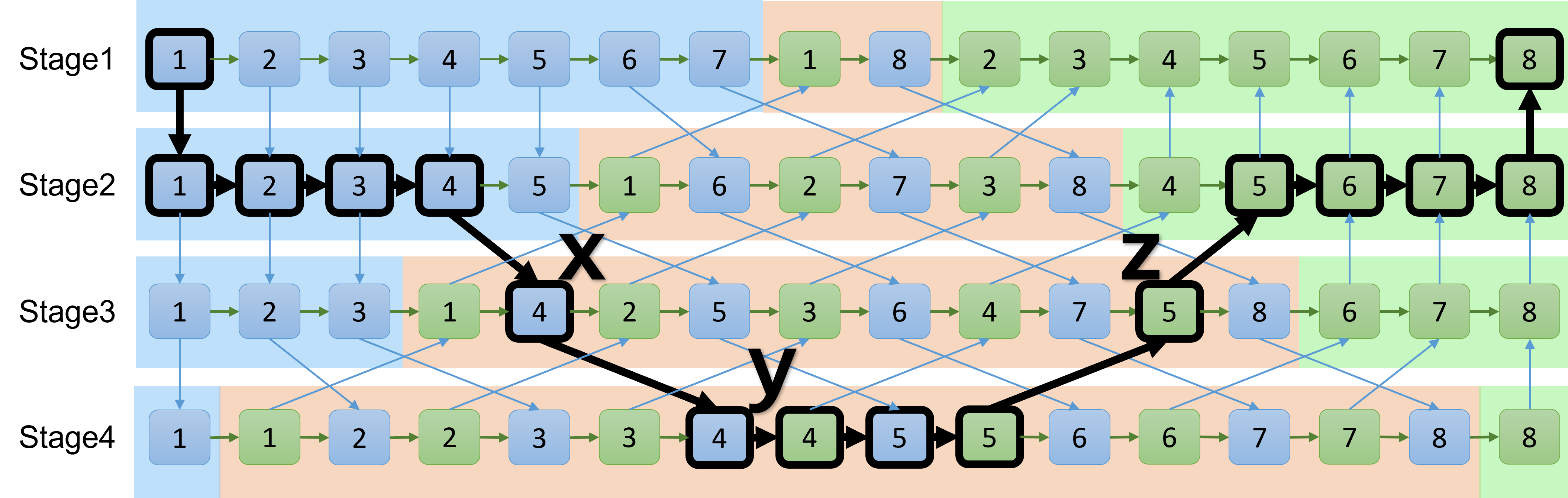}
		\caption{}
		\label{Fig.xyzValley}
	\end{subfigure}
	\hfill
	\begin{subfigure}[b]{0.46\textwidth}
		\centering
		\includegraphics[width=\textwidth]{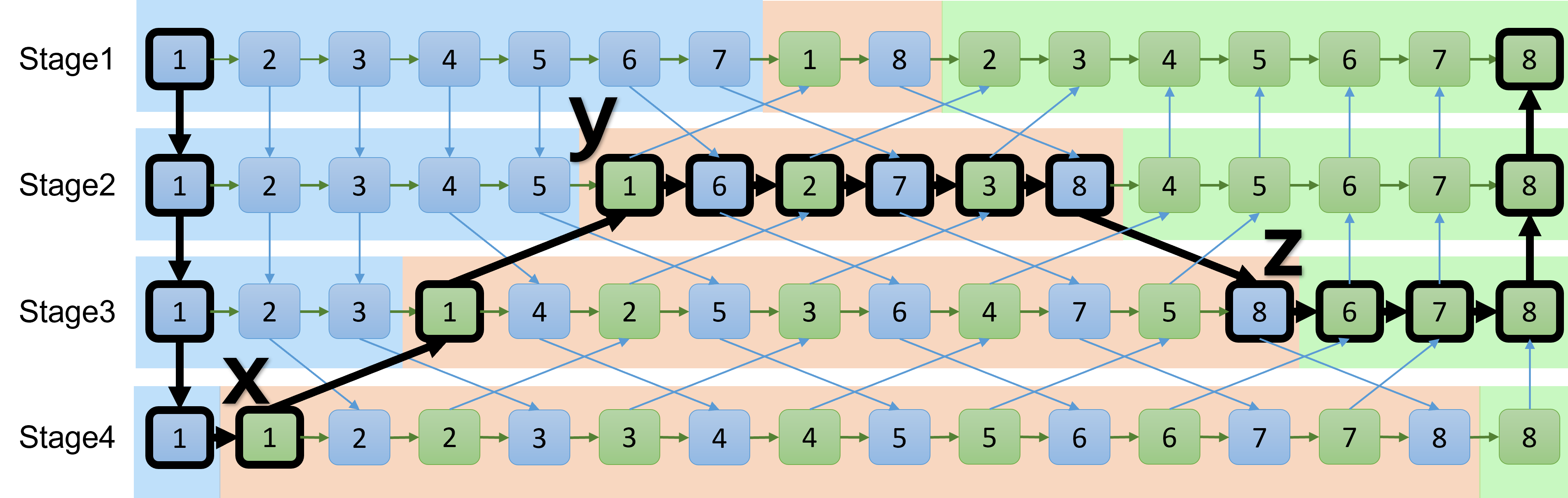}
		\caption{}
		\label{Fig.xyzPeak}
	\end{subfigure}
	
	\vspace{0.5em}
	
	\begin{subfigure}[b]{0.46\textwidth}
		\centering
		\includegraphics[width=\textwidth]{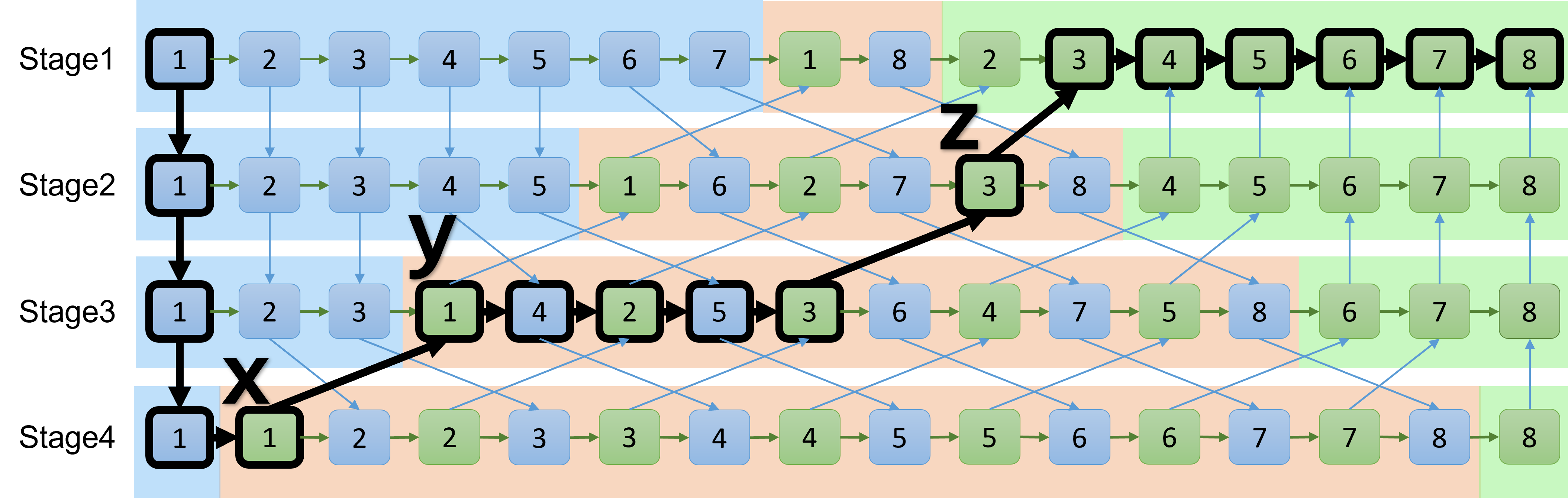}
		\caption{}
		\label{Fig.xyzZigZag}
	\end{subfigure}
	\hfill
	\begin{subfigure}[b]{0.46\textwidth}
		\centering
		\includegraphics[width=\textwidth]{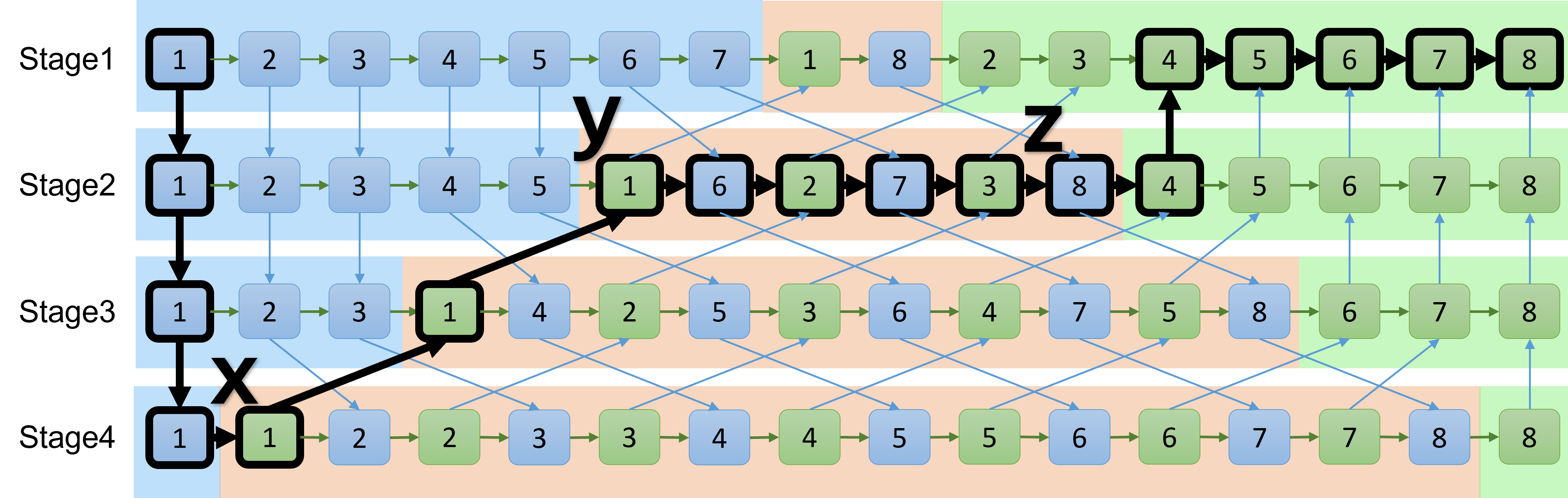}
		\caption{}
		\label{Fig.xyzFinal}
	\end{subfigure}
	
	\caption{Examples of $\mathcal{P}(x,y,z)$.}
	\label{Fig.XYZ}
\end{figure*}

\begin{proof}
	We prove the lemma by contradiction. We first consider the case of a peak; the case of a valley can be proved symmetrically by filling the valley-bottom.
	
	Suppose there exists a critical path whose steady phase contains a peak with its peak-top located at stage $x$. Let $S$ denote the set of stages traversed by this path in the steady phase. Assume, for contradiction, that there exists a stage $y \in S$ such that
	\begin{equation}
		F_y > F_x .
	\end{equation}
	Without loss of generality, consider the case $y > x$. Let $h = y-x$ be the height between stages $x$ and $y$, and let the peak-top contain $c \ge 1$ rightward steps. Then the peak subpath first moves upward from stage $y$ to stage $x$, then moves right along the peak-top, and finally moves downward from stage $x$ back to stage $y$.
	
	We replace this peak subpath with a flat rightward segment at stage $y$, consisting of $c+4h$ rightward steps. This operation is referred to as \textit{cutting the peak-top}. Figure~\ref{Fig.PeakFlatten} illustrates this replacement.
	
	We now compare the costs of the two subpaths. The cost of the original peak subpath can be written as
	\begin{equation}
		C_{\mathrm{peak}}
		=
		\sum_{i=x}^{y-1}
		\left(F_i+B_i+t_{i,i+1}+t_{i+1,i}\right)
		+
		\frac{c-1}{2}(F_x+B_x).
	\end{equation}
	By Assumption~\ref{assump.BFratio}, we have $B_i=rF_i$. Since $F_y>F_x$ and $F_y$ is no smaller than the stages involved in this replacement, we have $F_i \le F_y$ for $i \in [x,y]$. Moreover, by Assumption~\ref{assump.ComuniTime},
	\begin{equation}
		t_{i,i+1}+t_{i+1,i} \le (r-1)F_y .
	\end{equation}
	Therefore,
	\begin{equation}
		\begin{aligned}
			C_{\mathrm{peak}}
			&\le
			\sum_{i=x}^{y-1}
			\left((1+r)F_y+(r-1)F_y\right)
			+
			\frac{c-1}{2}(1+r)F_x \\
			&=
			2rhF_y+\frac{c-1}{2}(1+r)F_x .
		\end{aligned}
	\end{equation}
	
	After cutting the peak-top, the replacement subpath at stage $y$ has cost
	\begin{equation}
		C_{\mathrm{flat}}
		=
		\left(\frac{c-1}{2}+2h\right)(F_y+B_y)
		=
		\left(\frac{c-1}{2}+2h\right)(1+r)F_y .
	\end{equation}
	Since $r>1$ and $F_y>F_x$, we obtain
	\begin{equation}
		C_{\mathrm{flat}} > C_{\mathrm{peak}} .
	\end{equation}
	Thus, replacing the original peak subpath with the flat segment yields a strictly longer path, contradicting the assumption that the original path is critical.
	
	Therefore, the peak-top of a critical path cannot be located at a stage whose forward computation time is smaller than that of another stage traversed in the steady phase. That is, the peak-top must lie in a stage $x$ satisfying
	\begin{equation}
		F_x = \max_{i \in S} F_i .
	\end{equation}
	
	The same argument applies to a valley by reversing the vertical directions, which corresponds to filling the valley-bottom. Hence, every peak-top and valley-bottom can be placed at a stage with the maximum forward computation time among stages in $S$.
	
	If multiple stages attain the same maximum value, the above replacement can be applied without decreasing the path cost, allowing all peak-tops and valley-bottoms to be moved to a single such stage $x^*$. Therefore, there exists a critical path in which all peak-tops and valley-bottoms lie in the same stage $x^*$, with
	\begin{equation}
		F_{x^*}=\max_{i\in S}F_i .
	\end{equation}
	This completes the proof.
\end{proof}

\begin{theorem}
	\label{theorem.Pxyz}
	Consider a critical path that enters the steady phase at stage $x$ and exits the steady phase at stage $z$. There exists a critical path $\mathcal{P}$ with the same entry and exit stages, and a stage $y$, such that the following properties hold:
	\begin{itemize}[left=2pt,itemsep=0pt,topsep=0pt,parsep=0pt,partopsep=0pt,nosep,labelsep=0.3em]
		\item In the steady phase, $\mathcal{P}$ consists of three segments:
		\begin{enumerate}[left=2pt,itemsep=0pt,topsep=0pt,parsep=0pt,partopsep=0pt,nosep,labelsep=0.3em]
			\item starting from the entry at stage $x$, it moves continuously upward or downward to stage $y$;
			\item it then moves continuously rightward along stage $y$;
			\item from stage $y$, it moves continuously upward or downward to the exit at stage $z$.
		\end{enumerate}
		\item Stage $y$ has the maximum forward computation time among all stages traversed by $\mathcal{P}$ in the steady phase, i.e.,
		\vspace{-0.5em}
		\[
		F_y =
		\max_{\min(x,y,z) \le i \le \max(x,y,z)} F_i .
		\vspace{-0.5em}
		\]
	\end{itemize}
\end{theorem}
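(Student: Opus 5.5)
Starting from an arbitrary critical path with entry stage $x$ and exit stage $z$, I would apply a finite sequence of cost-non-decreasing rewrites inside the steady phase and show that what remains has the three-segment shape. Each rewrite keeps the entry and exit points fixed and never shortens the path. The path was already critical, so each rewritten path is also critical. This is why the statement only asserts existence of such a $\mathcal{P}$.

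\textbf{Step 1: decompose the steady-phase portion.} Inside the steady phase the path is a monotone-in-time walk made of rightward, upward and downward segments. I would decompose it into maximal vertical runs separated by rightward runs. Any rightward run of length at least $2$ is a plateau. Any rightward run of length $1$ sitting between two vertical runs of opposite direction is the top of a peak or the bottom of a valley, possibly of unequal side heights. Where the two sides have unequal heights, I would split off the common height as the peak (or valley) and treat the excess as part of the adjacent monotone run.

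\textbf{Step 2: collapse all horizontal runs to one stage.} I would invoke Lemma~\ref{lemma.onlypeak} to move every peak-top and valley-bottom to a stage $y$ attaining $\max_{i\in S}F_i$. I would invoke Lemma~\ref{lemma.onlyplateau} to move every plateau to a single stage $x^*$ with the same maximal $F$. When ties occur, both lemmas allow the choice of a common stage, so I take $x^*=y$. After this, every rightward step in the steady phase lies on stage $y$.

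\textbf{Step 3: remove redundant vertical motion.} Once all rightward steps lie on stage $y$, any vertical excursion that leaves $y$ and returns to it is a peak or valley with zero-length top. It can be replaced by rightward steps on $y$, exactly as in the ``cutting the peak-top'' computation. By Assumption~\ref{assump.ComuniTime} and $r>1$, this replacement does not decrease the cost. What is left is a monotone vertical run from $x$ to $y$, one rightward run on $y$, and a monotone vertical run from $y$ to $z$. Every stage traversed then lies in $[\min(x,y,z),\max(x,y,z)]$. Since $y$ was chosen as the maximizer over $S$, and $S$ now equals this interval, the second bullet follows.

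\textbf{Main obstacle: the bookkeeping of Step 3 and the step counts.} A rewrite that converts vertical steps into rightward steps must preserve the total number of rightward steps available in the steady phase. This is the $c+4h$ accounting used in Lemma~\ref{lemma.onlypeak}, and it must hold at the boundary stages $x$ and $z$ as well. I also have to check that $y$ is not moved outside the steady-phase window of those stages, since a stage $i$ has $M-2(N-i)-1$ steady groups. This is where $M\ge 2N$ is needed: it guarantees that every stage has enough steady-phase steps to host the merged horizontal run.

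I would first try to bound the number of rightward steps required by the total horizontal displacement between entry and exit, which is fixed. If the merged run does not fit on stage $y$, I would fall back on a termination argument instead. Each rewrite strictly decreases the number of direction changes, so the process ends. A terminal path admits no further peak, valley or plateau rewrite, and such a path must already have the three-segment form.
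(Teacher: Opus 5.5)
Your proposal is essentially the paper's proof: you use Lemma~\ref{lemma.onlyplateau} and Lemma~\ref{lemma.onlypeak} to concentrate every steady-phase rightward step on a single stage $y$ of maximal $F$, then read off the three-segment shape, noting that the traversed set shrinks to $[\min(x,y,z),\max(x,y,z)]$ with $y$ still its maximiser. Your Step~3 is vacuous, because in this DAG any reversal of vertical direction needs an intra-stage (rightward) edge, so once all rightward steps lie on $y$ the shape is forced (this is exactly how the paper concludes); likewise your step-count worries are already covered by the existence claims of the two lemmas, so you do not need the fallback termination argument, which would not work as stated anyway (moving two steps between plateaus need not reduce the number of direction changes, and a path on which a rewrite is blocked need not have the three-segment form).
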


Figure~\ref{Fig.XYZ} presents four examples of the critical-path structures characterized in Theorem~\ref{theorem.Pxyz}.

\begin{proof}
	By Lemma~\ref{lemma.onlyplateau}, there exists a critical path in which all plateaus in the steady phase are located in a single stage that attains the maximum forward computation time among the stages traversed in the steady phase. By Lemma~\ref{lemma.onlypeak}, all peak-tops and valley-bottoms can also be placed in such a stage without decreasing the path cost.
	
	Therefore, there exists a critical path whose all rightward movements in the steady phase are concentrated in a single stage, denoted as stage $y$, where
	\begin{equation}
		F_y =
		\max_{\min(x,y,z) \le i \le \max(x,y,z)} F_i .
	\end{equation}
	After entering the steady phase at stage $x$, the path must first move vertically to stage $y$, then move rightward along stage $y$, and finally move vertically from stage $y$ to the exit stage $z$. Hence, the steady-phase portion of $\mathcal{P}$ consists of the three segments stated above. This completes the proof.
\end{proof}

\begin{definition}
	\label{def.Pxyz}
	Let $\mathcal{P}(x,y,z)$ denote the critical path characterized in Theorem~\ref{theorem.Pxyz}.
\end{definition}

After characterizing the critical path in the steady phase, we next analyze its structure in the warmup and ending phases.

\begin{lemma}
	\label{lemma.ending}
	Let $\mathcal{P}$ be a critical path that enters the ending phase at stage $z$. The subpath of $\mathcal{P}$ in the ending phase, denoted by $\mathcal{P}_{\text{end}}$, can be adjusted such that it consists of three segments:
	\begin{itemize}[left=2pt,itemsep=0pt,topsep=0pt,parsep=0pt,partopsep=0pt,nosep,labelsep=0.3em]
		\item starting from stage $z$, move continuously upward to stage $w$, where
		\[
		F_w = \max_{1 \le i \le z} F_i ;
		\]
		\item move continuously rightward along stage $w$;
		\item from stage $w$, move continuously upward to stage $1$.
	\end{itemize}
\end{lemma}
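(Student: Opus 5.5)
In the ending phase only backward nodes remain, so every step of $\mathcal{P}_{\text{end}}$ is either upward (an inter-stage edge $B_{i,m}\to B_{i-1,m}$, costing a backward computation plus a communication $t_{i,i-1}$) or rightward (an intra-stage edge to the next backward micro-batch at the same stage, costing one $B_i=rF_i$). The path must end at $B_{1,M}$ and can never move downward. So $\mathcal{P}_{\text{end}}$ is a monotone staircase from stage $z$ up to stage $1$, with some number of rightward steps $k_i\ge 0$ at each stage $i\le z$. My plan is a step-relocation argument in the spirit of Lemma~\ref{lemma.onlyplateau}: show that all rightward steps can be concentrated at a single stage $w$ maximizing $F_i$ over $1\le i\le z$, without decreasing the cost.

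\textbf{Step 1: accounting.} The ending phase of stage $i$ contains $2(N-i)+1$ backward nodes, and the DAG is a grid-like structure. I would check that the total number of rightward steps available to any up-right path from the entry point at stage $z$ to $B_{1,M}$ is a fixed number $K$, determined by the entry column and the shape of the ending phase. The upward steps are then fixed as well: exactly one per stage transition $i\to i-1$ for $2\le i\le z$. The cost then splits into three parts: a fixed part (the upward edges, their communication times, and one $B_i$ per visited stage), plus the variable part $\sum_i k_i B_i$ subject to $\sum_i k_i = K$. The constraint is that the $k_i$ are feasible, meaning the staircase stays inside the ending-phase region, whose width grows as $i$ decreases.

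\textbf{Step 2: relocation.} If some $k_x>0$ at a stage $x$ with $F_x\le F_w$, move those rightward steps to stage $w$. The path cost changes by $k_x(B_w-B_x)=k_x r(F_w-F_x)\ge 0$. If $F_x<F_w$, this contradicts criticality, so at worst ties occur and we may assume equality. Iterating puts all rightward steps on stage $w$. This yields the three-segment form: go up from $z$ to $w$, go right along $w$, then go up to stage $1$. Because $F_w$ is the maximum over $[1,z]$, the choice of $w$ is forced up to ties, which matches the statement.

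\textbf{Main obstacle: feasibility of the relocation.} The ending-phase region is a staircase (triangular) region rather than a rectangle, so it is not automatically true that $K$ rightward steps fit on row $w$, or that moving them keeps the path inside the region and consistent with the edges present in the DAG. I would first argue with the geometry of Figure~\ref{Fig.PPgraph}. Stage $i$'s ending block ends at column $B_{i,M}$, and the blocks of smaller-indexed stages extend further to the right. Hence the row-$w$ segment in the modified path starts no later and ends no earlier than required, so pushing rightward moves upward (to smaller stage index) always stays feasible. Moving rightward steps downward to a $w>x$ is the delicate case. If it fails, I would instead compare the path against the direct alternative that rises to $w$ at the earliest admissible column and takes the full row. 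I would bound the discrepancy using Assumption~\ref{assump.ComuniTime} together with Assumption~\ref{assump.DifferBound}, exactly as in the peak-cutting estimate of Lemma~\ref{lemma.onlypeak}: trade vertical detours for horizontal steps, each horizontal step costing at least $rF_{\min}\ge r\alpha F_{\max}$.
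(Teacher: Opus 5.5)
Your proposal is correct and follows the paper's proof: you relocate every ending-phase rightward step onto the stage $w$ maximizing $F_i$ over $1\le i\le z$, and each moved step changes the cost by $r(F_w-F_x)\ge 0$. The feasibility obstacle you flag does not arise, because stage $i$'s ending phase is $\{B_{i,m} : M-2(N-i)\le m\le M\}$ and these intervals are nested (they start earlier as $i$ decreases, rather than ending later); hence, with entry node $B_{z,m_0}$, the block $\{B_{i,m} : 1\le i\le z,\ m_0\le m\le M\}$ is a full grid inside the ending phase, every up/right staircase in it is a valid DAG path whether $w$ lies above or below $x$, and no appeal to Assumptions~\ref{assump.ComuniTime} or~\ref{assump.DifferBound} is needed.
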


\begin{proof}
	This lemma can also be proved by relocating rightward steps. If $\mathcal{P}_{\text{end}}$ contains rightward steps in stages other than $w$, these steps can be replaced by rightward steps in stage $w$ without decreasing the cost of $\mathcal{P}$, as illustrated in Figure~\ref{Fig.endingPhase}. By repeatedly applying this relocation operation until all rightward steps lie in stage $w$, we obtain the desired critical path.
\end{proof}

\begin{figure}[t] 
	\centering 
	\includegraphics[width=0.25\textwidth]{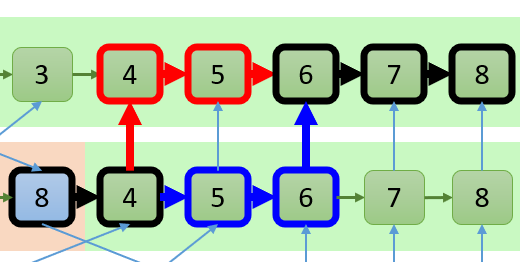}
	\caption{Relocating rightward steps from one stage to another stage(from blue steps to red steps). The cost change depends on the backward computation time difference of the two stages}
	\label{Fig.endingPhase}
\end{figure}

Moreover, lemma \ref{lemma.ending} has a symmetric lemma for warmup phase. That is, for any critical path $\mathcal{P}$, all of the rightward steps in warmup phase can be relocated to a stage $m$, where $F_m = \max\limits_{1\le i\le x}F_i$ and $x$ is the enter stage of $\mathcal{P}$. It can be proved similarly.

\begin{lemma}
	\label{lemma.ending_replace}
	Consider a critical path whose ending-phase subpath moves rightward at stage $w$. Suppose that, after the last rightward segment in the steady phase, the path moves continuously upward from stage $y$ to stage $w$. Then
	\vspace{-0.5em}
	\[
	B_w \ge F_y + B_y .
	\]
\end{lemma}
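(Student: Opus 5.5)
We argue by an exchange, exactly as in the plateau and peak lemmas: assume $B_w < F_y + B_y$ and build a longer path, contradicting criticality. The setting is this. In the steady phase the path runs rightward along stage $y$ (Theorem~\ref{theorem.Pxyz}). It leaves the steady phase and climbs vertically from $y$ to $w$ with $w \le y$, then spends all its ending-phase rightward steps at stage $w$ (Lemma~\ref{lemma.ending}).

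\textbf{The exchange.} Take one rightward step of the ending-phase segment at stage $w$ and trade it for extra rightward progress at stage $y$. Concretely, the steady phase at stage $y$ can be prolonged by one more alternating backward/forward pair, and the ending-phase segment at $w$ is shortened by the matching amount.

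\begin{enumerate}
\item \textbf{Index bookkeeping.} Within the steady phase a rightward step at $y$ covers one $(B,F)$ pair, so it advances the micro-batch index by one. Within the ending phase a rightward step at $w$ covers one backward node, which also advances the index by one. After the swap the path leaves stage $y$ one micro-batch later. The vertical segment $y \to w$ then reaches stage $w$ exactly where the path previously stood after its first ending-phase step. So the new path is well defined and ends at $B_{1,M}$ as before. The main check is that the climb from $y$ to $w$ still lands inside the ending phase at $w$, and not in the steady phase, after this shift.
\item \textbf{Cost comparison.} The swap removes $B_w$ and adds $F_y + B_y$. The vertical segments visit the same stage sequence, so their node and edge costs are unchanged up to node relabeling. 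Because all micro-batches at a stage carry identical weights, those costs are in fact equal.
\item \textbf{Conclusion.} If $B_w < F_y + B_y$, the new path is strictly longer, which contradicts criticality. Hence $B_w \ge F_y + B_y$.
\end{enumerate}

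\textbf{Main obstacle.} The hard part is the index bookkeeping in step 1, namely showing that the shifted vertical climb remains a legal DAG path. The stage-dependent phase boundaries (warmup of length $2(N-i)+1$) are what make this delicate. The plan is to use the phase structure of Figure~\ref{Fig.PPgraph} together with $M \ge 2N$ to guarantee enough room.

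\textbf{Edge case.} If the ending-phase segment at $w$ has no rightward steps, the inequality carries no content in the intended use. We would therefore state it for paths that do have an ending-phase rightward step at $w$, which is exactly the hypothesis "moves rightward at stage $w$."
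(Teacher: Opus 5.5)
Your exchange, which drops one ending-phase backward node at stage $w$ and inserts one extra steady-phase $(F,B)$ pair at stage $y$ for a net change of $F_y+B_y-B_w$, is exactly the paper's argument, and your index bookkeeping is more explicit than the paper's one-line claim that the replacement preserves the DAG dependencies. The step you flag as the main obstacle does not need $M\ge 2N$: if the old climb starts at $B_{y,m}$, the new climb ends at $B_{w,m+1}$, which is already on the old path because of the assumed rightward step at $w$, so the only real requirement is $m\le M-2(N-y)-1$ (the climb must leave stage $y$ from inside its steady phase so that $F_{y,m+2(N-y)+1}$ and $B_{y,m+1}$ follow $B_{y,m}$); the hypothesis supplies this, but it fails in the degenerate case $w=y$ that your ``$w\le y$'' admits, where the stated inequality $B_y\ge F_y+B_y$ cannot hold once $F_y>0$.
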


\begin{proof}
	We prove this lemma by contradiction. Suppose that $B_w < F_y+B_y$. Then one backward computation at stage $w$ in the ending phase can be replaced by one forward computation and one backward computation at stage $y$ in the steady phase, as illustrated in Figure~\ref{Fig.endingReplace}. This replacement preserves the dependency constraints of the DAG and strictly increases the path cost. This contradicts the assumption that the original path is critical. Therefore, we must have $B_w \ge F_y+B_y$.
\end{proof}

\begin{figure}[t] 
	\centering 
	\includegraphics[width=0.3\textwidth]{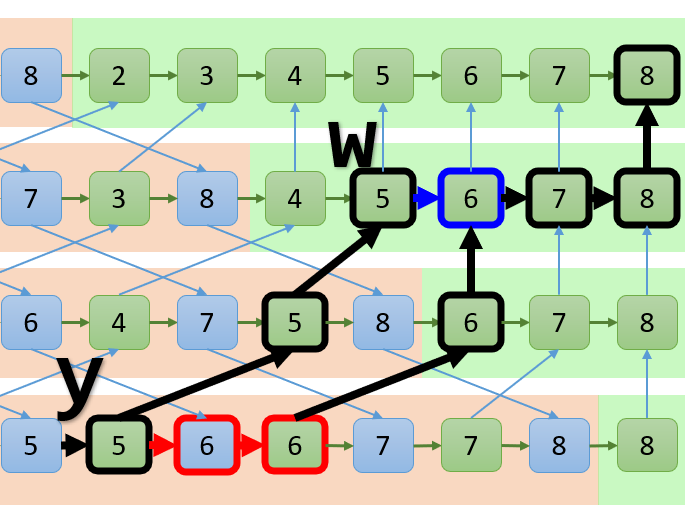}
	\caption{Delaying the position of the upward step, the cost change depends on the difference between a Backward time(blue) in stage $w$ and a Forward+Backward time(red) in stage $y$.}
	\label{Fig.endingReplace}
\end{figure}

\begin{theorem}
	\label{theorem.yleqxz}
	There exists a critical path $\mathcal{P}(x,y,z)$, as defined in Definition~\ref{def.Pxyz}, such that
	\[
	y \le x
	\quad \text{and} \quad
	y \le z .
	\]
	Moreover, stage $y$ satisfies
	\[
	F_y \ge \max_{i \in [y,\max(x,z)]} F_i .
	\]
\end{theorem}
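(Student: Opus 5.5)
Starting from Theorem~\ref{theorem.Pxyz}, we take a critical path $\mathcal{P}(x,y,z)$ that enters the steady phase at $x$, runs rightward along $y$, and exits at $z$. We then modify it, without decreasing its cost, until $y\le x$ and $y\le z$. The maximality claim $F_y \ge \max_{i\in[y,\max(x,z)]}F_i$ will follow directly from the second bullet of Theorem~\ref{theorem.Pxyz}, because once $y\le \min(x,z)$ the interval $[\min(x,y,z),\max(x,y,z)]$ is exactly $[y,\max(x,z)]$. The plan is therefore to rule out, or exchange away, the cases $y>x$ and $y>z$.

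\emph{Case $y>z$.} After its rightward run on stage $y$, the path moves upward from $y$ to $z$ and then enters the ending phase. By Lemma~\ref{lemma.ending} it then continues upward to some $w\le z<y$ with $F_w=\max_{1\le i\le z}F_i$, runs rightward on $w$, and exits to stage~$1$. Apply Lemma~\ref{lemma.ending_replace} to the vertical move $y\to w$ to get $B_w \ge F_y+B_y=(1+r)F_y$, so $rF_w\ge(1+r)F_y$ and hence $F_y \le \tfrac{r}{1+r}F_w$. Assumption~\ref{assump.DifferBound} gives $F_y\ge F_{\min}\ge \tfrac{r}{1+r}F_{\max}\ge\tfrac{r}{1+r}F_w$, so equality must hold throughout. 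Then we shift steady-phase rightward steps from stage $y$ into ending-phase rightward steps at $w$, as in the proof of Lemma~\ref{lemma.ending_replace}, one $(F_y+B_y)$ pair per extra $B_w$ step. This costs nothing, and since $z<y$ we may also cut the vertical excursion down to $z$, using Assumption~\ref{assump.ComuniTime} to control communication edges as in Lemma~\ref{lemma.onlypeak}. Repeating this moves the steady-phase run to a stage $y'\le z$ while keeping the path critical. The strict version of the inequality, $F_y<\tfrac{r}{1+r}F_w$, gives an outright contradiction; the tie case is handled by this relocation.

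\emph{Case $y>x$.} Here the path moves downward from $x$ to $y$ at the start of the steady phase. I would argue symmetrically, using the warm-up analogue of Lemma~\ref{lemma.ending} (noted after that lemma) with its stage $m$, $F_m=\max_{i\le x}F_i$, together with a warm-up analogue of Lemma~\ref{lemma.ending_replace}. That analogue says that one warm-up forward step at $m$ can be traded against one $F_y+B_y$ pair, which yields $F_m\ge(1+r)F_y$. This is impossible under Assumption~\ref{assump.DifferBound}, since $\alpha\ge 2/3>1/(1+r)$. A second possibility is that the exchange is instead pushed into a $2(N-x)$ warm-up rightward shift; there the constant $\alpha\ge 2/3$ should be the relevant bound, with a comparison such as $2F_y\ge \ldots$. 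This is where I expect the $\tfrac23$ threshold to enter.

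The main obstacle is the warm-up case: the precise DAG geometry of Eager-1F1B (the $2(N-i)+1$ warm-up length) determines how many forward steps must be exchanged, and the bookkeeping of communication terms via Assumption~\ref{assump.ComuniTime} must be done carefully. I would first work out the replacement explicitly on the small picture of Figure~\ref{Fig.PPgraph}, verify which inequality in Assumption~\ref{assump.DifferBound} is needed, and only then write the general exchange.
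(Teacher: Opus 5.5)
\textbf{There is a genuine gap in the case $y>z$.} Lemma~\ref{lemma.ending_replace} gives $B_w\ge F_y+B_y$, and Assumption~\ref{assump.DifferBound} only excludes the strict inequality. The tie $B_w=F_y+B_y$, i.e.\ $r\ge 2$, $F_w=F_{\max}$, $F_y=F_{\min}=\frac{r}{1+r}F_{\max}$, is consistent with all four assumptions, so it is a real case. Your treatment of it does not work:
\begin{itemize}
\item Zero-cost trades of a steady pair $F_y+B_y$ for an ending step $B_w$ only shorten the run at $y$. They do not by themselves produce a run at some stage $y'\le z$.
\item ``Cutting the vertical excursion down to $z$'' has no meaning here, since after leaving $y$ the path only moves upward.
\item You never check that the modified path is again some $\mathcal{P}(x',y',z')$ with the required maximality property.
\end{itemize}

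The missing idea is the paper's closing step. Keep making these non-decreasing trades until the upward move starts early enough that the path can no longer exit the steady phase at $z$. The next trade then exchanges a pair $F_y+B_y$ for a full steady-phase pair $F_w+B_w$, and $F_w+B_w>B_w\ge F_y+B_y$ because $F_w>0$. This strict gain contradicts criticality, so \emph{no} critical $\mathcal{P}(x,y,z)$ has $y>z$, whether or not there is a tie. In particular the path is not ``kept critical'': the tie case is simply impossible, and Assumption~\ref{assump.DifferBound} is not needed. The contradiction framing also spares you from re-canonicalizing a modified path and then re-checking $y\le x$.

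For $y>x$, your first argument is sound and is a legitimate alternative to the paper's one-sentence appeal to symmetry. The warm-up analogue of Lemma~\ref{lemma.ending_replace} gives $F_m\ge F_y+B_y=(1+r)F_y$. On the other hand $F_y\ge F_{\min}\ge\alpha F_{\max}\ge\alpha F_m$ with $(1+r)\alpha\ge r>1$, which is impossible since $F_m>0$. This works here but not in the ending phase because warm-up rightward steps are worth only $F_m$, whereas ending steps are worth $B_w=rF_w$; that is why a tie survives only in the ending phase. The paper's route instead closes with the analogous strict trade in both phases and never uses Assumption~\ref{assump.DifferBound}.

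Your ``second possibility'' and your expectation that the $\frac23$ threshold enters here are red herrings. That part of $\alpha$ plays no role in this theorem; it is used only in Theorem~\ref{theorem.exclusion}. Your derivation of $F_y\ge\max_{i\in[y,\max(x,z)]}F_i$ from Theorem~\ref{theorem.Pxyz} matches the paper.
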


\begin{figure*}[t] 
	\centering 
	\includegraphics[width=0.75\textwidth]{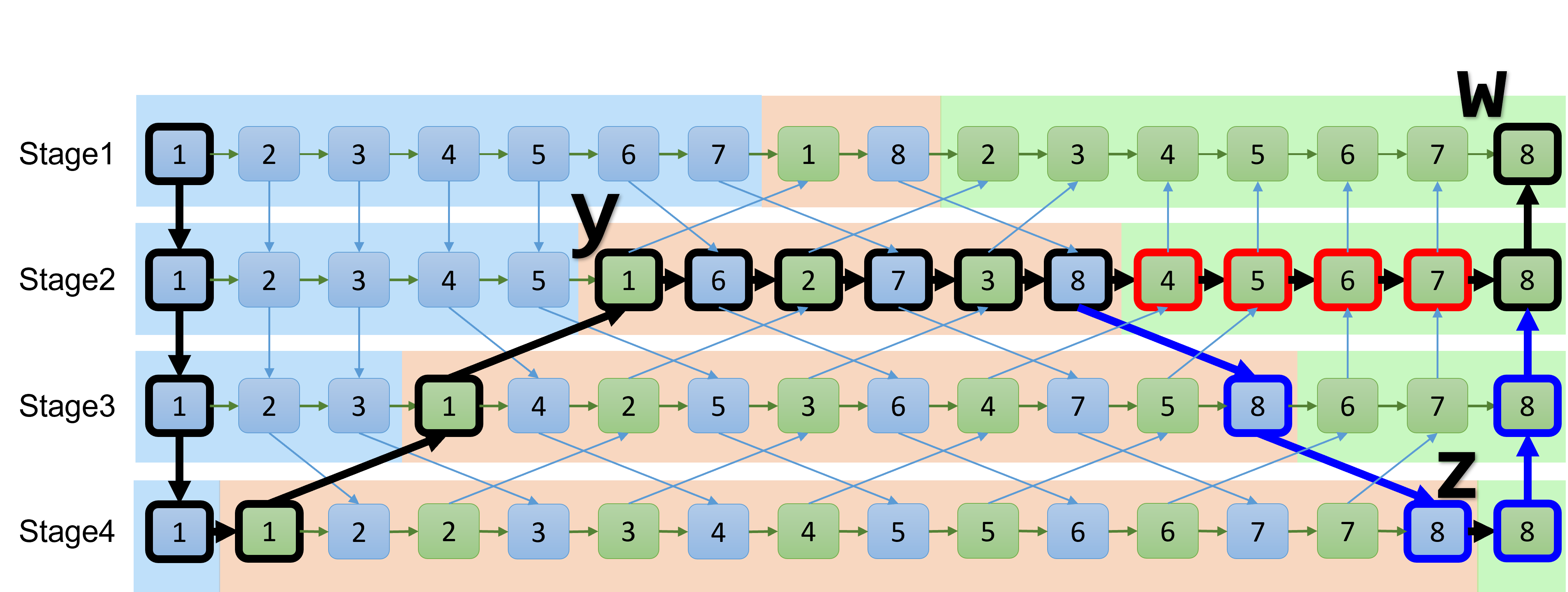}
	\caption{Comparing the difference between the path $\mathcal{P}(x,y,z)$ and the path going directly from stage $y$ to ending phase, where $z>y$.} 
	\label{Fig.ZeqY}
\end{figure*}

\begin{proof}
	We first prove that there exists a critical path satisfying $y \le z$. Suppose, for contradiction, that a critical path $\mathcal{P}(x,y,z)$ satisfies $y>z$.
	
	Since $y>z$, after the last rightward segment in the steady phase, the path must move continuously upward from stage $y$ to stage $z$. After entering the ending phase at stage $z$, by Lemma~\ref{lemma.ending}, the ending-phase subpath can be adjusted such that its rightward steps are located at a stage $w$, where
	\begin{equation}
		F_w = \max_{1 \le i \le z} F_i .
	\end{equation}
	Thus, the path continues moving upward from stage $z$ to stage $w$ before taking rightward steps in the ending phase.
	
	By Lemma~\ref{lemma.ending_replace}, we have
	\begin{equation}
		B_w \ge F_y + B_y .
	\end{equation}
	Therefore, replacing one forward-backward pair at stage $y$ in the steady phase by one backward computation at stage $w$ does not decrease the path cost. We can repeatedly apply this replacement until the path can no longer exit the steady phase at stage $z$.

	Next, we further replace one forward-backward pair at stage $y$ by one forward computation and one backward computation at stage $w$. Since $F_w>0$, we have
	\begin{equation}
		F_w+B_w > F_y+B_y .
	\end{equation}
	Hence, this replacement strictly increases the path cost while preserving the DAG dependencies, as illustrated in Figure~\ref{Fig.endingReplace}. This contradicts the assumption that $\mathcal{P}(x,y,z)$ is a critical path. Therefore, the case $y>z$ is impossible, and there exists a critical path satisfying
	\begin{equation}
		y \le z .
	\end{equation}
	
	It remains to prove that there exists a critical path satisfying $y \le x$. This part follows by a symmetric argument on the warmup phase. Specifically, if $y>x$, then before reaching the rightward segment at stage $y$ in the steady phase, the path must move continuously downward from stage $x$ to stage $y$. By applying the analogous replacement argument in the warmup phase, one can construct a path with strictly larger cost, again contradicting the criticality of $\mathcal{P}(x,y,z)$. Therefore, the case $y>x$ is also impossible, and we have
	\begin{equation}
		y \le x .
	\end{equation}
	
	Combining the two results, there exists a critical path $\mathcal{P}(x,y,z)$ such that $y\le x$ and $y\le z$. Moreover, by Theorem~\ref{theorem.Pxyz}, stage $y$ has the maximum forward computation time among the stages traversed by the steady-phase subpath. Since $y\le x$ and $y\le z$, this gives
	\begin{equation}
		F_y \ge \max_{i \in [y,\max(x,z)]} F_i .
	\end{equation}
	This completes the proof.
\end{proof}

Furthermore, we show that there exists a critical path $\mathcal{P}(x,y,z)$ with $z=y$, as illustrated in Figure~\ref{Fig.xyzFinal}.
\begin{theorem}
	\label{theorem.yeqz}
	There exists a critical path $\mathcal{P}(x,y,z)$ such that
	\vspace{-0.5em}
	\[
	z = y \le x .
	\vspace{-0.5em}
	\]
	Moreover, it satisfies
	\[
	F_y \ge \max_{i \in [y,x]} F_i .
	\]
\end{theorem}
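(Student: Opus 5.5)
We start from the critical path $\mathcal{P}(x,y,z)$ given by Theorem~\ref{theorem.yleqxz}, so that $y\le x$, $y\le z$ and $F_y\ge \max_{i\in[y,\max(x,z)]}F_i$. If $z=y$ we are done, since the ``moreover'' part then reads $F_y\ge\max_{i\in[y,x]}F_i$, which follows directly from Theorem~\ref{theorem.yleqxz}. So we assume $z>y$. In the steady phase the path then runs rightward along stage $y$ and afterwards descends from $y$ to $z$. It enters the ending phase at $z$, and by Lemma~\ref{lemma.ending} it climbs to a stage $w$ with $F_w=\max_{1\le i\le z}F_i$, moves rightward there, and finally climbs to stage $1$. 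Because $F_y\ge F_i$ for $i\in[y,z]$, we can take $w\le y$ (ties are broken toward $y$ or above). Our goal is to build another path that exits the steady phase directly at stage $y$ and whose cost is no smaller. This is the comparison sketched in Figure~\ref{Fig.ZeqY}.

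\textbf{The exchange.} We compare the original path with a modified one that stays at stage $y$ longer and skips the detour down to $z$. Let $h=z-y$. In the original path, the descent $y\to z$ in the steady phase and the later ascent $z\to y$ in the ending phase traverse stages $y+1,\dots,z$ twice. Going down contributes forward work $F_i$ plus communication $t_{i-1,i}$. Coming back up contributes backward work $B_i$ plus communication $t_{i,i-1}$. We will account for these per stage, as in the proof of Lemma~\ref{lemma.onlypeak}. The modified path exits the steady phase at $y$. Because of the Eager-1F1B phase lengths (the warmup and ending phases at stage $i$ have $2(N-i)+1$ micro-batches), stage $y$ has $2h$ more ending-phase backward steps than stage $z$, and correspondingly the grid geometry gives the modified path extra rightward steps at $y$. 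We plan to count these exactly from the phase boundaries: roughly $2h$ additional $(F_y+B_y)$ pairs in the steady phase, or equivalently extra $B_y$ steps in the ending phase. Alternatively, the extra budget may appear as rightward steps at $w$ in the ending phase, in which case Lemma~\ref{lemma.ending_replace} shows they are worth at least $F_y+B_y$ each.

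\textbf{Cost inequality.} The detour costs at most $\sum_{i=y+1}^{z}\bigl(F_i+B_i+t_{i-1,i}+t_{i,i-1}\bigr)\le h\bigl((1+r)F_y+(r-1)F_y\bigr)=2rhF_y$. This uses $F_i\le F_y$ on $[y,z]$ from Theorem~\ref{theorem.yleqxz}, Assumption~\ref{assump.BFratio}, and Assumption~\ref{assump.ComuniTime}. The replacement gains at least $h(F_y+B_y)=h(1+r)F_y$ in the conservative count. If the count yields $2h$ backward steps, the gain is $2hB_y=2rhF_y$. Either way the new path is at least as long as the old one, so it is also critical and has $z=y$. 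Since we only shortened the vertical extent, the new steady-phase range is $[y,x]$, and $F_y\ge\max_{i\in[y,x]}F_i$ is inherited.

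\textbf{Main obstacle.} The hard step is the exact step count. We must show that the modified path is a valid path in the DAG, respecting the phase boundaries at stages $y$ through $z$. We must also show that it gains enough rightward steps at stage $y$, or at stage $w$, to dominate $2rhF_y$ rather than just $h(1+r)F_y$. If only the weaker count is available, we will instead invoke Assumption~\ref{assump.DifferBound}, which gives $F_i\ge\alpha F_y$ with $\alpha\ge r/(1+r)$. That assumption lets us compare $B_y$-type gains against the mixed forward and backward losses. We would first verify the count on the $N=4$, $M=8$ example of Figure~\ref{Fig.PPgraph} and then generalize.
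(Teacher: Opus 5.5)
Your exchange is the same one the paper uses: cut the descent $y\to z$ and the re-ascent $z\to w$ by leaving the steady phase directly at stage $y$, with $w\le y$ chosen as the smallest maximiser. Your bound of $2rhF_y$ on the detour is also correct. The gap is the step you flag yourself: you never determine what the shortcut gains, and most of the candidates you offer are wrong or insufficient.
\begin{itemize}
\item \emph{No extra steady-phase pairs.} Downward edges $F_{s,m}\to F_{s+1,m}$ keep the micro-batch index. After descending to $z$, the canonical path exits at once, and the only edge from stage $z$'s steady phase into its ending phase starts at $F_{z,M}$. So the original path already leaves stage $y$ at $F_{y,M}$, having traversed all of stage $y$'s steady phase, exactly as the shortcut does. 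There are no extra $(F_y+B_y)$ pairs there.
\item \emph{Lemma~\ref{lemma.ending_replace} does not apply.} Its hypothesis (after the last steady rightward segment the path moves upward from $y$ to $w$) fails for the original path, which first descends to $z$.
\item \emph{The conservative count is too weak.} The gain $h(1+r)F_y$ does not dominate the loss. With $F_i=F_y$ on $[y,z]$ and communication at the cap of Assumption~\ref{assump.ComuniTime}, the detour equals $2rhF_y>h(1+r)F_y$, so your ``either way'' is false.
\item \emph{The fallback does not help.} Assumption~\ref{assump.DifferBound} gives $F_i\ge\alpha F_y$, which bounds the loss from below, not from above.
\end{itemize}

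The missing idea is the geometry of the ending phase, where the upward edges $B_{i,m}\to B_{i-1,m}$ also keep the micro-batch index.
\begin{itemize}
\item The original path enters the ending phase at $B_{z,M-2(N-z)}$, reaches stage $w$ at $B_{w,M-2(N-z)}$, and then has $2(N-z)$ rightward steps to $B_{w,M}$.
\item The shortcut goes $F_{y,M}\to B_{y,M-2(N-y)}$ and climbs to $B_{w,M-2(N-y)}$; these are ending-phase nodes because $w\le y$. It then has $2(N-y)$ rightward steps.
\end{itemize}
Everything else contributes identical weights: the prefix through $F_{y,M}$, the nodes $B_y,\dots,B_w$ with edges $t_{y,y-1},\dots,t_{w+1,w}$, and the final climb to stage $1$. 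So the change in cost is exactly $\sum_{i=y+1}^{z}\bigl(2B_w-(F_i+B_i+t_{i-1,i}+t_{i,i-1})\bigr)$. Each term is nonnegative because $F_i+B_i+t_{i-1,i}+t_{i,i-1}\le 2rF_i\le 2rF_w=2B_w$, and this is the paper's computation. Your ``$2h$ backward steps'' branch is therefore the right one; placing the $2h$ extra steps at stage $y$ instead of $w$ gives $2hB_y=2rhF_y$, which also matches your bound. But it has to be derived from this count. Once it is, the rest of your argument goes through, including the inherited maximality of $F_y$ on $[y,x]$.
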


\begin{proof}
	We prove the theorem by contradiction. By Theorem~\ref{theorem.yleqxz}, there exists a critical path $\mathcal{P}(x,y,z)$ such that $y\le x$ and $y\le z$. Suppose, for contradiction, that no such critical path satisfies $z=y$. Then every critical path of this form must have $z>y$.
	
	Consider such a critical path $\mathcal{P}(x,y,z)$ with $z>y$. By Theorem~\ref{theorem.yleqxz}, we have
	\begin{equation}
		F_y \ge \max_{i\in [y,z]} F_i .
	\end{equation}
	By Lemma~\ref{lemma.ending}, the ending-phase subpath can be adjusted such that its rightward steps are located at a stage $w$, where
	\begin{equation}
		F_w = \max_{1\le i\le z}F_i .
	\end{equation}
	We choose $w$ as the smallest stage satisfying the above equality. Since stage $y$ is also no smaller than all stages in $[y,z]$ in terms of forward computation time, we have $w\le y$ and $F_w\ge F_y$.

\begin{figure*}[t] 
	\centering 
	\includegraphics[width=0.75\textwidth]{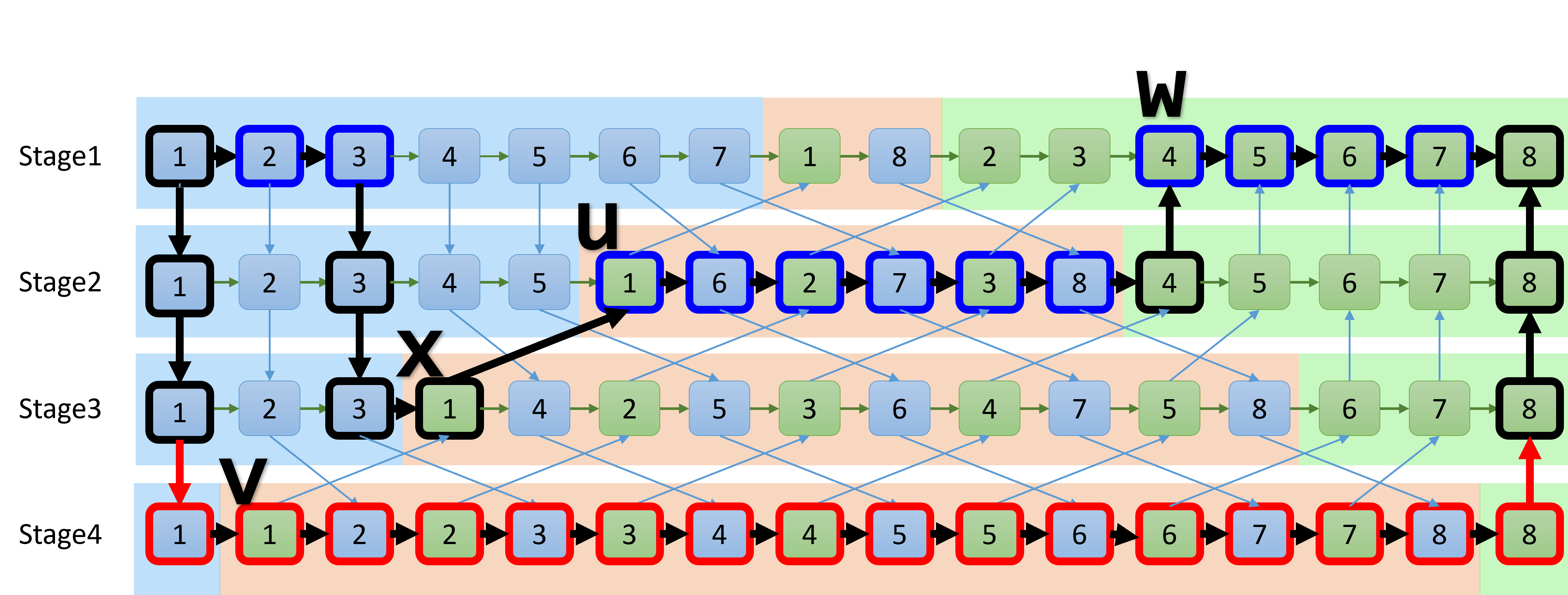}
	\caption{The example of comparing the difference between the path $K(x,u,w)$(blue) and the path $K(v,v,w)$(red) when $N=4,u=2,v=4,w=1,x=3$.}
	\label{Fig.compare}
\end{figure*}	
	
	Now consider the portion of $\mathcal{P}(x,y,z)$ from the last rightward segment in the steady phase to the rightward segment in the ending phase. Since $z>y$, this portion first moves downward from stage $y$ to stage $z$, enters the ending phase, and then moves upward from stage $z$ to stage $w$. We replace this subpath with one that exits the steady phase directly from stage $y$, as shown in Figure~\ref{Fig.ZeqY}. This replacement changes the exit stage of the steady phase from $z$ to $y$.
	
	The gain of this replacement is
	\begin{equation}
		\sum_{i=y+1}^{z}
		\left(
		2B_w -
		\left(F_i+B_i+t_{i-1,i}+t_{i,i-1}\right)
		\right).
	\end{equation}
	Since $F_w\ge F_y\ge F_i$ for all $i\in[y,z]$, and by Assumption~\ref{assump.ComuniTime}, the communication cost is bounded by the computation cost. Therefore, each term in the summation is non-negative, and the replacement does not decrease the total path cost.
	
	Thus, we obtain another critical path whose steady-phase exit stage becomes $z=y$, contradicting the assumption that no critical path of this form satisfies $z=y$. Therefore, there exists a critical path $\mathcal{P}(x,y,z)$ such that $z=y\le x$.

	Finally, since $z=y$, the maximality property in Theorem~\ref{theorem.yleqxz} reduces to
	\begin{equation}
		F_y \ge \max_{i\in[y,x]}F_i .
	\end{equation}
	This completes the proof.

\end{proof}

Therefore, we obtain the following theorem.

\begin{theorem}
	\label{theorem.ExisKxyw}
	There exist $x$, $y$, and $w$ such that $K(x,y,w)$ is a critical path.
\end{theorem}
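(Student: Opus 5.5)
The plan is to assemble the preceding structural results into a single normal form. I start from an arbitrary critical path. Theorem~\ref{theorem.yeqz} gives a critical path $\mathcal{P}(x,y,z)$ with $z=y\le x$ and $F_y\ge\max_{i\in[y,x]}F_i$. I then normalize the three phases in turn, and at each step check that the cost does not decrease and that the already-normalized phases are not disturbed.

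\textbf{Steady phase.} By Theorem~\ref{theorem.Pxyz}, specialized to $z=y\le x$, the steady-phase portion enters at stage $x$. It then moves upward from $x$ to $y$, moves rightward along $y$, and exits directly at $y$. In the language of Definition~\ref{def.Kxyw}, this is ``move right by one step into the steady phase, move upward to $y$, move right until entering the ending phase.'' The side condition $y=\arg\max_{i\in[y,x]}F_i$ is exactly the inequality $F_y\ge\max_{i\in[y,x]}F_i$; when there are ties, I pick the appropriate representative.

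\textbf{Ending phase.} I apply Lemma~\ref{lemma.ending} with entry stage $z=y$. This relocates all ending-phase rightward steps to a stage $w$ with $F_w=\max_{1\le i\le y}F_i$, so the ending portion becomes: up from $y$ to $w$, right along $w$ to $B_{w,M}$, then up to $B_{1,M}$.

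\textbf{Warmup phase.} I use the symmetric version of Lemma~\ref{lemma.ending} stated after it. This collects all warmup rightward steps at a stage $m$ with $F_m=\max_{1\le i\le x}F_i$. The warmup portion then reads: down from $F_{1,1}$ to stage $m$, right along $m$, down to $x$. The counting of $2(N-x)$ rightward steps comes from the Eager-1F1B warmup length $2(N-i)+1$: entering the steady phase at stage $x$ forces exactly that many intra-stage steps before the final step into the steady phase.

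\textbf{Main obstacle: reconciling $m$ and $w$.} Definition~\ref{def.Kxyw} uses a single $w$ with $w=\arg\max_{i\in[1,x]}F_i$ for both the warmup and ending rightward segments. The ending-phase relocation, however, only yields a maximizer over $[1,y]$. Since $F_y\ge F_i$ for all $i\in[y,x]$, we have
\[
\max_{[1,x]}F_i=\max\Bigl(\max_{[1,y]}F_i,\;F_y\Bigr)=\max_{[1,y]}F_i,
\]
so the two maxima coincide. I then choose $w=m$ as the smallest index attaining the common maximum; this also gives $w\le y$, so the vertical moves have the stated directions. I also need to verify that the relocations in the warmup and ending phases are cost-nonincreasing-safe. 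Each one moves rightward steps to a stage whose forward (respectively backward) weight is at least as large, using $B_i=rF_i$ from Assumption~\ref{assump.BFratio}. None of them touches the steady-phase segment, so the steady-phase structure fixed earlier survives. The resulting path is $K(x,y,w)$ and has cost at least that of the original critical path, so it is itself critical.
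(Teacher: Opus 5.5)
Your proposal is correct and follows the same route as the paper, whose proof combines Theorem~\ref{theorem.yeqz} (a steady phase with $z=y\le x$) with Lemma~\ref{lemma.ending} (the ending phase) and reads off the form of $K(x,y,w)$. Your explicit use of the symmetric warmup relocation, and your observation that $\max_{1\le i\le x}F_i=\max_{1\le i\le y}F_i$ (so a single stage $w\le y$ serves both the warmup and the ending segments), supply details that the paper's one-line proof leaves implicit.
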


\begin{proof}
	By Theorem~\ref{theorem.yeqz} and Lemma~\ref{lemma.ending}, there exists a critical path whose steady-phase and ending-phase structures satisfy the properties stated in Definition~\ref{def.Kxyw}. Therefore, this critical path can be represented as $K(x,y,w)$ for some $x$, $y$, and $w$.
\end{proof}

Figure~\ref{Fig.Kxyw} illustrates the structure of $K(x,y,w)$ defined in Definition~\ref{def.Kxyw}. Theorem~\ref{theorem.ExisKxyw} reduces the critical path to a concise canonical form, which facilitates the subsequent analysis.

\subsection{Ridge-like Allocation Guarantee}
\label{sec:ridge_like_allocation_guarantee}

Based on the canonical critical-path structure identified above, we next establish the ridge-like allocation guarantee for layer-level pruning.

\begin{definition}
	Let $C(\mathcal{P})$ denote the cost of a path $\mathcal{P}$ in the DAG.
\end{definition}

\begin{theorem}
	\label{theorem.exclusion}
	For any pair of stages $u$ and $v$ satisfying $1 \le u < v \le N$, if $F_u \le F_v$, then there exist $x$, $y$, and $w$, with $y \neq u$, such that $K(x,y,w)$ is a critical path.
\end{theorem}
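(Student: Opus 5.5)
Starting from Theorem~\ref{theorem.ExisKxyw}, I would take a critical path $K(x,y,w)$ and assume $y=u$; if $y\neq u$ there is nothing to prove. The goal is then to build another path $K(x',y',w')$ with $y'\neq u$ whose cost is no smaller. Being no shorter than a critical path, it is itself critical.

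Given $y=u$, Definition~\ref{def.Kxyw} and Theorem~\ref{theorem.yeqz} give $w\le u\le x$ and $F_u=\max_{i\in[u,x]}F_i$. I split on the position of $v$.

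\textbf{Case $v\le x$.} Here $v\in[u,x]$, so $F_v\le F_u$. Together with the hypothesis $F_u\le F_v$ this forces $F_u=F_v$. Now relocate the steady-phase rightward segment from stage $u$ to stage $v$. The path enters the steady phase at $x$, climbs to $v$ instead of $u$, runs rightward along $v$, and then climbs through $u$ to $w$ in the ending phase. This is the transfer used in Lemma~\ref{lemma.onlyplateau}.

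\begin{itemize}
\item Each forward--backward pair moved from $u$ to $v$ has equal cost, since $(1+r)F_u=(1+r)F_v$.
\item The vertical moves between $u$ and $v$ now happen later, in the ending-phase ascent, rather than in the steady phase. This exchanges the detour $\sum_{i=u}^{v-1}(F_i+B_i+t_{i,i+1}+t_{i+1,i})$ for extra rightward backward steps.
\item Because the ending-phase rightward stage $w$ satisfies $F_w\ge F_u$, the bound $B_w\ge F_y+B_y$ from Lemma~\ref{lemma.ending_replace} together with Assumption~\ref{assump.ComuniTime} should show the exchange does not lose cost.
\end{itemize}

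Concretely, I would compare $C(K(x,u,w))$ and $C(K(x,v,w))$ term by term, as in the $z=y$ replacement of Theorem~\ref{theorem.yeqz}.

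\textbf{Case $v>x$.} I would compare $K(x,u,w)$ with $K(v,v,w)$, the comparison shown in Figure~\ref{Fig.compare}. The path $K(v,v,w)$ descends further in the warmup phase to stage $v$, enters the steady phase there, and runs rightward along $v$ with no vertical detour. It then climbs directly in the ending phase to $w$. Its parameters are consistent with Definition~\ref{def.Kxyw} provided $F_v$ is maximal on its own range; if it is not, I would replace $v$ by the argmax of $F$ on $[v,\cdot]$, which still differs from $u$.

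The cost difference breaks into three parts:

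\begin{enumerate}
\item Extra warmup descent from $x$ to $v$. This adds forward costs and communication but removes warmup rightward steps, because the number $2(N-x)$ of warmup moves shrinks to $2(N-v)$.
\item Steady-phase pairs $(1+r)F_u$ replaced by $(1+r)F_v\ge(1+r)F_u$. The count of steady rightward steps also shifts, since the steady phase at stage $v$ has a different length than at stage $u$.
\item Removal of the vertical descent and ascent between $x$ and $u$ inside the steady phase.
\end{enumerate}

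The main obstacle is this bookkeeping: the steady-phase length depends on the stage index, $M-2(N-i)-1$, and that dependence must be offset against the warmup and ending step counts. I expect Assumption~\ref{assump.DifferBound}, i.e.\ $F_{\min}\ge\alpha F_{\max}$ with $\alpha\ge\max(2/3,r/(1+r))$, to be exactly what makes the net difference nonnegative. Each warmup forward step of cost at least $\alpha F_{\max}$ lost must be dominated by gained backward or steady work, and $\alpha\ge r/(1+r)$ gives $F_{\min}+B_{\min}\ge rF_{\max}$ type inequalities. Assumption~\ref{assump.ComuniTime} absorbs the communication edges, and $M\ge2N$ guarantees the steady phase at stage $v$ is nonempty, so the rightward segment exists.

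I would first verify the algebra for the simplest subcase $w=u$, $x=u$, and then generalize by adding the shared segments of the two paths, which cancel in the difference.
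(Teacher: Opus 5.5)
\textbf{Case $v\le x$.} Splitting on the position of $v$ is a good idea. The paper's proof compares only with $K(v,v,w)$ and uses $u\le x<v$ in its last step, so it passes over $v\le x$. You correctly see that in that case $F_u=F_v$ is forced and that $K(x,v,w)$ is the natural competitor. However, the justification you give for this case fails as written.

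The paths $K(x,u,w)$ and $K(x,v,w)$ contain exactly the same vertical steps: every stage of $[1,x]$ once forward and once backward. Postponing the ascent from $v$ to $u$ into the ending phase does not change its cost. There is no detour $\sum_{i=u}^{v-1}(F_i+B_i+t_{i,i+1}+t_{i+1,i})$ being traded, and Assumption~\ref{assump.ComuniTime} plays no role. What actually changes is that $2(v-u)$ ending-phase backward steps at $w$ are replaced by $2(v-u)$ steady-phase forward--backward pairs at $v$, so
\begin{align*}
C\bigl(K(x,v,w)\bigr)-C\bigl(K(x,u,w)\bigr)
&=2(v-u)\bigl((1+r)F_v-rF_w\bigr)\\
&=2(v-u)\bigl(F_v+B_v-B_w\bigr).
\end{align*}
The bound $B_w\ge F_y+B_y=F_v+B_v$ from Lemma~\ref{lemma.ending_replace} that you invoke makes this difference \emph{nonpositive}. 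It shows the relocated path is no longer, which is the opposite of what you need; $F_w\ge F_u$ likewise works against you. The correct ingredient is Assumption~\ref{assump.DifferBound}: $F_v\ge\alpha F_w\ge\frac{r}{1+r}F_w$, hence $(1+r)F_v\ge rF_w$. Wherever Lemma~\ref{lemma.ending_replace} applies, the two inequalities together force the difference to be exactly $0$.

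\textbf{Case $v>x$.} Here you follow the paper's route, comparing with $K(v,v,w)$. However, your inventory of the difference is wrong, and the decisive inequality is only anticipated, not derived. Nothing between $x$ and $u$ is removed: $K(x,u,w)$ only ascends from $x$ to $u$ in the steady phase, and $K(v,v,w)$ makes the same ascent during its ending-phase climb. Reading off the cost formula of $K(x,y,w)$, the difference $C(K(v,v,w))-C(K(x,u,w))$ is
\begin{align*}
&\sum_{i=x+1}^{v}\bigl((1+r)F_i+t_{i-1,i}+t_{i,i-1}\bigr)-2(v-x)F_w\\
&\quad+\bigl(M-2(N-u)-1\bigr)(1+r)(F_v-F_u)\\
&\quad+2(v-u)\bigl((1+r)F_v-rF_w\bigr).
\end{align*}
Here $-2(v-x)F_w$ is the lost warmup steps at $w$. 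The last term records both the $2(v-u)$ extra steady pairs at $v$ and the $2(v-u)$ ending-phase steps lost at $w$.

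The communication terms enter with a plus sign and are simply dropped, so Assumption~\ref{assump.ComuniTime} is not needed. The $(F_v-F_u)$ term is also dropped, since its coefficient is positive by $M\ge 2N$. Set $a=v-x$ and $b=v-u$, and use $F_i\ge\alpha F_w$ on $[x+1,v]$ and $F_v\ge\alpha F_w$. Then the difference is at least $F_w\bigl[a(\alpha(1+r)-2)+b(2\alpha(1+r)-2r)\bigr]$. The condition $\alpha\ge r/(1+r)$ makes the coefficient of $b$ nonnegative, and $b\ge a$ holds because $u\le x$ (this is exactly where that hypothesis enters). So the bound is at least $a(1+r)(3\alpha-2)F_w\ge 0$, using $\alpha\ge 2/3$. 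Your remark about $\alpha\ge r/(1+r)$ covers only half of this; combining it with $b\ge a$ and the $2/3$ threshold is the actual content of the proof.

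Finally, the Definition-consistency issue is not where you put it. For $K(v,v,w)$ the condition on the middle index is trivial; what can fail is $w=\arg\max_{[1,v]}F_i$. Re-choosing $w$ as that argmax only increases the cost, so there is no need to move $v$.
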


\begin{proof}
	Consider any pair of stages $u$ and $v$ such that $1 \le u < v \le N$ and $F_u \le F_v$. Suppose that $K(x,u,w)$ is a critical path. It suffices to show that
	\begin{equation}
		\label{inequal.cost}
		C\left(K(x,u,w)\right) \le C\left(K(v,v,w)\right).
	\end{equation}
	Indeed, if Eq.~\eqref{inequal.cost} holds, then $K(v,v,w)$ has a cost no smaller than that of the critical path $K(x,u,w)$, and hence $K(v,v,w)$ is also a critical path. Since its middle index is $v \neq u$, this gives the desired critical path. Figure~\ref{Fig.compare} illustrates this comparison.
	
	We prove Eq.~\eqref{inequal.cost} through the following steps.
	
	\begin{enumerate}[left=2pt,itemsep=0pt,topsep=0pt,parsep=0pt,partopsep=0pt,nosep,labelsep=0.3em]
		\item From Definition~\ref{def.Kxyw}, the cost of $K(x,y,w)$ can be written as
		\begin{align}
			C(K(x,y,w))
			&= \underbrace{
				\left(\sum_{i=1}^{x-1}
				\left(F_i+B_i+t_{i,i+1}+t_{i+1,i}\right)\right)
				+F_x+B_x
			}_{\text{downward and upward steps}} \label{formula.cost} \\
			&\quad + \underbrace{
				2(N-x)F_w + 2(N-y)B_w
			}_{\text{rightward steps in stage } w}  \notag \\
		    &\quad + \underbrace{
				\left(M-2(N-y)-1\right)(F_y+B_y)
			}_{\text{rightward steps in stage } y}. \notag
		\end{align}
		
		\item To prove Eq.~\eqref{inequal.cost}, we first use the following equivalent form:
	\begin{align}\label{inequal.diff}
		\begin{split}
			& C\left(K(x,u,w)\right) \leq C\left(K(v,v,w)\right)\\
			\iff & C\left(K(v,v,w)\right) - C\left(K(x,u,w)\right) \geq 0.
		\end{split}
	\end{align}
		Substituting Eq.~\eqref{formula.cost} and $B_i=rF_i$ for $1\le i\le N$ into Eq.~\eqref{inequal.diff}, we obtain the following equivalent condition:
		\begin{align}
			\label{inequal.diffDetail}
			\begin{split}
				0
				&\le
				\left(
				\sum_{i=x+1}^{v}
				\left((1+r)F_i+t_{i-1,i}+t_{i,i-1}\right)
				\right)
				-2(v-x)F_w \\
				&\quad
				+\left(M-2(N-u)-1\right)(1+r)(F_v-F_u) \\
				&\quad
				+2(v-u)\left((1+r)F_v-rF_w\right).
			\end{split}
		\end{align}
		Therefore, proving Eq.~\eqref{inequal.diffDetail} is equivalent to proving Eq.~\eqref{inequal.cost}.
		
		\item Next, consider the following sufficient condition:
		\begin{align}
			\label{inequal.B}
			\begin{split}
				0
				&\le
				\alpha(v-x)(1+r)F_w
				-2(v-x)F_w \\
				&\quad
				+2(v-u)\left((1+r)F_v-rF_w\right).
			\end{split}
		\end{align}
		We have Eq.~\eqref{inequal.B} $\implies$ Eq.~\eqref{inequal.diffDetail}, because the right-hand side of Eq.~\eqref{inequal.B} is no larger than the right-hand side of Eq.~\eqref{inequal.diffDetail}. This follows from the following facts:
		\begin{itemize}[left=2pt,itemsep=0pt,topsep=0pt,parsep=0pt,partopsep=0pt,nosep,labelsep=0.3em]
			\item $F_v-F_u \ge 0$;
			\item $F_i \ge \alpha F_w$ for all $i\in [x+1,v]$;
			\item $t_{i,i-1}+t_{i-1,i} \ge 0$ for all $i\in [x+1,v]$.
		\end{itemize}
		
		\item By rearranging the terms in Eq.~\eqref{inequal.B}, moving the terms involving $F_w$ to the left-hand side and the terms involving $F_v$ to the right-hand side, and considering the sign of the coefficient, Eq.~\eqref{inequal.B} can be equivalently written as
		\begin{align}
			\label{inequal.C}
			F_w
			\le
			\left(
			\frac{
				2(v-u)(1+r)
			}{
				2(v-x)+2r(v-u)-\alpha(1+r)(v-x)
			}
			\right)F_v .
		\end{align}
		Thus, Eq.~\eqref{inequal.C} is equivalent to Eq.~\eqref{inequal.B}.
		
		\item Since $u\le x < v$, it can be verified that when
		\[
		\alpha=\max\left(\frac{2}{3},\frac{r}{1+r}\right),
		\]
		the following inequality holds:
		\begin{equation}
			\label{Formu.Key}
			\frac{1}{\alpha}
			\le
			\frac{
				2(v-u)(1+r)
			}{
				2(v-x)+2r(v-u)-\alpha(1+r)(v-x)
			}.
		\end{equation}
		Moreover, by Assumption~\ref{assump.DifferBound}, we have
		\[
		F_w \le \frac{1}{\alpha}F_v .
		\]
		Therefore, Eq.~\eqref{Formu.Key} implies Eq.~\eqref{inequal.C}.
		
		\item Finally, Assumption~\ref{assump.DifferBound} gives
		\begin{equation}
			\label{equal.assm}
			\alpha = \max\left(\frac{2}{3},\frac{r}{1+r}\right).
		\end{equation}
	\end{enumerate}
	
	Combining the above implications, we have
	\[
	\eqref{equal.assm}
	\implies
	\eqref{Formu.Key}
	\implies
	\eqref{inequal.C}
	\iff
	\eqref{inequal.B}
	\implies
	\eqref{inequal.diffDetail}
	\iff
	\eqref{inequal.cost}.
	\]
	Therefore, Eq.~\eqref{inequal.cost} holds, and the theorem follows.
\end{proof}

Theorem~\ref{theorem.exclusion} shows that when there exist two stages $u$ and $v$ with $F_u \le F_v$, the case $y=u$ can be safely excluded from the analysis of the critical path $K(x,y,w)$.

Before proving Theorem~\ref{theorem.Unimodal}, we present two supporting lemmas. For any layer allocation scheme $l_1,l_2,\cdots,l_N$, we focus on the stages $s_1,s_2,\cdots,s_k$ that are assigned to device groups of the same type, and consider their corresponding layer sequence $l_{s_1},l_{s_2},\cdots,l_{s_k}$.

%
\begin{lemma}
	\label{lemma.swap}
	Given the layer sequence $l_{s_1},l_{s_2},\cdots,l_{s_k}$, if device memory constraints are temporarily ignored and the sequence is not monotonically non-increasing, then the following adjustment procedure can transform it into a monotonically non-increasing sequence without increasing the cost of the critical path in the DAG.
\end{lemma}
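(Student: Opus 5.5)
The adjustment procedure is a bubble-sort of adjacent inversions within the same-type subsequence. As long as the sequence $l_{s_1},\dots,l_{s_k}$ is not non-increasing, pick an index $j$ with $l_{s_j}<l_{s_{j+1}}$ and swap the layer counts of stages $u=s_j$ and $v=s_{j+1}$. All other stages keep their layer counts. Since both stages are served by device groups of the same type, they have identical per-layer times $f_u=f_v$, $b_u=b_v$, and identical profiled communication times. Hence the swap simply exchanges $F_u\leftrightarrow F_v$ and $B_u\leftrightarrow B_v$, and leaves every other node and edge weight unchanged. Each swap strictly reduces the number of inversions, so the procedure terminates in a non-increasing sequence. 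It therefore suffices to show that a single swap does not increase the critical-path cost.

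To control a single swap, I would use the canonical form from Theorem~\ref{theorem.ExisKxyw}: the critical path length equals $\max_{x,y,w} C(K(x,y,w))$ over admissible triples. This maximum is taken over all canonical paths, since every $K(x,y,w)$ is a genuine path in the DAG and some critical path has this form. Let $\pi$ be the allocation after the swap. Then $F^{\pi}_u>F^{\pi}_v$ (before the swap $F_u<F_v$), and in particular $F_u\le F_v$ held before. For any canonical path $K(x,y,w)$ in the swapped DAG, I would show its cost is at most the critical-path cost in the original DAG.

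The comparison uses the explicit cost formula~\eqref{formula.cost}. It splits into three parts. The first is the vertical sum over stages $1..x$. It is symmetric in $u,v$ whenever both or neither of $u,v$ lie in $[1,x]$, and otherwise it depends only on which of the two values sits in range. The other two parts are the $w$-terms and the $y$-terms. I would case on the roles of $u,v$:
\begin{itemize}
\item If $\{y,w\}\cap\{u,v\}=\emptyset$ and $x\notin[u,v)$, the cost is unchanged.
\item If the swapped path uses $v$ as $y$ or $w$, it now carries the smaller value. Compare it with the same path in the original DAG, whose cost is at least as large term by term.
\item If the swapped path uses $u$ as $y$ or $w$, it carries the larger value $F_v$ at the earlier index $u$. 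Compare it with the original path obtained by relabeling $u\to v$, e.g.\ $K(\max(x,v),v,w)$. This is exactly the comparison carried out in Theorem~\ref{theorem.exclusion}: moving the steady-phase plateau to the later stage $v$ with the same large value loses nothing, because $2(N-y)$ decreases while the count $(M-2(N-y)-1)$ of $(F_y+B_y)$ steps increases. The extra vertical terms are absorbed using $F_i\ge\alpha F_w$ and Assumption~\ref{assump.ComuniTime}.
\end{itemize}

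The main obstacle is the third case, and in particular the sub-case $u\le x<v$. There the vertical prefix of the swapped path contains the large value $F_v$ at position $u$, while the original path would reach it only by extending $x$ to $v$. I would first redo the inequality chain \eqref{inequal.diffDetail}--\eqref{Formu.Key} with $F_u,F_v$ interchanged, relying on Assumption~\ref{assump.DifferBound} with $\alpha=\max(2/3,r/(1+r))$ and on $M\ge 2N$ to dominate the loss. If that fails in some corner, for example when $w\in\{u,v\}$ as well, I would fall back on symmetry: the warmup and ending contributions $2(N-x)F_w+2(N-y)B_w$ only depend on the maximal value among stages $\le x$, and this maximum is unchanged by the swap when $u,v\le x$.
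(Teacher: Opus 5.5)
\textbf{There is a genuine gap, and it cannot be closed under the paper's assumptions.}

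Your reduction to adjacent transpositions and to $T=\max_{w\le y\le x}C(K(x,y,w))$ is fine. But your case split omits exactly the case where the argument breaks: $u\le x<v$ with $y,w\notin\{u,v\}$.

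In that case stage $u$ lies in the vertical prefix of $K(x,y,w)$ and stage $v$ does not. So the swap raises the cost of this path by exactly $(1+r)(F_v-F_u)$, and no $y$- or $w$-term compensates. In Theorem~\ref{theorem.exclusion} the compensation comes from the gain $2(v-u)((1+r)F_v-rF_w)$ obtained by moving $y$ to $v$; that gain is absent here. The natural dominating path in the original DAG is $K(v,y,w)$. Its margin over the swapped $K(x,y,w)$ is
$(1+r)(F_u+\sum_{i=x+1}^{v-1}F_i)+\sum_{i=x}^{v-1}(t_{i,i+1}+t_{i+1,i})-2(v-x)F_w\ge((1+r)\alpha-2)(v-x)F_w$.
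This bound is nonnegative only when $(1+r)\alpha\ge 2$. For $\alpha=\max(2/3,r/(1+r))$ that means $r\ge 2$. Redoing \eqref{inequal.diffDetail}--\eqref{Formu.Key} does not reach this case, and your symmetry fallback needs $u,v\le x$, which fails here.

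In fact the lemma is false under the stated assumptions. Take $N=3$, $M=6$, $r=1.2$, and zero communication. Let stages $1$ and $3$ be of one type, holding $14$ and $15$ layers of per-layer forward time $0.05$, so $F_1=0.70$ and $F_3=0.75$. Let stage $2$ be of another type with $F_2=1$. All four assumptions hold: $\alpha=2/3$, $F_{\min}/F_{\max}=0.7$, and $M=2N$.
\begin{itemize}
\item An exact longest-path computation on the Eager-1F1B DAG gives $14.74$, attained by $K(2,2,2)$.
\item Now move one layer from stage $3$ to stage $1$. This is both the paper's single-layer step and your swap.
\item $K(2,2,2)$ then costs $14.74+2.2\cdot 0.05=14.85$, and $14.85$ is the new critical-path length.
\end{itemize}
So an extra hypothesis is needed, for example $(1+r)\alpha\ge 2$, which makes the $K(v,y,w)$ comparison above go through.

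The paper's own proof does not supply the missing idea either. Its first case only cites Theorem~\ref{theorem.exclusion}, which restricts which $y$ a critical path must use; it says nothing about the prefix inflation. Its ``direct calculation'' compares $K(x,s_i,w)$ after the move only with $K(x,s_{i+1},w)$ before. It never examines paths like $K(2,2,2)$, which use neither stage as $y$ or $w$.

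Finally, the lemma's procedure moves one layer at a time, so it levels an inverted pair rather than swapping it. Your swap therefore analyzes a different procedure, although the two coincide when the gap is one, as in the example.
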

\noindent \textit{Adjustment procedure.}
\begin{enumerate}[left=2pt,itemsep=0pt,topsep=0pt,parsep=0pt,partopsep=0pt,nosep,labelsep=0.3em]
	\item Identify an index $i$ $(1\le i<k)$ such that $l_{s_i}<l_{s_{i+1}}$. If no such index exists, terminate.
	\item Reallocate one layer from stage $s_{i+1}$ to stage $s_i$:
	\vspace{-0.25em}
	\[
	l_{s_i} \leftarrow l_{s_i}+1, 
	\qquad
	l_{s_{i+1}} \leftarrow l_{s_{i+1}}-1 .
	\]
	\item Repeat from Step 1.
\end{enumerate}

\begin{proof}
	We show that the layer reallocation in Step 2 does not increase the cost of the critical path. There are two cases.
	
	\begin{itemize}[left=2pt,itemsep=0pt,topsep=0pt,parsep=0pt,partopsep=0pt,nosep,labelsep=0.3em]
		\item If $l_{s_i}\le l_{s_{i+1}}$ still holds after the adjustment, then by Theorem~\ref{theorem.exclusion}, the case $y=s_i$ can be excluded from the analysis of the critical path $K(x,y,w)$. Therefore, the cost of the critical path does not increase.
		
		\item If $l_{s_i}>l_{s_{i+1}}$ holds after the adjustment, then the initial difference between the two stages must be exactly one. In this case, by direct calculation, $C(K(x,s_i,w))$ in the DAG after the adjustment is no greater than $C(K(x,s_{i+1},w))$ in the DAG before the adjustment. Hence, the cost of the critical path still does not increase.
	\end{itemize}
	
	Therefore, each adjustment step preserves or reduces the cost of the critical path. Repeating this procedure eventually transforms the sequence into a monotonically non-increasing sequence, which proves the lemma.
\end{proof}

\begin{lemma}
	\label{lemma.memlimit}
	Let $l_i^m$ denote the memory-constrained layer limit of stage $i$ $(1\le i\le N)$. Then the sequence 
	$l^m_{s_1},l^m_{s_2},\cdots,l^m_{s_k}$ is monotonically non-decreasing.
\end{lemma}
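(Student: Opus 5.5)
The plan is to express the memory-constrained limit $l_i^m$ explicitly in terms of a per-stage memory budget. We then show that, within one device type, this budget grows with the stage index while every per-layer coefficient stays fixed.

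Fix a device type and let $\mathcal{M}$ denote the memory capacity of one device group of that type. Since the $s_j$ all use the same device type and TP degree, the per-layer profiled quantities are identical across these stages. These quantities are the parameter, gradient and optimizer-state footprint per layer, which we call $p$, and the activation footprint per layer per in-flight micro-batch, which we call $a$. Any fixed overhead, such as embedding or buffers, we call $c$.

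Under the Eager-1F1B schedule just described, stage $i$ runs a warmup of $2(N-i)+1$ forward micro-batches before its first backward. Hence its peak number of simultaneously stored activations is $n_i = \min\{M, 2(N-i)+1\}$; under plain 1F1B this would be $N-i+1$ instead. In either case $n_i$ is non-increasing in $i$, which is the standard fact cited from \cite{liu2024aceso,jia2022whale,narayanan2019pipedream}. The memory footprint of stage $i$ with $l_i$ layers is then $l_i(p + n_i a) + c$. The largest feasible count is therefore
\[
l_i^m = \left\lfloor \frac{\mathcal{M} - c}{p + n_i a} \right\rfloor .
\]

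For $s_j < s_{j+1}$ we have $n_{s_j} \ge n_{s_{j+1}}$, so the denominator is non-increasing while the numerator is the same for both stages. The ratio is therefore non-decreasing, and since the floor is monotone, $l^m_{s_j} \le l^m_{s_{j+1}}$. Chaining over $j$ gives the claimed monotonicity of $l^m_{s_1},\dots,l^m_{s_k}$.

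The main obstacle is justifying that the numerator and the per-layer coefficients $p, a, c$ are truly stage-independent within a device type. This could fail if the first or last stage carries embedding or LM-head memory, or if TP degrees differ among groups of the same device type. I would handle this by stating the modeling convention the paper implicitly uses: profiling is per layer for a given device type and TP degree, and Poseidon enforces identical structure, so any extra constant on boundary stages is treated as part of the uniform overhead or excluded. If needed, I would note that an extra term on stage $1$ only lowers $l^m_{s_1}$, which preserves non-decreasing order at the front. The analogous term on the last stage would need a separate remark or assumption.
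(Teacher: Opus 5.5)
Your proposal is correct and follows the same route as the paper. Under (Eager-)1F1B, the number of stored activations is non-increasing in the stage index, and stages of the same device type have identical memory capacity, so their layer limits are non-decreasing. Your explicit formula $l_i^m = \lfloor(\mathcal{M}-c)/(p+n_i a)\rfloor$, together with your caveats about boundary-stage embedding/LM-head memory and equal TP degrees, just makes explicit the modeling assumptions that the paper's one-line proof leaves implicit.
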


\begin{proof}
	Under the 1F1B pipeline paradigm, activation storage requirements decrease monotonically as the stage index increases. Since the considered stages are assigned to device groups of the same type and thus have identical memory capacities, their memory-constrained layer limits form a monotonically non-decreasing sequence.
\end{proof}
\vspace{-0.25em}
Finally, we prove Theorem~\ref{theorem.Unimodal}.
\vspace{-0.25em}
\begin{proof}[The proof of Theorem~\ref{theorem.Unimodal}]
	Lemma~\ref{lemma.swap} shows that, when device memory constraints are ignored, there exists an optimal layer allocation whose layer sequence over device groups of the same type is monotonically non-increasing. Lemma~\ref{lemma.memlimit} further shows that the corresponding memory-constrained layer limits form a monotonically non-decreasing sequence.
	
	Combining these two monotonicity properties, the feasible optimal allocation must first follow the non-increasing tendency induced by the critical-path cost, while also respecting the non-decreasing upper bounds imposed by memory constraints. Therefore, the optimal layer distribution over device groups of the same type exhibits a unimodal structure. This completes the proof.
\end{proof}


\end{document}